\definecolor{tensorblue}{rgb}{0.8,0.9,1}
\tikzset{ten/.style={fill=tensorblue}}
\newcommand{\diagram}[1]{ \begin{array}{cc}\begin{tikzpicture}[scale=.5,every node/.style={sloped,allow upside down},baseline={([yshift=+0ex]current bounding box.center)}] #1 \end{tikzpicture} \end{array} }
\newtheorem{definition}{Definition}
\newtheorem{proposition}{Proposition}
\newtheorem{lemma}[proposition]{Lemma}
\newtheorem{fact}{Fact}
\newtheorem{theorem}[proposition]{Theorem}
\newtheorem{corollary}[proposition]{Corollary}
\def\squareforqed{\hbox{\rlap{$\sqcap$}$\sqcup$}}
\def\qed{\ifmmode\squareforqed\else{\unskip\nobreak\hfil
\penalty50\hskip1em\null\nobreak\hfil\squareforqed
\parfillskip=0pt\finalhyphendemerits=0\endgraf}\fi}
\def\endenv{\ifmmode\;\else{\unskip\nobreak\hfil
\penalty50\hskip1em\null\nobreak\hfil\;
\parfillskip=0pt\finalhyphendemerits=0\endgraf}\fi}
\newenvironment{proof}{\noindent \textbf{{Proof~} }}{\hfill $\blacksquare$}
\newcounter{remark}
\newcounter{example}
\mathchardef\ordinarycolon\mathcode`\:
\def\vcentcolon{\mathrel{\mathop\ordinarycolon}}
\newcommand{\nc}{\newcommand}
\nc{\rnc}{\renewcommand}
\nc{\lbar}[1]{\overline{#1}}
\nc{\bra}[1]{\langle#1|}
\nc{\ket}[1]{|#1\rangle}
\nc{\dketbra}[2]{\vert #1 \rangle \hspace{-.8mm} \rangle \hspace{-.4mm} \langle\hspace{-.8mm}\langle #2 \vert}
\nc{\dbra}[1]{\langle\hspace{-.8mm}\langle #1\vert}
\nc{\dket}[1]{\vert#1\rangle\hspace{-.8mm}\rangle}
\nc{\ketbra}[2]{|#1\rangle\!\langle#2|}
\nc{\braket}[2]{\langle#1|#2\rangle}
\newcommand{\braandket}[3]{\langle #1|#2|#3\rangle}
\nc{\proj}[1]{| #1\rangle\!\langle #1 |}
\nc{\avg}[1]{\langle#1\rangle}
\nc{\rank}{\operatorname{Rank}}
\nc{\smfrac}[2]{\mbox{$\frac{#1}{#2}$}}
\nc{\tr}{\operatorname{Tr}}
\nc{\ox}{\otimes}
\nc{\dg}{\dagger}
\nc{\dn}{\downarrow}
\nc{\cA}{{\cal A}}
\nc{\cB}{{\cal B}}
\nc{\cC}{{\cal C}}
\nc{\cD}{{\cal D}}
\nc{\cE}{{\cal E}}
\nc{\cF}{{\cal F}}
\nc{\cG}{{\cal G}}
\nc{\cH}{{\cal H}}
\nc{\cI}{{\cal I}}
\nc{\cJ}{{\cal J}}
\nc{\cK}{{\cal K}}
\nc{\cL}{{\cal L}}
\nc{\cM}{{\cal M}}
\nc{\cN}{{\cal N}}
\nc{\cO}{{\cal O}}
\nc{\cP}{{\cal P}}
\nc{\cQ}{{\cal Q}}
\nc{\cR}{{\cal R}}
\nc{\cS}{{\cal S}}
\nc{\cT}{{\cal T}}
\nc{\cU}{{\cal U}}
\nc{\cV}{{\cal V}}
\nc{\cX}{{\cal X}}
\nc{\cY}{{\cal Y}}
\nc{\cZ}{{\cal Z}}
\nc{\cW}{{\cal W}}
\nc{\csupp}{{\operatorname{csupp}}}
\nc{\qsupp}{{\operatorname{qsupp}}}
\nc{\var}{{\operatorname{var}}}
\nc{\rar}{\rightarrow}
\nc{\lrar}{\longrightarrow}
\nc{\polylog}{{\operatorname{polylog}}}
\nc{\wt}{{\operatorname{wt}}}
\nc{\av}[1]{{\left\langle {#1} \right\rangle}}
\nc{\supp}{{\operatorname{supp}}}
\nc{\VComb}{{\widetilde{\cal C}}}
\nc{\VChoi}{{\widetilde{C}}}
\nc{\RR}{{{\mathbb R}}}
\nc{\CC}{{{\mathbb C}}}
\nc{\FF}{{{\mathbb F}}}
\nc{\NN}{{{\mathbb N}}}
\nc{\ZZ}{{{\mathbb Z}}}
\nc{\PP}{{{\mathbb P}}}
\nc{\QQ}{{{\mathbb Q}}}
\nc{\UU}{{{\mathbb U}}}
\nc{\EE}{{{\mathbb E}}}
\nc{\id}{{\operatorname{id}}}
\newcommand{\CPTP}{\text{\rm CPTP}}
\definecolor{colortwo}{rgb}{0.4,0.77,0.17}
\definecolor{colorthree}{rgb}{0.01,0.51,0.93}
\begin{document}
\title{Reversing Unknown Quantum Processes via Virtual Combs\\ for Channels with Limited Information}

\author{Chengkai Zhu}
\thanks{Chengkai Zhu and Yin Mo contributed equally to this work.}
\author{Yin Mo}
\thanks{Chengkai Zhu and Yin Mo contributed equally to this work.}
\author{Yu-Ao Chen}
\author{Xin Wang}
\email{felixxinwang@hkust-gz.edu.cn}
\affiliation{Thrust of Artificial Intelligence, Information Hub,\\
The Hong Kong University of Science and Technology (Guangzhou), Guangzhou 511453, China}

\begin{abstract}
The inherent irreversibility of quantum dynamics for open systems poses a significant barrier to the inversion of unknown quantum processes. To tackle this challenge, we propose the framework of virtual combs that exploit the unknown process iteratively with additional classical post-processing to simulate the process inverse. Notably, we demonstrate that an $n$-slot virtual comb can exactly reverse a depolarizing channel with one unknown noise parameter out of $n+1$ potential candidates, and a 1-slot virtual comb can exactly reverse an arbitrary pair of quantum channels. We further explore the approximate inversion of an unknown channel within a given channel set. A worst-case error decay of $\cO(n^{-1})$ is unveiled for depolarizing channels within a specified noise region. Moreover, we show that virtual combs can universally reverse unitary operations and investigate the trade-off between the slot number and the sampling overhead.
\end{abstract}

\date{\today}
\maketitle

\textcolor{black}{\textbf{\emph{Introduction.---}}}
Suppose a physical apparatus is provided that is guaranteed to perform some unknown process $\cN$, it can be regarded as a black box with no more prior information. Is it possible to simulate the inverse of this process by employing this black box multiple times? 
For such a task of executing a desired transformation based on the given operations, the most comprehensive method entails using a quantum network~\cite{Chiribella_2008a}. Formally, the problem here is
to construct a feasible quantum network $\cC$, connected to the black box $\cN$ for $n$ times, to perform its inverse satisfying $\cC(\cN^{\otimes n})\circ\cN$ be the identity channel, where such a quantum network is generally an $n$-slot quantum comb~\cite{Chiribella_2008a,Chiribella2008a}. The significance of this task lies in revealing fundamental capabilities and properties of quantum operations~\cite{Caruso2014,Holevo2013a}, insights to quantum algorithm design~\cite{Gilyen2018,Martyn2021,Wang2022}, and applications to quantum error cancellation~\cite{Cai2022}. Understanding the power of quantum channels can shed further light on theoretical and applied quantum physics~\cite{Wilde2017book,Hayashi2017b}.

A simple strategy is to apply process tomography to obtain the full matrix representation which is usually resourceful~\cite{Chuang_1997,haah2023queryoptimal}. 
If the unknown process is restricted to unitary operations, numerous works have been carried out to explore efficient methods that can implement the inverse of any unknown unitary (see, e.g.,~\cite{Trillo_2020,Navascu_s_2018,Trillo_2023,Schiansky_2023,Chiribella_2016,sardharwalla2016universal,Sedl_k_2019,Quintino_2019a,Quintino2019}). Recently, deterministic and exact protocols for reversing any unknown unitary have been discovered for qubit case~\cite{Yoshida2023} and arbitrary dimensions~\cite{chen2024quantum}, indicating full knowledge of the process through tomography is not necessary for this task.
Nevertheless, how to extend such protocols to cases where the process is a general quantum channel remains an open question. 

The challenge of reversing general unknown quantum processes is twofold. First, the inverse map of a quantum channel is generally not a physical process as it is not even positive. Such unphysical inverse maps fall under a broader scope of quantum operations, specifically Hermitian-preserving and trace-preserving linear maps. 
Second, even though we know that all such linear maps are simulatable via sampling quantum operations and post-processing~\cite{Buscemi2013,Jiang2020} or measurement-controlled post-processing~\cite{zhao2023power}, implementing the inverse map via existing methods unavoidably requires the complete description of the quantum process.

In this paper, to explore the full potential of reversing an unknown quantum process, we introduce the notion of \textit{virtual combs} by lifting the positivity requirement on quantum combs.
Physically, a virtual comb corresponds to sampling quantum combs with positive and negative coefficients and performing post-processing.
We find an affirmative answer that simulating the inverse of an unknown channel can be achieved with a virtual comb under certain conditions. Taking into account the unknown channel belonging to a given set without any prior information about its specific identity, we find that for arbitrary given two quantum channels, the exact inverse could always be realized with a 1-slot virtual comb without knowing which specific channel is provided. For depolarizing channels, we unveil the remarkable capability of an $n$-slot virtual comb to exactly reverse a depolarizing channel with an unknown noise parameter among $n+1$ possible candidates (see Fig.~\ref{fig:main_schem}). Intriguingly, we also establish a no-go theorem, elucidating the impossibility of a virtual comb to universally reverse an arbitrary quantum channel with finite uses of the channel.

Beyond exact inversion, our investigation extends to approximately reversing unknown quantum channels through virtual combs. For depolarizing channels within an arbitrary noise region, we find a protocol with worst-case error decay of $\cO(n^{-1})$ using $n$ calls of the channel. Notably, it shows the potential application of virtual combs in error cancellation, where our protocol works for mitigating depolarizing noises without requiring prior knowledge of noise parameters. Furthermore, virtual combs are applied to reverse unknown unitary operations. We show that a 1-slot virtual comb suffices to reverse any $d$-dimensional unitary operation and explore its relationship with the previous unitary inversion problem.
Our findings offer fresh perspectives on the interplay between information reversibility and irreversibility in quantum dynamics and provide new avenues for higher-order quantum transformations.

\textcolor{black}{\textbf{\emph{Exact channel inversion.---}}}
Since the inverse of a quantum channel is not necessarily completely positive and a legitimate quantum comb must adhere to be completely positive~\cite{choi1975completely}, to explore the inversion task and fully explore the power of supermaps, we introduce the \textit{virtual comb} as follows.
\begin{definition}[Virtual comb]
Let $\cC_0,\cC_1,\cdots,\cC_{l-1}$ be quantum combs. An affine combination of them $\VComb = \sum_{i=0}^{l-1}\eta_i\cC_i$ is called a virtual comb where $\sum_{i=0}^{l-1}\eta_i = 1, \eta_i\in\mathbb{R}, \forall i$.
\end{definition}
We remark that the name \textit{virtual comb} is bestowed for two reasons: first, its functionality extends beyond that of a conventional comb; second, its virtual nature considering negative values of $\eta_i$ is manifested through its feasibility, achieved by sampling its quasiprobability decomposition and subsequent post-processing.

Now we first focus on a scenario where the quantum process is guaranteed to be within a set of depolarizing channels characterized by varying degrees of noise. We show that the exact inversion of all channels in this set can be achieved with explicit construction of the virtual comb. Furthermore, we establish a no-go theorem for this task, which highlights the limitations in process reversibility imposed by quantum mechanics.

Specifically, the unknown channel belongs to a family of $d$-dimensional depolarizing channels with $m$ elements $\{\cD_{p_1}, \cD_{p_2}, ... \cD_{p_m}\}$, where $\cD_{p}(\cdot) = (1-p)(\cdot) + p I_d/d$ is a depolarizing channel with a noise parameter $p$. The task is to implement the inverse of an arbitrary channel $\cD_{p_i}$ by querying the unknown channel $n$ times. Based on this setting, we present our main result as follows.

\begin{theorem}[Depolarizing channel inversion]\label{thm:main_Ndepo}
  For any $n\geq 1$, let $\cD_{p_1},..., \cD_{p_{n+1}}$ be $n+1$ $d$-dimensional depolarizing channels with distinct noise parameters $p_1,...,p_{n+1} \in[0,1)$. There exists an $n$-slot virtual comb $\VComb$ satisfying
  \begin{equation}\label{Eq:depo_inv_vcomb}
    \VComb(\cD_{p_i}^{\ox n}) = \cD_{p_i}^{-1}, \quad \forall \, i=1,...,n+1.
  \end{equation}
\end{theorem}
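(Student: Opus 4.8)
The plan is to reduce the channel-level statement to a scalar polynomial interpolation problem by exploiting the spectral structure of depolarizing channels. Writing $\cT(\cdot)=\tr(\cdot)\,I_d/d$ for the completely depolarizing map and $\cP=\cI-\cT$, a direct check gives $\cT^2=\cT$, $\cP^2=\cP$ and $\cP\cT=\cT\cP=0$, so $\{\cT,\cP\}$ are orthogonal idempotents summing to $\cI$. In this basis $\cD_{p}=\cT+\lambda\cP$ with $\lambda:=1-p$, and since $\lambda_i:=1-p_i>0$ each inverse exists and reads $\cD_{p_i}^{-1}=\cT+\lambda_i^{-1}\cP$. First I would record the two facts I need from this decomposition: composing the channel with itself $k$ times yields $\cD_p^{\circ k}=\cT+\lambda^{k}\cP$, while the target inverse is the value of the same expression at exponent $-1$.

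Next I would exhibit a family of genuine $n$-slot combs whose outputs realize the monomials in $\lambda$. For $k=0,1,\dots,n$ let $\cC^{(k)}$ be the comb that wires $k$ of the slots in series and feeds the remaining $n-k$ slots a fixed state whose output is discarded; applied to $\cD_{p}^{\ox n}$ it returns $\cD_p^{\circ k}=\cT+\lambda^{k}\cP$ (the idle slots are trace-preserving and contribute only a factor $1$). I would also include the slot-free comb $\cC^{\cT}$ that discards its input and outputs the maximally mixed state, realizing $\cT$ independently of $p$. Forming the affine combination $\VComb=\eta_{\cT}\cC^{\cT}+\sum_{k=0}^{n}\eta_k\cC^{(k)}$ and applying it to $\cD_{p_i}^{\ox n}$ gives $\big(\eta_{\cT}+\sum_k\eta_k\big)\cT+Q(\lambda_i)\cP$, where $Q(\lambda)=\sum_{k=0}^{n}\eta_k\lambda^{k}$ is a polynomial of degree at most $n$.

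Matching this to $\cD_{p_i}^{-1}=\cT+\lambda_i^{-1}\cP$ for every $i$ then splits into two requirements: the traceless part demands $Q(\lambda_i)=\lambda_i^{-1}$ for $i=1,\dots,n+1$, and the trace part demands the coefficient $\eta_{\cT}+\sum_k\eta_k$ equal $1$, which is exactly the affine normalization of a virtual comb. The $n+1$ interpolation conditions sit at the distinct nonzero nodes $\lambda_1,\dots,\lambda_{n+1}$, so by Lagrange interpolation (nonsingularity of the Vandermonde matrix) there is a unique degree-$\le n$ polynomial $Q$, hence unique coefficients $\eta_0,\dots,\eta_n$, meeting them. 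I expect the main obstacle to be precisely the interplay with normalization: the $n+1$ interpolation conditions together with $\sum\eta=1$ naively overdetermine the $n+1$ coefficients $\eta_0,\dots,\eta_n$, and indeed $Q(1)=1$ fails in general unless some $p_i=0$. The resolution, and the crux of the construction, is that the extra comb $\cC^{\cT}$ contributes only to the trace part, so its weight $\eta_{\cT}$ decouples normalization from interpolation: one solves the interpolation for $\eta_0,\dots,\eta_n$ and then sets $\eta_{\cT}=1-\sum_{k=0}^{n}\eta_k$ to enforce $\sum\eta=1$. The remaining checks are routine, namely that each $\cC^{(k)}$ and $\cC^{\cT}$ is a legitimate quantum comb and that $\cD_p^{\circ k}$ follows from the idempotent identities.
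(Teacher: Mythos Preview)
Your argument is correct, and the virtual comb you construct is exactly the one the paper builds: the paper's $\cC_{\id},\cC_{\cD},\cC_i$ are your $\cC^{(0)},\cC^{\cT},\cC^{(i)}$, and both proofs reduce solvability to a Vandermonde-type nonsingularity for $n+1$ distinct parameters. The difference is in the analysis leading there. The paper first invokes a twirling symmetry to force $\VComb(\cD_{\mathbf{k}})$ into the span of $\{\id,\cD\}$, expands $\cD_{p_i}^{\ox n}$ over bit-strings $\mathbf{k}$, groups by Hamming weight, and then studies a linear system whose coefficient matrix factors through a Vandermonde with nodes $p_i/(1-p_i)$. You instead diagonalize each $\cD_p$ via the orthogonal idempotents $\cT,\cP$, so that $\cD_p^{\circ k}=\cT+(1-p)^k\cP$ and the whole problem becomes Lagrange interpolation of $\lambda\mapsto\lambda^{-1}$ at the $n+1$ nodes $\lambda_i=1-p_i$, with $\cC^{\cT}$ absorbing the normalization. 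Your route is shorter and makes the ``$n$ slots $\Rightarrow$ degree $n$ $\Rightarrow$ $n{+}1$ interpolation points'' mechanism transparent; the paper's route, by tracking the rank of the augmented matrix, has the side benefit of immediately yielding the converse no-go statement (their Theorem~2) in the same calculation.
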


The main idea is to utilize the symmetry condition, whereby $\VComb(\cD_{p_i}^{\ox n})$ can be decomposed into a combination of the identity channel and the depolarizing channel. Based on this, we can formulate Eq.~\eqref{Eq:depo_inv_vcomb} into a linear system, and derive a solvability condition and the corresponding construction for the virtual comb. Detailed proofs of the theorems in this manuscript are deferred to appendix. Theorem~\ref{thm:main_Ndepo} unveils an intrinsic application of the virtual comb framework, enabling the exact inversion of a family of depolarizing channels. Remarkably, the protocol applies to a set of noises, and the number of distinct channels within the set that it can exactly reverse increases with the number of slots.

To highlight the unique power of reversing a family of depolarizing channels with unknown noises provided by virtual combs, we note that such an exact channel inversion task cannot be accomplished via a quantum comb, even probabilistically. We defer the detailed statement and proof in appendix.


\begin{figure}[t]
   \centering
   \includegraphics[width=1\linewidth]{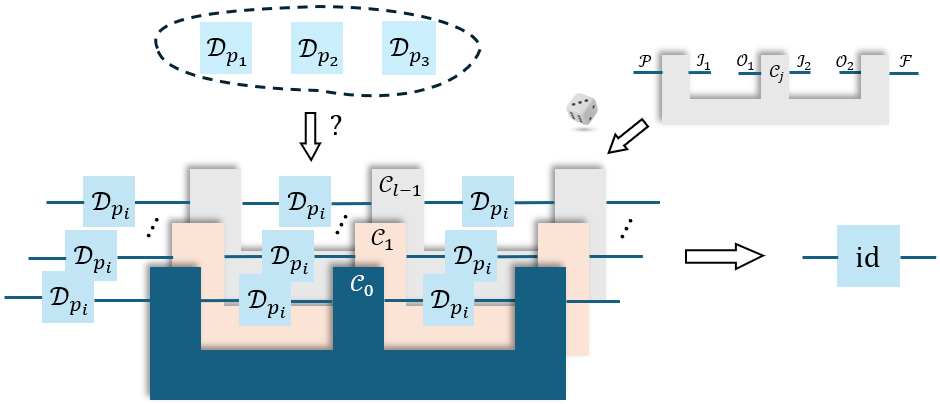}
   \caption{Schematic diagram of reversing depolarizing channels with unknown parameters via a 2-slot virtual comb. A virtual comb is represented as a quasi-probabilistic mixture of quantum combs $\VComb=\sum_{j=0}^{l-1}\eta_j \cC_j$ where $\cC_j$ is a quantum comb. The systems of a quantum comb $\cC_j$ are labeled as $\cP,\cI_i,\cO_i,\cF$. Given a depolarizing channel $\cD_{p_i}$ with an unknown parameter $p_i$ out of three distinct choices, $\VComb$ can exactly reverse $\cD_{p_i}$ by $\VComb(\cD_{p_i}^{\ox 2})\circ \cD_{p_i} = \id$ for $i=1,2,3$.}
   \label{fig:main_schem}
\end{figure}

Significantly, we also obtain a no-go theorem that no $n$-slot virtual comb can be universally capable of exactly reversing every set of $n+2$ channels. This is indicated by the fact that the theoretical maximum for an $n$-slot virtual comb to reverse a collection of depolarizing channels exactly is limited to $n+1$.

\begin{theorem}\label{thm:main_Ndepo_nogo}
  For any $n \geq 1$, let $\cD_{p_1},..., \cD_{p_{n+2}}$ be  $n+2$ $d$-dimensional depolarizing channels with distinct noise parameters $p_1,...,p_{n+2} \in[0,1)$. There is no $n$-slot virtual comb $\VComb$ such that $\VComb(\cD_{p_i}^{\ox n}) = \cD_{p_i}^{-1}, \forall \, i=1,...,n+2$.
\end{theorem}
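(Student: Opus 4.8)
The plan is to reduce the claim to the impossibility of matching a genuine rational function by a low-degree polynomial at too many points, thereby dualising the solvability count that underlies Theorem~\ref{thm:main_Ndepo}. Fix any nonzero traceless Hermitian operator $X$ on $\CC^d$ (which exists since $d\geq 2$), say $X=\ketbra{0}{0}-\ketbra{1}{1}$, and define the scalar quantity
\begin{equation}
  f(p) := \tr\!\big[X\,\VComb(\cD_p^{\ox n})(X)\big].
\end{equation}
I would first observe that if an $n$-slot virtual comb with $\VComb(\cD_{p_i}^{\ox n})=\cD_{p_i}^{-1}$ existed for all $i$, then, because $X$ is traceless, $\cD_{p_i}^{-1}(X)=\tfrac{1}{1-p_i}X$, and hence $f(p_i)=\tfrac{c}{1-p_i}$ with $c:=\tr[X^2]>0$, for each of the $n+2$ distinct parameters $p_i$.

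The technical core is to show that $f$, as a function of $p$, is a polynomial of degree at most $n$. Here I would use the supermap/link-product description of $\VComb$: its action on an $n$-tuple of channels is linear in their joint Choi operator, and the Choi operator of $\cD_p^{\ox n}$ is $J(\cD_p)^{\ox n}$, where $J(\cD_p)=(1-p)\proj{\Omega}+p\,\tfrac{I\ox I}{d}$ (with $\ket{\Omega}=\sum_i\ket{ii}$) is affine in $p$. Thus $J(\cD_p)^{\ox n}$ is an operator-valued polynomial of degree $n$ in $p$; its image under the fixed link product with the Choi operator of $\VComb$ is again of degree at most $n$; and $f(p)$, a fixed linear functional of that image, is a scalar polynomial of degree at most $n$. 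Equivalently, one may first symmetrise $\VComb$ and invoke the covariance argument from the proof of Theorem~\ref{thm:main_Ndepo} to write $\VComb(\cD_p^{\ox n})$ as a depolarizing-type map whose eigenvalue on the traceless subspace is this degree-$\leq n$ polynomial, so that $f(p)$ is simply that eigenvalue times $\tr[X^2]$.

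Combining the two facts yields the contradiction. Consider $g(p):=(1-p)f(p)-c$. The first step gives $g(p_i)=0$ at all $n+2$ distinct $p_i$, while the second step makes $g$ a polynomial of degree at most $n+1$. A nonzero polynomial of degree at most $n+1$ has at most $n+1$ roots, so $g\equiv 0$; but evaluating at $p=1$ gives $g(1)=-c\neq 0$, a contradiction. Hence no such $n$-slot virtual comb exists. I expect the main obstacle to be the rigorous justification of the degree-$\leq n$ bound on $f$: one must argue that connecting the $n$ channel slots through the comb—with its internal ancillas and causal ordering—cannot raise the degree in $p$ beyond the number of slots, which is precisely where the link-product formalism, together with the affine dependence of $J(\cD_p)$ on $p$, does the work. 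The remaining algebra (the choice of $X$, the evaluation of $\cD_p^{-1}$ on it, and the root count) is routine, and the argument holds uniformly in the dimension $d\geq 2$.
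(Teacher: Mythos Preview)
Your argument is correct and complete. The degree bound on $f$ is exactly as you describe: the link product of the comb's Choi operator with the $n$ slot inputs is multilinear in those inputs, each $J(\cD_p)$ is affine in $p$, so the output Choi operator---and hence any fixed linear functional of it such as $f(p)$---is a polynomial of degree at most $n$. No symmetrisation is even needed for this step; the observation holds for \emph{any} $n$-slot virtual comb.

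Your route differs from the paper's. The paper first symmetrises the hypothetical comb so that $\VComb(\cD_p^{\ox n})$ lies in $\mathrm{span}\{\id,\cD\}$, then writes the inversion constraints as the linear system
\[
\sum_{k=0}^{n}(1-p_i)^{n-k}p_i^{\,k}\,\alpha_k \;=\; \frac{1}{1-p_i},\qquad i=1,\ldots,m,
\]
and computes that the coefficient matrix and the augmented matrix are (after row scaling) Vandermonde, with ranks $\min\{n+1,m\}$ and $\min\{n+2,m\}$ respectively; for $m=n+2$ these disagree and the system is inconsistent. Your argument bypasses both the symmetrisation and the rank computation by extracting a single scalar polynomial $f(p)$ of degree $\le n$ directly from the unreduced comb and using a root count on $g(p)=(1-p)f(p)-c$. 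What you gain is brevity and the fact that no structural reduction of $\VComb$ is required. What the paper's approach buys is that the very same linear system simultaneously yields the constructive half (Theorem~\ref{thm:main_Ndepo}) when $m=n+1$, so the two theorems share a single calculation.
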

Theorem~\ref{thm:main_Ndepo_nogo} exposes the inherent limit in reversing an unknown channel, affirming that virtual combs with finite slots cannot achieve the inversion of arbitrary unknown quantum channels.

\textcolor{black}{\textbf{\emph{Approximate channel inversion.---}}}
Although Theorem~\ref{thm:main_Ndepo_nogo} imposes restrictions on achieving the exact inversion for arbitrary quantum channels with a determined virtual comb, the approximate inversion is not prohibited. In approximate inversion, given a set of quantum channels $\Theta=\{(p_i, \cN_i)\}_i$ where $p_i$ is the prior probability for $\cN_i$, we want to find an $n$-slot virtual comb that can make $\VComb(\cN_i^{\ox n}) \circ \cN_i$ as close to the identity channel `$\id$' as possible. With an $n$-slot virtual comb $\VComb$, the average error for reversing the channel set $\Theta$ can be expressed as 
\begin{equation*}\label{err_ave_finite}
e_{\rm ave}^n(\VComb, \Theta) = \frac{1}{2}\sum_{i=1}^m p_i \left\| \VComb(\cN_i^{\ox n}) \circ \cN_i - \id \right\|_\diamond,
\end{equation*}
where $\|\cF\|_{\diamond}:= \sup_{k\in\mathbb{N}}\sup_{\|X\|_1\leq 1}\|(\cF\ox\id_k)(X)\|_1$ denotes the diamond norm of a linear operator $\cF$. The worst-case error is defined as
\begin{equation*}\label{err_worst_finite}
    e_{\rm wc}^n(\VComb, \Theta) = \max \left\{\frac{1}{2}\left\| \VComb(\cN_i^{\ox n}) \circ \cN_i - \id \right\|_\diamond: \cN_i \in \Theta\right\}.
\end{equation*}
Note that for any two HPTP maps $\cN_1,\cN_2$ from system $A$ to $B$, the completely bounded trace distance can be evaluated by semidefinite programming (SDP), which is a powerful tool in quantum information~\cite{watrous2009semidefinite,XWthesis,Skrzypczyk2023}. Then the optimal average error for approximately reversing quantum channels within the set $\Theta$ is determined via an SDP, the details of which are provided in appendix.

\begin{figure}[t]
\centering
    \centering
    \includegraphics[width=.95\linewidth]{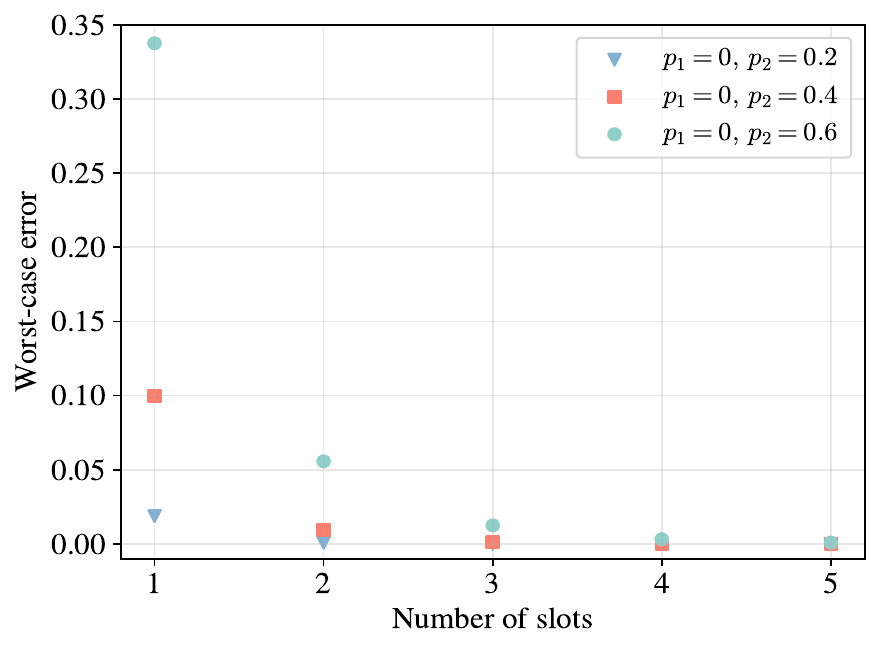}
    \caption{Upper bounds on the minimum worst-case error for reversing a depolarizing channel with an unknown noise parameter $p\in[p_1, p_2]$. The $x$-axis represents the number of slots in a virtual comb. 
    }
    \label{fig:wc_err_depo}
\end{figure}

Based on the numerical calculations of the SDP, we present intriguing results for reversing general quantum channels. The numerical calculations are implemented in MATLAB~\cite{MATLAB} with the interpreters CVX~\cite{cvx,gb08} and QETLAB~\cite{qetlab}. In each experiment, we generate $m$ random qubit-to-qubit quantum channels by the proposed measures in~\cite{Kukulski_2021}, e.g., generating random Choi operators, and calculate the average error of approximately reversing them (with equal prior probability) via 1-slot virtual combs by SDP. Then we apply the \textit{computer-assisted proofs} given in Ref.~\cite{Bavaresco_2021} to construct a feasible solution for the virtual combs. It is worth noting that when $m\leq 13$, we observe that the average errors across $1000$ experimental iterations remain consistently below a tolerance of $1\cdot 10^{-5}$ whenever the channels are invertible. However, intriguingly, when $m\geq 14$, the average errors increase up to the first decimal place, indicating that no virtual comb could achieve near-exact inversion for all these channels. Therefore, we conjecture an upper limit of $13$ elements in the channel set for reversing quantum channels using virtual combs. Notably, we present a theorem demonstrating that the exact inversion is always achievable for any pair of quantum channels.

\begin{theorem}[General channel inversion]\label{thm:two_general_channels}
  For any two invertible quantum channels $\cN_1$, $\cN_2$, there exists a $1$-slot virtual comb $\VComb$ satisfying $\VComb(\cN_i) = \cN_i^{-1}, \forall \, i=1,2$.
\end{theorem}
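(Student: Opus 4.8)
The plan is to pass to the Choi--link-product picture, reduce the requirement $\VComb(\cN_i)=\cN_i^{-1}$ to a linear feasibility problem for the Choi operator of the virtual comb, and then solve it by an explicit tensor-product ansatz. Write $J_i$ for the Choi operator of $\cN_i$ (on $\cI\ox\cO$) and $K_i$ for the Choi operator of the target $\cN_i^{-1}$ (on $\cP\ox\cF$, with $\cP$ the global input and $\cF$ the global output). A $1$-slot quantum comb is represented by a positive Choi operator $C$ on $\cP\ox\cI\ox\cO\ox\cF$ obeying the deterministic causal constraints $\tr_\cF C = I_\cO\ox\Xi$ and $\tr_\cI\Xi = I_\cP$, and its action is the link product $C\star J$. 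A virtual comb $\VComb=\sum_j\eta_j\cC_j$ with $\sum_j\eta_j=1$ has Choi operator $\VChoi=\sum_j\eta_j C_j$; since the two constraints are linear and preserved under affine combinations with unit coefficient sum (this is exactly where $\sum_j\eta_j=1$ is used), the set of virtual-comb Choi operators is precisely the subspace of Hermitian $\VChoi$ satisfying the same two constraints, with positivity dropped. Thus the theorem reduces to finding a Hermitian $\VChoi$ with $\tr_\cF\VChoi = I_\cO\ox\Xi$, $\tr_\cI\Xi=I_\cP$, and $\VChoi\star J_i = K_i$ for $i=1,2$.

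The key step is the ansatz $\VChoi = K_1\ox G_1 + K_2\ox G_2$, where $K_a$ sits on $\cP\ox\cF$ and the to-be-determined Hermitian operators $G_a$ sit on $\cI\ox\cO$. Because the link product contracts the $\cI,\cO$ legs, one gets $\VChoi\star J = g_1(J)K_1 + g_2(J)K_2$ with linear functionals $g_a(J)=\tr[G_a J]$, so the two target equations become the discrimination conditions $g_a(J_b)=\delta_{ab}$. Since $\cN_1\neq\cN_2$ are invertible and trace-preserving, their Choi operators are linearly independent (proportionality of $J_1,J_2$ would force $\cN_1=\cN_2$ after taking $\tr_\cO$), so a Hermitian $G_1$ with $g_1(J_1)=1$ and $g_1(J_2)=0$ exists.

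It remains to meet the comb constraint. Here I would use that $\cN_a^{-1}$ is trace-preserving --- which follows from invertibility of the trace-preserving $\cN_a$ --- so $\tr_\cF K_a = I_\cP$ for both $a$, giving $\tr_\cF\VChoi = I_\cP\ox(G_1+G_2)$. The causal form $I_\cO\ox\Xi$ is therefore obtained exactly when $G_1+G_2 = \tau\ox I_\cO$ for some $\tau$ on $\cI$, and $\tr_\cI\Xi=I_\cP$ then amounts to $\tr\tau=1$. Crucially this is consistent with the discrimination conditions: pairing $\tau\ox I_\cO$ against any channel Choi $J_b$ gives $\tr[\tau\,\tr_\cO J_b]=\tr\tau$, matching $g_1(J_b)+g_2(J_b)=1$. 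Hence I would simply fix any $\tau$ with $\tr\tau=1$ (e.g. $\tau=I_\cI/d$), keep the $G_1$ found above, and set $G_2 := \tau\ox I_\cO - G_1$; this automatically yields $g_2(J_b)=\delta_{2b}$ and the required causal structure, while Hermiticity of $\VChoi$ follows from Hermiticity of $K_a$ and $G_a$ (using linear independence of $K_1,K_2$ if one wants to argue the $G_a$ must be Hermitian).

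The main obstacle --- and the place where invertibility and trace preservation are genuinely used --- is reconciling the two link-product targets with the comb's causal constraint; the tensor ansatz is what collapses that constraint to the single scalar-consistent condition $\tr\tau=1$, so that no obstruction arises. The only auxiliary fact needed is that the affine hull of the (positive) $1$-slot combs is the entire Hermitian solution space of the two linear constraints; this holds because the genuine combs contain a full-rank interior point (e.g. prepare-and-discard combs), so every Hermitian $\VChoi$ in that subspace is a real affine combination of genuine combs --- i.e. a bona fide virtual comb.
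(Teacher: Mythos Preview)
Your argument is correct and takes a genuinely different route from the paper. The paper works in a Hermitian operator basis, converts everything to real matrices $M_{\cN}$, and shows that the $2d^2\times d^6$ coefficient matrix $\sum_j\ket{j}\ox M_{\cN_j}\ox M_{\cN_j}^{T_1}$ has full row rank; a right inverse then gives some $M_{\VComb'}$, which is subsequently corrected on the $s=0$ block to satisfy the comb constraints. Your approach bypasses this machinery entirely: the tensor ansatz $\VChoi=K_1\ox G_1+K_2\ox G_2$ collapses the link-product condition to a two-dimensional dual-basis problem $g_a(J_b)=\delta_{ab}$, and the causal constraint to the single compatible scalar condition $\tr\tau=1$, so you can write down $G_2=\tau\ox I_\cO-G_1$ directly. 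What your route buys is a shorter and more conceptual proof that also exhibits an explicit Choi operator for the virtual comb; what the paper's basis-expansion buys is a systematic linear-algebra framework that in principle scales to larger finite families of channels (where your product ansatz would need $K_1,\dots,K_m$ to be linearly independent in addition to $J_1,\dots,J_m$, and the causal constraint becomes $\sum_a G_a=\tau\ox I_\cO$).

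Two minor remarks. First, the link product carries a partial transpose on the shared legs, so strictly $g_a(J)=\tr[G_a\,J^{T}]$ rather than $\tr[G_a J]$; this is harmless since you may absorb the transpose into the free choice of $G_a$, and your consistency check $\tr[(\tau\ox I_\cO)J_b]=\tr\tau$ is invariant under it. Second, your parenthetical about using linear independence of $K_1,K_2$ to force $G_a$ Hermitian is unnecessary: you already \emph{choose} $G_1$ Hermitian (possible because $J_1,J_2$ are linearly independent in the real space of Hermitian operators) and set $G_2=\tau\ox I_\cO-G_1$ with $\tau=I_\cI/d$, so Hermiticity of both $G_a$ is by construction. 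The affine-hull step (virtual combs $=$ Hermitian operators satisfying the causal constraints) is exactly the content of the paper's Lemma characterising virtual-comb Choi operators, and your interior-point justification is the same idea.
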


Theorem~\ref{thm:two_general_channels} highlights the remarkable capability of a 1-slot virtual comb to reverse an \textit{arbitrary} given pair of quantum channels, even when the input and output systems of the channels have different dimensions. 

For depolarizing channels, we now consider that the noise levels are not a few fixed values but fall within a specified range $[p_1, p_2]$.
As an $n$-slot virtual comb $\VComb$ could exactly reverse $n+1$ distinct noise level, using the unknown channel more times is surely to enhance performance. Here we show that as the number of slots in the virtual comb (or the calls for the channel) increases, the worst-case error in channel inversion diminishes at least at a rate of $\cO(n^{-1})$.

\begin{theorem}\label{thm:wc_err}
    Let $0\leq p_1<p_2\leq 1$, the minimum worst-case error of approximately reversing a depolarizing channel $\cD_{p}$ with $p\in [p_1, p_2]$ using an $n$-slot virtual comb is at most $\cO(n^{-1})$.
\end{theorem}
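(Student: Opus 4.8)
The plan is to convert the worst-case error over the continuum $p \in [p_1,p_2]$ into an approximation-theoretic problem and then invoke the exact-inversion machinery of Theorem~\ref{thm:main_Ndepo}. First I would exploit the same symmetry structure used in Theorem~\ref{thm:main_Ndepo}: for a depolarizing input $\cD_p^{\ox n}$, the action of any $n$-slot virtual comb $\VComb$ must (by covariance under the unitary group) produce an output of the form $\VComb(\cD_p^{\ox n}) = a(p)\,\id + b(p)\,\cD_0^{\perp}$, i.e.\ a depolarizing-type map whose coefficients are \emph{polynomials in $p$ of degree at most $n$}. Composing with the channel to be reversed, $\VComb(\cD_p^{\ox n})\circ \cD_p$ is again a depolarizing map governed by a single scalar function $f_{\VComb}(p)$, and the target $\cD_p^{-1}\circ\cD_p = \id$ corresponds to $f_{\VComb}(p)\equiv 1$. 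The diamond-norm error $\tfrac12\|\VComb(\cD_p^{\ox n})\circ\cD_p - \id\|_\diamond$ then reduces to $|f_{\VComb}(p)-1|$ times a dimension-dependent constant, so that
\begin{equation*}
e_{\rm wc}^n(\VComb,\Theta) \le C \max_{p\in[p_1,p_2]} |g(p) - r_n(p)|,
\end{equation*}
where $g$ is the fixed target function arising from $\cD_p^{-1}$ and $r_n$ ranges over the admissible degree-$n$ polynomials realizable by a virtual comb (normalized by the affine constraint $\sum_i \eta_i = 1$).

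Second, I would establish that the class of realizable $r_n$ is rich enough to contain, or approximate, the polynomial best-approximants to $g$ on $[p_1,p_2]$. The key enabling fact is Theorem~\ref{thm:main_Ndepo}: an $n$-slot virtual comb can make $f_{\VComb}$ match the target \emph{exactly} at any $n+1$ prescribed points $p_1,\dots,p_{n+1}$. Choosing these $n+1$ interpolation nodes inside $[p_1,p_2]$ — for instance Chebyshev nodes rescaled to the interval — yields a virtual comb whose error function $g-r_n$ vanishes at those nodes, and standard interpolation-error bounds then control the sup-norm of $g-r_n$ in between.

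Third, the rate. Since $g$ is smooth (indeed analytic, being a rational function of $p$ with no poles on the closed interval $[p_1,p_2]\subset[0,1)$), classical results on polynomial interpolation at Chebyshev nodes give uniform error decaying at least geometrically, and certainly faster than any inverse power; the crude $\cO(n^{-1})$ bound claimed then follows \emph{a fortiori}. Concretely, I would write $g(p) = (1-p)^{-1}$-type expression (the depolarizing inverse scalar), verify it is Lipschitz on $[p_1,p_2]$ with some constant $L$, and use that a degree-$n$ interpolant at well-spaced nodes achieves $\max|g-r_n| \le L\cdot\omega(1/n)$ via the modulus of continuity, giving the stated $\cO(n^{-1})$.

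The main obstacle I anticipate is the second step: verifying that the interpolation-based virtual comb guaranteed by Theorem~\ref{thm:main_Ndepo} actually yields a uniformly bounded error \emph{between} the nodes, rather than merely at them. The exact-inversion theorem controls $f_{\VComb}$ only at the $n+1$ chosen parameters, and one must rule out large oscillations of the error function elsewhere in $[p_1,p_2]$. This requires either a quantitative handle on the coefficients $\eta_i$ of the constructed comb (to bound $r_n$ and its derivatives globally) or a direct argument that the realizable polynomial class coincides with all degree-$n$ polynomials satisfying the normalization, so that classical Chebyshev approximation theory applies verbatim. I expect the cleanest route is to show the latter — that the affine/virtual relaxation removes the positivity obstructions and makes \emph{every} admissible degree-$n$ polynomial realizable — after which the $\cO(n^{-1})$ rate is an immediate consequence of standard approximation theory.
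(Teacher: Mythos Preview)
Your high-level plan matches the paper's: exploit covariance to collapse the error to a scalar quantity $|1-(1-p)\,q(p)|$ with $q$ the degree-$n$ polynomial realized by the comb, then use Theorem~\ref{thm:main_Ndepo} to force this to vanish at $n+1$ chosen nodes and bound the residual in between. The execution differs. The paper takes the $n+1$ nodes \emph{equally spaced} on $[p_1,p_2]$, writes the error explicitly in factored form as $f_n(p)=\prod_{i=1}^{n+1}(p-p_i)/(1-p_i)$, and then argues by a direct estimate that $|f_n'(p_k)|$ is bounded by an $n$-independent constant at every node, concluding $|f_n|\le c\,(p_2-p_1)/n$ from the $O(1/n)$ node spacing. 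Your route via Chebyshev nodes and classical approximation theory is cleaner and actually buys more: since the target $g(p)=1/(1-p)$ is analytic on $[p_1,p_2]\subset[0,1)$, you get geometric decay, and $\cO(n^{-1})$ follows a fortiori. Your final proposed resolution --- observing that the coefficients $\alpha_0,\dots,\alpha_n$ in the construction of Theorem~\ref{thm:main_Ndepo} are unconstrained reals and that $\{(1-p)^{n-k}p^k\}_{k=0}^n$ spans all degree-$n$ polynomials, so \emph{every} degree-$n$ $q$ is realizable --- is correct and is the slickest way to close the argument via Jackson's theorem applied to the best approximant. One small correction: the Lipschitz-plus-interpolation step as you phrase it would only yield $O((\log n)/n)$ because of the Lebesgue constant of the node system; use analyticity or the best-approximant route instead to get the clean $\cO(n^{-1})$.
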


This result can be simply understood as follows: by Theorem~\ref{thm:main_Ndepo}, we can exactly reverse a depolarizing channel whose noise parameter is from $\big\{p_1+(p_2-p_1)k/n\big\}_{k=0}^{n}$ via an $n$-slot virtual comb. Then, in a continuous case, we demonstrate that the worst-case error is at most $\cO(n^{-1})$ within each interval. Based on this scheme, we present the upper bounds of the minimum worst-case error for the cases $p_1, p_2$ are $(0,0.2)$, $(0,0.4)$, and $(0,0.6)$ in Fig.~\ref{fig:wc_err_depo}. A detailed explanation can be found in appendix.
As the number of calls to the unknown channel increases, performance rapidly converges to exact for all these channels.

\textcolor{black}{\textbf{\emph{Application to error cancellation of unknown depolarizing noises.---}}}
The task of reversing an unknown quantum channel is interlinked with quantum error cancellation. In quantum information processing, estimating the expectation value $\tr[O\rho]$ for a given observable $O$ and a quantum state plays an essential role~\cite{aaronson2018shadow}. In practice, a state $\rho$ is inevitably affected by noise which is modeled by a quantum channel $\cN$. Consequently, many methods have been proposed to recover $\tr[O\rho]$ against noises rather than obtaining $\tr[O\cN(\rho)]$~\cite{Temme_2017,Suguru2018,Takagi2021}.

One of the primary techniques employed is the \textit{probabilistic error cancellation}~\cite{Temme_2017,Jiang2020} wherein the key idea is to represent the inverse map $\cN^{-1}$ of the noisy channel as a quasi-probabilistic mixture of quantum channels. A crucial assumption in this protocol is that the noise is given a \textit{prior}, otherwise, a high-precision tomography of the noise channel is required. In contrast, the scheme presented in Theorem~\ref{thm:wc_err} shows the potential to achieve high-precision error cancellation for certain unknown channels, e.g., depolarizing channels with unknown parameters within a given range.

In general, any $n$-slot quantum comb $\cC$ can be equivalently realized by a sequence of quantum channels $\{\cE_j\}_{j=1}^{n+1}$ with an ancillary system~\cite{Chiribella_2008a}. Thus, we can obtain a set of channels $\{\cE_{ij}\}_{j}$ for each comb $\cC_i$ in a decomposition of a virtual comb $\VComb = \sum_{i}\eta_i \cC_i$. Given an unknown quantum channel oracle $\cN$ and a noisy state $\cN(\rho)$, if $\VComb(\cN^{\ox n}) = \cN^{-1}$, then we can obtain $\tr[O\rho] = \tr[O\cdot \VComb(\cN^{\ox n})\circ\cN(\rho)]$ by querying $\cN$, sampling quantum channels for each $\cC_i$ and applying classical post-processing~\cite{Jiang2020,Takagi2021}.

In each round out of $S$ times sampling, we sample a sequence of quantum channels $\{\cE_{sj}\}_{j=1}^{n+1}$ from $\{\cE_{ij}\}_{ij}$ with probability $|\eta_s|/\gamma$, where $\gamma = \sum_i |\eta_i|$. Apply $\cE_{s1}, \cN, \cE_{s2}, \cdots, \cN, \cE_{s,n+1}$ to the target state sequentially to obtain
$\cE_{s,n+1}\circ\cN\circ\cdots\circ\cE_{s2}\circ\cN\circ\cE_{s1}\circ\cN(\rho) = \VComb_s(\cN^{\ox n})\circ\cN(\rho)$,
and then measure each qubit on a computational basis. We then denote $\lambda_s$ as the measurement outcome and obtain a random variable $X^{(s)} = \gamma \cdot {\rm sgn}(\eta_s) \lambda_s \in [-\gamma, \gamma]$. After $S$ rounds sampling, we calculate the empirical mean value $\zeta:= \frac{1}{S} \sum_{s=1}^S X^{(s)}$ as an estimation for the expectation value $\tr[O\cdot \VComb(\cN^{\ox n})\circ\cN(\rho)]$. By Hoeffding's inequality~\cite{hoeffding2012collected}, to estimate the expectation value within error $\epsilon$ with probability no less than $1-\delta$, the number of samples required to be $S \geq 2\gamma^2 \log(2/\delta)/\epsilon^2$. Hence, $\gamma$ is known as the sampling overhead. We note that the optimal sampling overhead can be calculated by SDP as given in appendix. 

In particular, if we aim to cancel the effect of an unknown depolarizing noise $\cD_{p}$ from a set of distinct noise parameters as described in Theorem~\ref{thm:main_Ndepo}, we have a detailed protocol provided in appendix, where we do not need to implement quantum combs and instead relies on three types of simple operations: i): do nothing to the received state; ii): replace the received state with a maximally mixed state; iii): apply the black box to the received state iteratively for $i$ times.

\textcolor{black}{\textbf{\emph{Application to universal unitary inversion.---}}}
Now we investigate a particular scenario where the quantum process is known to be a unitary operation. Previously, several works have studied the problem of reversing unknown unitary operations, including deterministic non-exact protocol~\cite{Chiribella_2016,sardharwalla2016universal} and probabilistic exact protocols~\cite{Sedl_k_2019,Quintino_2019a,Quintino2019}. Notably, it is proved that the inverse operation $U^{-1}$ cannot be implemented deterministically and exactly with a single use of $U$~\cite{Chiribella_2016}. Recently, deterministic and exact protocols were proposed, requiring four calls of $U$ in a qubit case~\cite{Yoshida2023}, and $\cO(d^2)$ calls for a general $d$-dimensional $U$~\cite{chen2024quantum}.
In a virtual setting, it is interesting to ask whether a 1-slot virtual comb is enough for reversing an arbitrary unknown $d$-dimensional unitary channel $\cU_d(\cdot) = U_d(\cdot)U_d^\dag$ or not. Here we find the answer is positive as the following result.
\begin{proposition}~\label{prop:uni_chan}
    For any dimension $d$, there exists a $1$-slot virtual comb $\VComb$ that transforms all qudit-unitary channels $\cU_d$ into their inverse $\cU_d^{-1}$, i.e., $\VComb(\cU_d)(\cdot) = U_d^{\dag}(\cdot)U_d$.
\end{proposition}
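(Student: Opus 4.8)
The plan is to pass to the link-product (Choi) picture, where a $1$-slot comb is represented by an operator $\VChoi \geq 0$ on $\cP\ox\cI\ox\cO\ox\cF$ obeying the causal (trace) constraints of a comb, and its action on an input channel with Choi operator $J$ is the partial link product over $\cI\cO$. A $1$-slot virtual comb is then an affine combination $\VChoi=\sum_i\eta_i\VChoi_i$ of such operators with $\sum_i\eta_i=1$, the sole extra freedom being that $\VChoi$ need no longer be positive. Writing $\dket{U}=(U\ox I)\dket{I}$ with $\dket{I}=\sum_k\ket{kk}$, the input is $J_{\cU_d}=\dketbra{U}{U}$ and the target is $J_{\cU_d^{-1}}=\dketbra{U^\dg}{U^\dg}$, so the requirement $\VComb(\cU_d)=\cU_d^{-1}$ becomes a single identity, linear in $\VChoi$, that must hold \emph{simultaneously} for every $U_d$. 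Although each $\cU_d^{-1}$ is itself completely positive, the classic impossibility of deterministic exact single-query inversion shows no positive $\VChoi$ can satisfy the whole family at once; hence any solution must exploit the sign freedom, which is exactly what the virtual comb supplies.

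First I would cut down the search using covariance. The supermap $\cU_d\mapsto\cU_d^{-1}$ intertwines the left/right actions $\cU_d\mapsto\cV\circ\cU_d\circ\cW$ and $\cU_d^{-1}\mapsto\cW^{-1}\circ\cU_d^{-1}\circ\cV^{-1}$, so averaging any candidate over these group actions (a twirl that preserves both the affine normalization and realizability as a quasiprobabilistic mixture) lets me assume $\VChoi$ is covariant. Schur's lemma then confines $\VChoi$ to the small commutant spanned by the invariants built from $\dketbra{I}{I}$ and the swap operator, collapsing the continuum of constraints into a finite real linear system; concretely the relevant $U\ox\bar U$ representation splits into the trivial sector carried by $\dket{I}$ and its complement, and matching the target on each sector fixes the coefficients. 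The genuine content here is a \emph{consistency} check for linearly extending the assignment $\dketbra{U}{U}\mapsto\dketbra{U^\dg}{U^\dg}$ off the unitary Choi operators, i.e. that $\sum_a c_a\dketbra{U_a}{U_a}=0$ forces $\sum_a c_a\dketbra{U_a^\dg}{U_a^\dg}=0$, which follows from the same covariance.

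For an explicit solution I would realize the map by gate teleportation with signed classical post-processing. Using the single query, apply $\cU_d$ to one half of $\dket{I}$ to prepare the resource $\dketbra{U}{U}$, then perform a Heisenberg--Weyl (Bell) measurement jointly on the target and the processed half. Each of the $d^2$ outcomes $k$ heralds the branch channel $\cB_k$ equal to $\cU_d^{-1}$ (up to the trivial swap of Choi systems relating it to the complex-conjugate channel) dressed by a \emph{known} Weyl operator that, crucially, appears conjugated by the unknown $U_d$; this is precisely why the byproduct cannot be undone deterministically without knowing $U_d$. Instead I would take a fixed, $U_d$-independent real combination $\sum_k\eta_k\cB_k$ of the outcome-conditioned combs, choosing the weights so that the Weyl twirl cancels while the trivial sector reconstructs $\cU_d^{-1}$, and verify $\sum_k\eta_k=1$ so that the net map is affine and Hermiticity/trace preserving. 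Solvability of this finite system rests on the linear independence of the Weyl conjugations $\cW_k$ as superoperators, which spans the needed sector.

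I expect the main obstacle to be exactly this last cancellation: proving that one fixed signed combination of the $d^2$ heralded branches collapses to $\cU_d^{-1}$ for \emph{all} $U_d$ at once — equivalently, that the inverse supermap lies in the \emph{affine} span of the realizable branches and not merely in their complex-linear span — while respecting the normalization $\sum_k\eta_k=1$. This is where the representation-theoretic decomposition earns its keep, and where one must confirm that the sign pattern annihilates every nontrivial Weyl sector without disturbing the trivial one; the resulting negative weights are the unavoidable signature distinguishing this virtual comb from any ordinary quantum comb.
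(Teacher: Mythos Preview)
Your covariance reduction is sound and is essentially what the paper exploits: the explicit Choi operator they write down is a real combination of the $U\!\otimes\!\bar U$ invariants $\dketbra{I}{I}^{T_\cO}_{\cP\cO}$, $\dketbra{I}{I}^{T_\cF}_{\cI\cF}$ and identities, and they verify directly that the particular combination
\[
V=\dketbra{I}{I}^{T_{\cO}}_{\cP\cO}\ox\dketbra{I}{I}^{T_{\cF}}_{\cI\cF}-\tfrac{1}{d(d^2-1)}I_{\cP\cO}\ox \dketbra{I}{I}^{T_{\cF}}_{\cI\cF}-\tfrac{1}{d}\dketbra{I}{I}^{T_{\cO}}_{\cP\cO} \ox I_{\cI\cF}+\tfrac{1}{d^2-1} I
\]
satisfies $V*\dketbra{U}{U}=\dketbra{U^\dagger}{U^\dagger}$ and decomposes as $\tfrac{d^2}{2}V_0-\tfrac{d^2-2}{2}V_1$ with $V_0,V_1$ positive combs. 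If you actually carry your Schur--Weyl program through, you will land on exactly this ansatz; the key non-positive ingredient is the first term $\dketbra{I}{I}^{T}\!\ox\dketbra{I}{I}^{T}=\text{SWAP}_{\cP\cO}\ox\text{SWAP}_{\cI\cF}$, not a Bell projector.

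Your second, ``explicit'' construction via gate teleportation has a genuine gap. Measuring the target together with either half of the prepared resource $\dketbra{U}{U}$ in a Bell (Heisenberg--Weyl) basis yields branches of the form $\rho\mapsto U^T W_k^\dagger\rho W_k\bar U$ (processed half) or $\rho\mapsto U W_k^\dagger\rho W_k U^\dagger$ (unprocessed half), i.e.\ you get $U^T$ or $U$, never $U^\dagger$. The ``trivial swap of Choi systems'' interchanges $U$ and $U^T$, not $U$ and $U^\dagger$; passing from $U^T$ to $U^\dagger$ is complex conjugation, which is not implemented by any system swap. Concretely, demanding $\sum_k\eta_k\,U^T W_k^\dagger\rho W_k\bar U=U^\dagger\rho U$ for all $U$ forces $\sum_k\eta_k W_k^\dagger\rho W_k=\bar U U^\dagger\,\rho\,U U^T$, whose right side depends on $U$ (take $U=T=\mathrm{diag}(1,e^{i\pi/4})$ in $d=2$ to see it equals $S^\dagger\rho S\neq\rho$) while the left side does not---so no $U$-independent choice of $\{\eta_k\}$ exists. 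In short, the Weyl-diagonal affine span generated by your $d^2$ teleportation branches simply does not contain the inverse supermap; the paper's construction avoids this by building the Choi operator from SWAP (the \emph{partial transpose} of the Bell projector), which is intrinsically non-CP, rather than from the Bell projectors themselves.
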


Proposition~\ref{prop:uni_chan} reveals that with a virtual comb, a deterministic and exact protocol for any dimensions can be achieved with just one call of the unitary. 
This result gives an alternative way to simulate the inverse of unknown unitary in practice with shallower circuits for estimating expectation values. We point out that when there exists depolarizing noise, i.e., the given channel is $\cU_d \circ \cD_{p}$, the 1-slot virtual comb will result in an overall operation as $\cU_d^{-1} \circ \cD_{p}$ and the probabilistic error cancellation could be used to mitigate this error. For the deterministic protocol, the circuit is generally not transversal~\cite{Yoshida2023, chen2024quantum}, thus the depolarizing noise will accumulate and become difficult to handle.

Furthermore, we analyze the query complexity of a virtual protocol, specifically the number of times $U$ needs to be queried to obtain the expectation value $\tr[O\cU_d^{-1}(\rho)]$. The optimal sampling overhead for an $n$-slot virtual comb that can exactly reverse all $d$-dimensional unitaries can be characterized via the following SDP.
\begin{subequations}\label{SDP:unitary_inv}
\begin{align}
\nu(d,n) = \min &\; 2 \eta + 1,\\
 {\rm s.t.} & \; \tr[\VChoi\Omega] = 1, \label{Eq:sdp_exact_inv}\\ 
            & \;\; \VComb = (1+\eta)\cC_0 - \eta \cC_1, \eta \geq 0,\\
            & \;\; \cC_0, \cC_1 \text{ are $n$-slot quantum combs,}
\end{align}
\end{subequations}
where Eq.~\eqref{Eq:sdp_exact_inv} ensures that $\VComb$ is a desired map that can exactly reverse an arbitrary unitary operation and $\Omega$ is a $d^{2(n+1)}\times d^{2(n+1)}$ positive matrix called the performance operator~\cite{Chiribella_2016,Quintino2022}. The detailed formula and numerical results on the sampling overhead for small $d$ and $n$ are provided in appendix.

Notably, we find that the optimal sampling overhead for the virtual comb that can exactly reverse unknown unitary operations has a dual relationship with the problem of finding the optimal average fidelity of reversing unknown unitary operations by a quantum comb~\cite{Quintino2022} as the following theorem. 
\begin{theorem}\label{thm:cost_with_fid}
    The optimal sampling overhead for the $n$-slot virtual comb that can exactly reverse all $d$-dimensional unitary operations satisfies 
    \begin{equation}
        \nu(d,n) = \frac{2}{F_{\rm opt}(d,n)}-1,\quad \forall d\geq 2, n\geq 1,
    \end{equation}
    where $F_{\rm opt}(d,n)$ is the optimal average channel fidelity of reversing all $d$-dimensional unitary operations with an $n$-slot quantum comb.
\end{theorem}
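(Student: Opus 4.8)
The plan is to prove the identity by sandwiching $\nu(d,n)$ between two matching bounds, both phrased through the performance operator $\Omega$. First I would record the two ingredients I intend to use. (i) By the performance-operator formalism~\cite{Chiribella_2016,Quintino2022}, the functional $\tr[\VChoi\Omega]$ coincides with the Haar-averaged channel fidelity between $\VComb(\cU_d)$ and $\cU_d^{-1}$ for genuine combs, and the exact-inversion constraint \eqref{Eq:sdp_exact_inv} is exactly $\tr[\VChoi\Omega]=1$. (ii) The optimal comb fidelity admits the characterization $F_{\rm opt}(d,n)=\max\{\tr[C\Omega]:C\text{ is the Choi operator of an }n\text{-slot quantum comb}\}$, with $0<F_{\rm opt}(d,n)\leq 1$. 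Writing any feasible point of \eqref{SDP:unitary_inv} on the Choi level as $\VChoi=(1+\eta)C_0-\eta C_1$ with $\eta\geq0$ and $C_0,C_1$ comb Choi operators, I would then treat the two inequalities separately.

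For the lower bound $\nu(d,n)\geq 2/F_{\rm opt}(d,n)-1$, I would expand the exact-inversion constraint as
\begin{equation*}
1=\tr[\VChoi\Omega]=(1+\eta)\tr[C_0\Omega]-\eta\,\tr[C_1\Omega].
\end{equation*}
Since $\Omega\succeq 0$ and $C_1\succeq 0$, the term $\tr[C_1\Omega]$ is nonnegative, so discarding it only enlarges the right-hand side; together with the maximality bound $\tr[C_0\Omega]\leq F_{\rm opt}(d,n)$ this gives $1\leq(1+\eta)F_{\rm opt}(d,n)$, hence $1+\eta\geq 1/F_{\rm opt}(d,n)$ and therefore $2\eta+1\geq 2/F_{\rm opt}(d,n)-1$. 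This step is essentially weak duality and uses nothing beyond positivity and the definition of $F_{\rm opt}$.

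For the matching upper bound I would exhibit an explicit optimal virtual comb. Let $C_0$ attain the maximum, $\tr[C_0\Omega]=F_{\rm opt}(d,n)$, and suppose we can produce a comb $C_1$ with $\tr[C_1\Omega]=0$. Setting $\eta=1/F_{\rm opt}(d,n)-1\geq0$ and $\VChoi=(1+\eta)C_0-\eta C_1$, the affine coefficients sum to $1$ (a legitimate virtual comb), the exact-inversion constraint holds because $\tr[\VChoi\Omega]=(1+\eta)F_{\rm opt}(d,n)-0=1$, and the objective equals $2\eta+1=2/F_{\rm opt}(d,n)-1$. When $F_{\rm opt}(d,n)=1$ this degenerates to $\eta=0$, $\VComb=\cC_0$, so the construction is consistent in all regimes. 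Combined with the lower bound, this closes the argument.

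The crux—and the step I expect to be the main obstacle—is the existence of a quantum comb $C_1$ with zero average inversion fidelity, equivalently $\min_C\tr[C\Omega]=0$. Because $\Omega\succeq0$, this forces a valid comb whose Choi is supported entirely in $\ker\Omega$, and the naive candidates fail: a direct computation shows that replacement combs all give fidelity $1/d^2>0$, so a special comb is genuinely needed. I would establish its existence representation-theoretically: by covariance $\Omega$ commutes with the twirl $g\mapsto g_{O_1}\otimes\bar g_{I_1}\otimes\cdots$, so by Schur--Weyl duality $\Omega$ is supported only on the isotypic sectors in which the target representation of $U^{-1}$ actually occurs, a proper subspace of the $d^{2(n+1)}$-dimensional comb space. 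The problem then reduces to exhibiting a covariant comb supported on $\supp(\Omega)^{\perp}$ that still satisfies the causal partial-trace constraints; checking that the avoided sectors are large enough to carry a properly normalized comb for all $d\geq2,\,n\geq1$ is the delicate part, which I would verify by working in the Schur basis. Equivalently, one can derive the same identity by computing the SDP dual of \eqref{SDP:unitary_inv} and matching it against the dual of the fidelity SDP, where $\min_C\tr[C\Omega]=0$ resurfaces as the statement that the relevant dual slack can be driven to zero.
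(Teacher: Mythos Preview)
Your overall strategy---sandwich $\nu(d,n)$ between matching bounds via the constraint $\tr[\VChoi\Omega]=1$, with the crux being the existence of a quantum comb achieving $\tr[C\Omega]=0$---is exactly the paper's approach, and your lower-bound argument is essentially line-for-line identical. The difference lies only in how the zero-fidelity comb is produced. You propose a representation-theoretic existence argument via Schur--Weyl duality, which is plausible but, as you acknowledge, not worked out: checking that $(\supp\Omega)^\perp$ carries a properly normalized comb satisfying all the causal partial-trace constraints for every $d\geq2$, $n\geq1$ is genuinely nontrivial in the Schur basis. The paper bypasses this entirely with an explicit elementary construction: it reuses the $1$-slot comb $V_1$ from the proof of Proposition~\ref{prop:uni_chan}, which already satisfies $\tr\big[\dketbra{U^\dag}{U^\dag}(V_1*\dketbra{U}{U})\big]=0$ for all $U\in SU(d)$, and extends it to $n$ slots by link-producting with $n-1$ trivial combs $W_k$ that discard their input channel and output the identity. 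This makes the zero-fidelity comb concrete for all $d,n$ with no representation-theoretic bookkeeping. Your route would likely work and might generalize more cleanly to other target transformations, but the paper's construction is both simpler and complete as stated.
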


Here, the optimal average channel fidelity is given by $F_{\rm opt}(d,n) = \max \tr[C\Omega]$ where $\Omega$ is the performance operator as appeared in Eq.~\eqref{Eq:sdp_exact_inv} and the maximization ranges over all $n$-slot quantum combs with Choi operators $C$~\cite{Quintino2022}. When $n=1$, it has been shown that $F_{\rm opt}(d, 1) = 2/d^2$~\cite{Chiribella_2016}, leading to $\nu(d,1) = d^2-1$. 
Although the sufficient querying number of the unitary is governed by $1\cdot\nu^2(d,1)$, scaling as $\cO(d^4)$, worse than $\cO(d^2)$ required by the deterministic and exact protocol~\cite{chen2024quantum}, it is worthwhile to note that when the state or observable is given, the query complexity could be significantly reduced. Specifically, we find that to estimate the expectation value $\tr[Z\cU_d^{-1}(\ketbra{0}{0})]$,  the 1-slot virtual protocol has a better performance in both average simulation error and standard deviation under the same number of queries of the unknown unitary. The details of the numerical analysis to show this potential advantage are provided in appendix.

\textcolor{black}{\textbf{\emph{Concluding remarks.---}}}
In this work, we addressed the problem of reversing an unknown quantum process by introducing the \textit{virtual comb}. 
Our theoretical analysis demonstrated its ability and shows its potential to help us further understand the properties and capabilities of channels, combs, and virtual processes. One may already notice that a qubit channel can be determined by 12 parameters, which coincides with our numerical result that if the number of random qubit channels exceeds 13, no perfect 1-slot virtual inversion protocol could be found. 
In terms of applications, the examples we provided suggest that the virtual combs may potentially become an alternative solution in specific experimental settings. It might offer trade-offs in terms of query complexity, circuit depth, and the number of auxiliary qubits; therefore, it is intriguing to conduct further analysis and construct concrete circuits for specific experimental scenarios.

The virtual combs may also shed light on other research directions for unknown processes, particularly in quantum learning. By transmitting quantum states through an unknown process, we can infer its characteristics and replicate it or execute related tasks. Such studies have been done for learning unitary gates~\cite{Bisio_2010,Sedl_k_2019,Mo_2019},  measurements~\cite{Sedl_k_2014,cheng2015learnability}, and Pauli noises~\cite{Chen2023}. How to further extend this setting to learning and using unknown channels remains open. Moreover, virtual combs may also be useful in transforming Hamiltonian dynamics~\cite{Odake2023a,Odake2023}, shadow tomography~\cite{aaronson2018shadow}, virtual resource manipulation~\cite{yuan2023virtual}, and randomized quantum algorithms~\cite{Huang_2020,Wan_2022} for its unique attributes regarding quantum memory effect, sampling, and classical post-processing.

\emph{Acknowledgement.---}
We thank Benchi Zhao, Hongshun Yao, Xuanqiang Zhao, and Kun Wang for their comments. This work was partially supported by the National Key R\&D Program of China (Grant No. 2024YFE0102500), the Guangdong Provincial Quantum Science Strategic Initiative (Grant No. GDZX2303007), the Guangdong Provincial Key Lab of Integrated Communication, Sensing and Computation for Ubiquitous Internet of Things (Grant No. 2023B1212010007), the Start-up Fund (Grant No. G0101000151) from HKUST (Guangzhou), the Quantum Science Center of Guangdong-Hong Kong-Macao Greater Bay Area, and the Education Bureau of Guangzhou Municipality.

\bibliography{main}

\appendix
\setcounter{subsection}{0}
\setcounter{table}{0}
\setcounter{figure}{0}

\vspace{2cm}
\onecolumngrid
\vspace{2cm}

\begin{center}
\large{\textbf{Supplemental Material}}
\end{center}


\renewcommand{\theequation}{S\arabic{equation}}
\renewcommand{\theproposition}{S\arabic{proposition}}
\renewcommand{\thedefinition}{S\arabic{definition}}
\renewcommand{\thefigure}{S\arabic{figure}}
\setcounter{equation}{0}
\setcounter{table}{0}
\setcounter{section}{0}
\setcounter{proposition}{0}
\setcounter{definition}{0}
\setcounter{figure}{0}

In this Supplemental Material, we present detailed proofs of the theorems and propositions in the manuscript ``Reversing Unknown Quantum Processes via Virtual Combs for Channels with Limited Information''. In Appendix~\ref{appendix:preliminary}, we review and derive several useful toolkits for the quantum comb, virtual comb, link product, and Hermitian decomposition of general quantum processes, to make our proofs more self-contained. In Appendix \ref{appendix:proof_of_thm}, we give detailed proofs of the theorems in the manuscript. We also present details about the numerical experiments in the main text. In Appendix~\ref{appendix:protocol}, we provide SDP for calculating the optimal sampling overhead of a virtual comb and a detailed protocol for error cancellation of depolarizing noises. In Appendix~\ref{appendix:connection}, we discuss connections between virtual combs and other quantum methodologies to help understand the virtual comb framework in quantum information processing.

\section{Preliminaries}\label{appendix:preliminary}
\subsection{Comb, virtual comb and link product}

\paragraph{Comb and virtual comb.}
Formally, a quantum comb can be characterized by its Choi operator as the following lemma.
\begin{lemma}[\cite{Chiribella_2008}]\label{lem:comb_def}
Given a matrix $C \in \cL\left(\cP \ox \cI^n \ox \cO^n \ox \mathcal{F}\right)$, it is the Choi operator of a quantum comb $\mathcal{C}$ if and only if it satisfies $C\geq 0$ and
\begin{align}
    C^{(0)}  =1,\; \tr_{\cI_i} [C^{(i)}] =C^{(i-1)} \ox I_{\cO_{i-1}}, \, i =1, \cdots, n+1  \label{Eq:no-sig}
\end{align}
where $C^{(n+1)}:=C, C^{(i-1)}:=\tr_{\cI_i \cO_{i-1}} [C^{(i)}]/d$, $I_{\cH}$ is the identity operator on $\cH$ and $\cI_{n+1}:=\mathcal{F}, \cO_0:=\cP$.
\end{lemma}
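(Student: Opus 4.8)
The plan is to prove both implications by induction on the number of slots $n$, exploiting the standard realization of a quantum comb as a sequence of CPTP maps (``teeth'') linked through internal memory systems. The bridge between the sequential composition of these maps and the algebra of their Choi operators is the link product: the comb's Choi operator $C$ is the link product of the Choi operators of its teeth, where the shared memory systems $\cM_1,\dots,\cM_n$ are contracted and the wires $\cP,\cI_1,\cO_1,\dots,\cI_n,\cO_n,\cF$ are left open.

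For the necessity direction, I would take a realization by CPTP maps $\cE_1,\dots,\cE_{n+1}$, where $\cE_i$ sends the returned wire $\cO_{i-1}$ together with the incoming memory $\cM_{i-1}$ to the next probe wire $\cI_i$ together with an updated memory $\cM_i$ (with $\cO_0=\cP$, $\cI_{n+1}=\cF$). Positivity $C\geq 0$ is immediate since a link product of positive semidefinite operators is positive semidefinite. The recursion is then obtained by peeling off teeth from the future: defining $C^{(i)}$ to be $C$ with all wires after $\cI_i$ traced out, the trace over $\cI_i$ amounts to tracing the entire output $\cI_i\ox\cM_i$ of $\cE_i$, so trace-preservation of $\cE_i$ leaves $I_{\cO_{i-1}}$ on the returned wire while the memory $\cM_{i-1}$ contracts into the sub-network $\cE_1\ast\cdots\ast\cE_{i-1}$; this is exactly $\tr_{\cI_i}[C^{(i)}]=C^{(i-1)}\ox I_{\cO_{i-1}}$, the factor $1/d$ in the definition of $C^{(i-1)}$ being the normalization that undoes the trace of $I_{\cO_{i-1}}$. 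The recursion bottoms out at $C^{(0)}=1$, the global trace-preservation normalization.

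For the sufficiency direction I would reverse this construction. The inductive hypothesis at level $i-1$ is that $C^{(i-1)}\geq 0$, satisfying the truncated constraints, is realized by a valid $(i-1)$-tooth sub-comb that exposes a memory system $\cM_{i-1}$. Given $C^{(i)}\geq 0$ with $\tr_{\cI_i}[C^{(i)}]=C^{(i-1)}\ox I_{\cO_{i-1}}$, the goal is to manufacture a CPTP map $\cE_i$ on $\cO_{i-1}\ox\cM_{i-1}\to\cI_i\ox\cM_i$ whose Choi operator, linked with the sub-comb, reproduces $C^{(i)}$. I would build $\cE_i$ by a Stinespring/purification step: positivity of $C^{(i)}$ guarantees it defines a completely positive map on the appropriate systems, the factorized marginal condition guarantees this map is trace-preserving, and the purifying system is absorbed into the fresh memory $\cM_i$. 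Threading the resulting teeth together and discarding the final memory then yields a network whose Choi operator is $C$, so $C$ is a genuine comb.

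The hard part will be the sufficiency direction, and in particular the dilation step at each level: one must verify that positivity of $C^{(i)}$ together with the factorized form of its $\cI_i$-marginal is precisely what is needed for $\cE_i$ to be completely positive and trace-preserving, and one must carry the memory $\cM_i$ consistently through the induction so that successive link products compose correctly. Tracking the normalization factor $1/d$ and correctly identifying, at each step, which systems are ``past'' and which are ``future'' also demands care, but these become routine once the inductive realization is in place.
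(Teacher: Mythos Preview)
The paper does not supply its own proof of this lemma: it is quoted verbatim from Chiribella et al.\ (reference \cite{Chiribella_2008}) as a known characterization of quantum combs, so there is no in-paper argument to compare against. Your proposal is the standard proof from that literature---realize the comb as a chain of CPTP teeth $\cE_1,\dots,\cE_{n+1}$ with internal memories, deduce the recursive marginal constraints by tracing out future teeth one at a time (necessity), and rebuild the teeth inductively via a Stinespring-type dilation using the factorized marginal at each level (sufficiency). This is correct and is exactly the approach of the cited source, so nothing further is needed here.
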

Conventionally, a legitimate quantum transformation has to be completely positive (CP) reflected in the positivity of its Choi operator~\cite{choi1975completely}. In the main text, we lift the constraint for a map to be completely positive and introduce the notion of \textit{virtual comb}, which can be decomposed into quantum combs as $\VComb = \eta_0 \mathcal{C}_0 + \eta_1 \mathcal{C}_1$, with $\eta_0$ and $\eta_1$ be arbitrary real numbers. We can easily notice that a virtual comb can be characterized by its Choi operator shown as follows.
\begin{lemma}\label{lem:virtual_comb_choi}
Given a matrix $\VChoi \in \cL\left(\cP \ox \cI^n \ox \cO^n \ox \mathcal{F}\right)$, it is the Choi operator of a virtual comb $\VComb$ if and only if it satisfies
\begin{align}\label{Eq:HPTP_comb}
    \VChoi^{(0)}  =1, \; \tr_{\cI_i} [\VChoi^{(i)}] =\VChoi^{(i-1)} \ox I_{\cO_{i-1}}, \, i =1, \cdots, n+1
\end{align}
where $\VChoi^{(n+1)}:=\VChoi, \VChoi^{(i-1)}:=\tr_{\cI_i \cO_{i-1}} [\VChoi^{(i)}]/d$, $I_{\cH}$ is the identity operator on $\cH$ and $\cI_{n+1}:=\mathcal{F}, \cO_0:=\cP$.
\end{lemma}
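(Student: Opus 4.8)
The plan is to prove both directions of the equivalence, exploiting the fact that Eq.~\eqref{Eq:HPTP_comb} is \emph{exactly} the constraint of Lemma~\ref{lem:comb_def} with the positivity requirement $C\geq 0$ deleted. Conceptually, the statement to establish is that the affine conditions in Eq.~\eqref{Eq:HPTP_comb} carve out precisely the affine hull (inside the Hermitian operators) of the set of Choi operators of quantum combs. I would first record that $\VChoi$ is Hermitian for any virtual comb, since it is a real affine combination of the Hermitian (indeed positive) comb Choi operators; the equivalence is then understood for Hermitian $\VChoi$.

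For the necessity (``only if'') direction I would write $\VChoi=\sum_{i}\eta_i C_i$ with each $C_i$ a comb Choi operator and $\sum_i\eta_i=1$, and simply check that the defining conditions pass through the affine combination. The maps $C\mapsto C^{(i)}$ are linear (compositions of partial traces and scalings by $1/d$), and the relations $\tr_{\cI_i}[C^{(i)}]=C^{(i-1)}\ox I_{\cO_{i-1}}$ are homogeneous linear equations, hence preserved under any linear combination. The lone inhomogeneous condition $C^{(0)}=1$ survives because the weights sum to one: $\VChoi^{(0)}=\sum_i\eta_i C_i^{(0)}=\sum_i\eta_i=1$. This part is routine.

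The substantive part is sufficiency. Given a Hermitian $\VChoi$ obeying Eq.~\eqref{Eq:HPTP_comb}, I would first exhibit a \emph{strictly positive} comb Choi operator to serve as an interior point, taking $C_*:=d^{-(n+1)}I$ on $\cP\ox\cI^n\ox\cO^n\ox\cF$; a short induction on the recursion $C_*^{(i-1)}=\tr_{\cI_i\cO_{i-1}}[C_*^{(i)}]/d$ shows each $C_*^{(i)}$ is proportional to the identity, that all trace conditions hold, and that the normalization forces the constant, so $C_*$ is the Choi operator of the legitimate comb that discards its input and outputs a maximally mixed state at each step. Since $\VChoi$ and $C_*$ both satisfy Eq.~\eqref{Eq:HPTP_comb}, every combination $C_0:=(1-t)C_*+t\VChoi$ does too, and because $C_*>0$ with $\lambda_{\min}(C_*)=d^{-(n+1)}$ while $\VChoi-C_*$ is bounded, $C_0\geq 0$ for all small enough $t>0$; then $C_0$ and $C_1:=C_*$ are both genuine comb Choi operators and
\begin{equation*}
\tfrac{1}{t}\,C_0+\Big(1-\tfrac{1}{t}\Big)C_1=\VChoi,\qquad \tfrac{1}{t}+\Big(1-\tfrac{1}{t}\Big)=1,
\end{equation*}
so $\VChoi$ is the Choi operator of the virtual comb with $\eta_0=1/t$ and $\eta_1=1-1/t$. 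As a byproduct this shows that two combs always suffice.

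I expect the main obstacle to be precisely the production of the interior point in the sufficiency direction: one needs an element of the comb spectrahedron lying in the relative interior of the affine subspace defined by Eq.~\eqref{Eq:HPTP_comb}, and the only real work is verifying that the normalized identity $C_*$ satisfies the comb constraints and is strictly positive. After that, the line-segment/affine-extension argument is mechanical, the only remaining subtlety being the bookkeeping that $\VChoi$ must be taken Hermitian — which the necessity direction shows is automatic.
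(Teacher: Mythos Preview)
Your proposal is correct and follows essentially the same approach as the paper: for sufficiency, both arguments use the (properly normalized) identity as a strictly positive reference comb and write $\VChoi$ as an affine combination of it with a positive shift of $\VChoi$. The only cosmetic difference is that the paper picks an explicit shift parameter $\eta_1=\|\VChoi\|_1$ (giving $C'=[(\|\VChoi\|_1+1)I-\VChoi]/\|\VChoi\|_1$ directly), whereas you use an existence argument via ``small enough $t$''; your treatment of the normalization $C_*=d^{-(n+1)}I$ is in fact more careful than the paper's.
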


\begin{proof}
For the `if part': without loss of generality, we let $\VChoi$ be the Choi operator of a virtual comb $\VComb = \eta_0\cC_0 + \eta_1\cC_1$, where $\eta_0 + \eta_1 =1$. $\VChoi$ satisfies Eq.~\eqref{Eq:HPTP_comb} by the linearity of partial trace. 
For the `Only if' part: let $C = I$, $\eta_1=\|\VChoi\|_1$, and $C' = [(\|\VChoi\|_1 + 1)I - \VChoi]/\|\VChoi\|_1$. If $\VChoi$ satisfies the conditions in Eq.~\eqref{Eq:HPTP_comb}, it is easy to check that $\VChoi = (1+\eta_1) C - \eta_1 C'$, $C,C'\geq 0$ and both satisfy the conditions in Eq.~\eqref{Eq:HPTP_comb}. Hence we complete the proof.
\end{proof}

\paragraph{Link product.}
For any two given processes, they can be connected whenever the input system of one matches the output system of the other. Upon connection, the Choi operator of the overall process can be derived through the \textit{link product} operation~\cite{Chiribella_2008a}, denoted as $*$.
Considering two processes $\cN_{A\rightarrow B}$ and $\cM_{B\rightarrow C}$, the Choi operator of the composite process from $\cH_{A}$ to $\cH_{C}$ can be represented as
\begin{align}
    J_{\cM \circ \cN} = J_{\cN} * J_{\cM} = \tr_B[(J_{\cN} \otimes I_C) \cdot (I_A \otimes J_{\cM}^{T_B})] \, ,
\end{align}
where $J_{\cN}, J_{\cM}$ are the Choi operators of $\cN,\cM$, respectively, and $T_B$ denotes taking partial transpose on $\cH_B$. This representation holds for any processes. It is noteworthy that the link product exhibits both associative and commutative properties:
\begin{align*}
    J_{1} * (J_{2} * J_{3}) &= (J_{1} * J_{2}) * J_{3}\\
    J_{1} * J_{2} &= J_{2} * J_{1}
\end{align*}

\subsection{Hermitian decomposition of virtual channels and combs}\label{appendix:1_decomposition}
To analyze the capability of using a virtual comb to invert general channels and to prove that any two quantum channels can be exactly reversed by a 1-slot virtual comb as stated in Theorem \textcolor{blue}{3}, we apply Hermitian decomposition to represent the Choi operator for general channels and conduct analysis using this formulation.
In this subsection, we introduce the mathematical form of this decomposition and some of its properties. We then show the representation of quantum processes in this form.

For the $d$-dimensional Hilbert space $\cH_d$, denote by $\cL^\dag(\cH_d)$ the set of Hermitian operators on it with a group of orthonormal basis
\begin{equation}
    \cB_d:=\left\{\sigma_0=I_d/\sqrt{d},\sigma_1,\cdots,\sigma_{d^2-1}\right\}\subset\cL^\dag(\cH_d),
\end{equation}
where $\tr[\sigma_j\sigma_k]=\delta_{jk}$ and $\delta_{jk}$ is Kronecker delta. Thus, any $d^n$-dimensional Hermitian operator could be represented as a real linear combination of $\cB_d^{\ox n}$. 

\begin{definition}
Given a $d^n$-dimensional Hilbert space $\cH_{d^n}$, for $\mathbf{j}=\left(j_1,j_2\cdots,j_n\right)\in\{0,1,\cdots,$ $d^2-1\}^n$, denote
\begin{equation}
    \sigma_{\mathbf j}:=
    \begin{cases}
    \sigma_{j_1}\ox\sigma_{j_2}^T\ox\sigma_{j_3}\ox\cdots\ox\sigma_{j_{n}},&\text{if }n\text{ odd};\\
        \sigma_{j_1}^T\ox\sigma_{j_2}\ox\sigma_{j_3}^T\ox\cdots\ox\sigma_{j_{n}},&\text{if }n\text{ even},
    \end{cases}
\end{equation}
then 
\begin{equation}
    \cB_{d,n}:=\left\{\sigma_{\mathbf j}\,\middle|\,\mathbf{j}\in\left\{0,1,\cdots,d^2-1\right\}^n\right\}
\end{equation}
forms a group of orthonormal basis for $\cL^\dagger(\cH_{d^n})$ naturally. 
\end{definition}

\begin{proposition}
For any Hermitian operator $H\in\cL^\dag(\cH_{d^n})$,
\begin{equation}
    H=\sum_{\mathbf j}H_{\mathbf j}\sigma_{\mathbf j},
\text{ where each }
    H_{\mathbf j}=\tr\left[H\cdot\sigma_{\mathbf j}\right]\in\mathbb R.
\end{equation}
\end{proposition}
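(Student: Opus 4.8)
The plan is to leverage the orthonormality of $\cB_{d,n}$ already asserted in the preceding definition, and then invoke the elementary fact that the coordinates of any vector with respect to an orthonormal basis are recovered by pairing it against the basis elements. The relevant structure is the real Hilbert--Schmidt inner product $\langle A,B\rangle:=\tr[AB]$ on the real vector space $\cL^\dag(\cH_{d^n})$, which is well defined and real-valued for Hermitian $A,B$. Establishing that this pairing is indeed a genuine inner product and that $\cB_{d,n}$ is orthonormal with respect to it is the entire substance of the argument.

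First I would confirm orthonormality directly, reducing the claim $\tr[\sigma_{\mathbf j}\sigma_{\mathbf k}]=\delta_{\mathbf{jk}}$ to the single-factor identity $\tr[\sigma_a\sigma_b]=\delta_{ab}$ built into the definition of $\cB_d$. This reduction uses multiplicativity of the trace over tensor products, $\tr[(A\ox B)(C\ox D)]=\tr[AC]\tr[BD]$, applied factorwise. The alternating transposes appearing in $\sigma_{\mathbf j}$ require a short check: since $\tr[\sigma_a^T\sigma_b^T]=\tr[(\sigma_b\sigma_a)^T]=\tr[\sigma_b\sigma_a]=\delta_{ab}$, the transposed factors reproduce the same relation, so pairing $\sigma_{\mathbf j}$ with $\sigma_{\mathbf k}$ yields $\prod_\ell \delta_{j_\ell k_\ell}=\delta_{\mathbf{jk}}$. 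A dimension count then shows $|\cB_{d,n}|=(d^2)^n=d^{2n}=\dim_{\mathbb{R}}\cL^\dag(\cH_{d^n})$, so the orthonormal set is in fact a complete basis.

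Next I would write the unique expansion $H=\sum_{\mathbf j}H_{\mathbf j}\sigma_{\mathbf j}$ guaranteed by completeness, and extract each coefficient by pairing with a fixed basis element: $\tr[H\sigma_{\mathbf k}]=\sum_{\mathbf j}H_{\mathbf j}\tr[\sigma_{\mathbf j}\sigma_{\mathbf k}]=\sum_{\mathbf j}H_{\mathbf j}\delta_{\mathbf{jk}}=H_{\mathbf k}$, which is precisely the stated formula. Reality of $H_{\mathbf j}$ then follows because both $H$ and $\sigma_{\mathbf j}$ are Hermitian: one verifies that $\sigma^T$ is Hermitian whenever $\sigma$ is, so every $\sigma_{\mathbf j}$ is Hermitian, and for Hermitian $A,B$ one has $\overline{\tr[AB]}=\tr[(AB)^\dag]=\tr[BA]=\tr[AB]$, whence $\tr[H\sigma_{\mathbf j}]\in\mathbb{R}$.

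I expect no serious obstacle here; once orthonormality is in hand the statement is standard linear algebra, and the definition already grants that $\cB_{d,n}$ is an orthonormal basis. The only point demanding genuine (if minor) care is the bookkeeping of the alternating transposes in the definition of $\sigma_{\mathbf j}$, where I must check that transposition preserves both Hermiticity and the orthonormality relations, so that the clean coordinate formula and the reality of the coefficients both survive.
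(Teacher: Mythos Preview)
Your proposal is correct. The paper gives no explicit proof of this proposition, treating it as an immediate consequence of the preceding definition (which already asserts that $\cB_{d,n}$ is an orthonormal basis for $\cL^\dag(\cH_{d^n})$); your argument simply spells out the standard orthonormal-expansion reasoning the paper leaves implicit, including the minor check that the alternating transposes preserve both Hermiticity and the orthonormality relations.
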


\begin{definition}
For any Hermitian operator $H\in\cL^\dag(\cH_{d^n})$, denote
\begin{equation}
    M_H:=\sum_{\mathbf j}H_{\mathbf j}\langle\mathbf j\rangle,
\end{equation}
where 
\begin{equation}
    \langle\mathbf j\rangle:=
    \begin{cases}
        \bra{j_1}\ox \ket{j_2}\ox\bra{j_3}\ox\cdots\ox\bra{j_{n}},&\text{if }n\text{ odd};\\        
        \ket{j_1}\ox\bra{j_2}\ox\ket{j_3}\ox\cdots\ox\bra{j_{n}},&\text{if }n\text{ even},
    \end{cases}
\end{equation}
\end{definition}
\begin{fact}
Any state $\rho\in\cL^\dag(\cH_d)$, the Choi matrix of any virtual channel $C_{\widetilde{\cN}}$, the Choi matrix of identity channel and the Choi matrix of any $1$-slot virtual comb $C_\VComb$ could be represented as 
\begin{align}
    &\rho=\sum_j\rho_j\sigma_j,
    &&M_\rho=\sum_j\rho_j\bra{j},\\
    &C_{\widetilde{\cN}}=\sum_{jk}\widetilde{\cN}_{jk}\sigma_j^T\ox\sigma_k, &&M_{\widetilde{\cN}}=\sum_{jk}\widetilde{\cN}_{jk}\ket{j}\ox\bra{k},\\    &C_{\id}=\sum_{j}\sigma_j^T\ox\sigma_j, 
    &&M_{\id}=I_{d^2},\label{Eq:Pauli_channel_id}\\
    &C_\VComb=\sum_{klrs}\widetilde\cC_{klrs}\sigma_k^T\ox\sigma_{l}\ox\sigma_{r}^T\ox\sigma_s,
    &&M_\VComb=\sum_{klrs}\widetilde\cC_{klrs}\ket{k}\ox\bra{l}\ox\ket{r}\ox\bra{s},
\end{align}
where $M_{C_*}$ is abbreviated as $M_*$ when there is no ambiguity.
\end{fact}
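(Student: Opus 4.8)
The plan is to observe that every operator listed in the Fact is a Hermitian operator on some $\cH_{d^n}$, so that each expansion is simply the preceding Proposition applied with the appropriate value of $n$, paired with the definition $M_H=\sum_{\mathbf j}H_{\mathbf j}\langle\mathbf j\rangle$. First I would record the Hermiticity and the relevant $n$ for each object: a state $\rho$ is Hermitian on $\cH_d$, so $n=1$; the Choi operator $C_{\widetilde\cN}$ of a virtual channel is Hermitian on $\cH_{d^2}$, so $n=2$, because a virtual channel is Hermitian-preserving and the Choi operator of a Hermitian-preserving map is Hermitian; and the Choi operator $C_\VComb$ of a $1$-slot virtual comb is Hermitian on $\cH_{d^4}$, so $n=4$, because by Lemma~\ref{lem:virtual_comb_choi} it is a real affine combination of quantum-comb Choi operators, each of which is positive and hence Hermitian.

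With these identifications, each expansion follows by substituting the explicit $\sigma_{\mathbf j}$ of the Definition into the Proposition's expansion $H=\sum_{\mathbf j}H_{\mathbf j}\sigma_{\mathbf j}$. For $n=1$ this is $\rho=\sum_j\rho_j\sigma_j$; for even $n=2$, $\sigma_{(j,k)}=\sigma_j^T\ox\sigma_k$ yields $C_{\widetilde\cN}=\sum_{jk}\widetilde\cN_{jk}\sigma_j^T\ox\sigma_k$; and for even $n=4$, $\sigma_{(k,l,r,s)}=\sigma_k^T\ox\sigma_l\ox\sigma_r^T\ox\sigma_s$ yields the stated $C_\VComb$. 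The companion $M$-operators then come from inserting the matching $\langle\mathbf j\rangle$ into $M_H=\sum_{\mathbf j}H_{\mathbf j}\langle\mathbf j\rangle$, using $\langle j\rangle=\bra{j}$ for $n=1$, $\langle(j,k)\rangle=\ket{j}\ox\bra{k}$ for $n=2$, and $\langle(k,l,r,s)\rangle=\ket{k}\ox\bra{l}\ox\ket{r}\ox\bra{s}$ for $n=4$, which reproduces the three $M$-formulas with the same coefficients $\rho_j$, $\widetilde\cN_{jk}$, $\widetilde\cC_{klrs}$.

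The only line that needs an actual computation is the identity channel. Here I would invoke the completeness relation of the orthonormal Hermitian basis, namely $\sum_j(\sigma_j)_{ab}(\sigma_j)_{cd}=\delta_{ad}\delta_{bc}$, which is equivalent to the expansion $X=\sum_j\tr[\sigma_jX]\sigma_j$ holding for every operator $X$. Writing the Choi operator of the identity channel as $C_{\id}=\sum_{ab}\ketbra{a}{b}\ox\ketbra{a}{b}$ and contracting $\sum_j\sigma_j^T\ox\sigma_j$ against this relation gives $C_{\id}=\sum_j\sigma_j^T\ox\sigma_j$; comparing with the $n=2$ form shows the coefficients are $(\id)_{jk}=\delta_{jk}$, whence $M_{\id}=\sum_j\ketbra{j}{j}=I_{d^2}$. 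I do not expect a genuine obstacle: the whole statement is essentially a specialization of the Proposition, and the only care required is to keep the transpose pattern of $\sigma_{\mathbf j}$ aligned with the bra/ket pattern of $\langle\mathbf j\rangle$ dictated by the parity conventions, with the completeness relation being the single substantive identity behind the identity-channel case.
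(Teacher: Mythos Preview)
The paper states this Fact without proof, treating it as an immediate consequence of the preceding Proposition and Definition. Your proposal supplies exactly the natural argument the paper implicitly relies on: identifying each object as a Hermitian operator on the appropriate $\cH_{d^n}$, reading off the parity-dependent form of $\sigma_{\mathbf j}$ and $\langle\mathbf j\rangle$, and handling the identity channel via the completeness relation of the orthonormal basis $\{\sigma_j\}$. Your reasoning is correct and matches the intended route.
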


\begin{fact}
In the decomposition above, quantum processes could be represented as follows
\begin{align}
    &\widetilde{\cN}(\rho)=\sum_{jk}\rho_{j}\widetilde\cN_{jk}\sigma_k,
    &&M_{\widetilde{\cN}(\rho)}=M_\rho\cdot M_{\widetilde\cN},\\
    &C_{\widetilde{\cN}\circ\widetilde{\cN}'}=\sum_{jkl}\widetilde\cN'_{jk}\widetilde\cN_{kl}\sigma_j^T\ox\sigma_l,
    &&M_{\widetilde{\cN}\circ \widetilde{\cN}'}=M_{\widetilde\cN'}\cdot M_{\widetilde\cN},\label{Eq:Pauli_channel_prod}\\
    &C_{\VComb(\widetilde{\cN})}=\sum_{klrs}\widetilde\cC_{klrs}\widetilde\cN_{lr}\sigma_k^T\ox\sigma_s,
    &&M_{\VComb(\widetilde{\cN})}
    =\left(I_{d^2}\ox M_{\widetilde\cN}^{T_1}\right)\cdot M_{\widetilde\cC}^{T_2},\label{Eq:Pauli_comb_link_channel}
\end{align}
where 
\begin{equation}
    M_{\widetilde\cN}^{T_1}:=\sum_{jk}\widetilde{\cN}_{jk}\bra{j}\ox\bra{k},\, 
    M_{\widetilde\cC}^{T_2}:=\sum_{klrs}\widetilde\cC_{klrs}\ket{k}\ox\ket{l}\ox\ket{r}\ox\bra{s}.
\end{equation}
\end{fact}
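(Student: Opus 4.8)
The plan is to prove the three identities one at a time, obtaining each Pauli-coordinate formula from the Choi/action correspondence together with the orthonormality relation $\tr[\sigma_j\sigma_k]=\delta_{jk}$, and then reading off the matching $M$-representation identity by rewriting the surviving Pauli coefficients as bra/ket symbols under the odd/even alternation built into the definitions of $\sigma_{\mathbf j}$ and $\langle\mathbf j\rangle$. In each case the index that gets summed over is exactly the inner index of a matrix product, which is what converts the contractions into the stated products of the $M$-matrices.

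First I would treat the action of a virtual channel on a state. Using the standard reconstruction of a map from its Choi operator, $\widetilde{\cN}(\rho)=\tr_1[(\rho^T\ox I)\,C_{\widetilde{\cN}}]$, and substituting $\rho=\sum_l\rho_l\sigma_l$ together with $C_{\widetilde{\cN}}=\sum_{jk}\widetilde{\cN}_{jk}\sigma_j^T\ox\sigma_k$, the partial trace over the first factor produces $\tr[\sigma_l^T\sigma_j^T]=\tr[\sigma_j\sigma_l]=\delta_{jl}$, which collapses the sum to $\widetilde{\cN}(\rho)=\sum_{jk}\rho_j\widetilde{\cN}_{jk}\sigma_k$. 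Replacing each coefficient by its bra/ket label then gives $M_{\widetilde{\cN}(\rho)}=\sum_k(\sum_j\rho_j\widetilde{\cN}_{jk})\bra{k}$, which is precisely $M_\rho\,M_{\widetilde{\cN}}$ for $M_\rho=\sum_j\rho_j\bra{j}$ and $M_{\widetilde{\cN}}=\sum_{jk}\widetilde{\cN}_{jk}\ket{j}\ox\bra{k}$.

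For the composition identity I would use the link product directly, writing $C_{\widetilde{\cN}\circ\widetilde{\cN}'}=C_{\widetilde{\cN}'}*C_{\widetilde{\cN}}$ and expanding $\tr_B[(C_{\widetilde{\cN}'}\ox I)(I\ox C_{\widetilde{\cN}}^{T_B})]$. The partial transpose turns the shared factor $\sigma_{k}^{T}$ of $C_{\widetilde{\cN}}$ into $\sigma_{k}$, and the trace over the shared system contracts it with the corresponding factor of $C_{\widetilde{\cN}'}$ via $\tr[\sigma_k\sigma_{k'}]=\delta_{kk'}$, leaving $\sum_{jkl}\widetilde{\cN}'_{jk}\widetilde{\cN}_{kl}\sigma_j^T\ox\sigma_l$. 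Since $k$ is the inner index, the $M$-form is the matrix product $M_{\widetilde{\cN}'}\,M_{\widetilde{\cN}}$, and the order reversal relative to $\widetilde{\cN}\circ\widetilde{\cN}'$ is automatic; the same result also follows by simply iterating the action formula.

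The link-product identity for a $1$-slot virtual comb is the main step, and I expect the bookkeeping of transposes to be the main obstacle. The plan is to expand $C_{\VComb(\widetilde{\cN})}=C_\VComb*C_{\widetilde{\cN}}$ over the shared slot systems $\cI_1\cO_1$, substitute the Pauli expansions of $C_\VComb$ and $C_{\widetilde{\cN}}$, and collapse the two slot sums with $\tr[\sigma_a\sigma_b]=\delta_{ab}$; the transpose pattern in $\sigma_{\mathbf j}$ together with the partial transpose in $*$ is arranged exactly so that the channel coefficients reappear as $\widetilde{\cN}_{lr}$ paired with $\widetilde{\cC}_{klrs}$, while the untouched systems $\cP,\cF$ remain as $\sigma_k^T\ox\sigma_s$, giving $\sum_{klrs}\widetilde{\cC}_{klrs}\widetilde{\cN}_{lr}\sigma_k^T\ox\sigma_s$. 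For the $M$-representation I would verify that applying $I_{d^2}\ox M_{\widetilde{\cN}}^{T_1}$ to $M_{\widetilde{\cC}}^{T_2}=\sum_{klrs}\widetilde{\cC}_{klrs}\ket{k}\ox\ket{l}\ox\ket{r}\ox\bra{s}$ contracts the two middle factors through $(\bra{l'}\ox\bra{r'})(\ket{l}\ox\ket{r})=\delta_{l'l}\delta_{r'r}$, leaving $\sum_{klrs}\widetilde{\cC}_{klrs}\widetilde{\cN}_{lr}\ket{k}\ox\bra{s}$ with $\ket{k}$ untouched by the identity factor; this is exactly $(I_{d^2}\ox M_{\widetilde{\cN}}^{T_1})\,M_{\widetilde{\cC}}^{T_2}$, matching $M_{\VComb(\widetilde{\cN})}$. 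The one thing that must be checked carefully at every step is that the odd/even bra–ket alternation and the transposes on shared systems line up so that each contraction reduces to a Kronecker delta; once that is confirmed, all three matrix-product identities are immediate.
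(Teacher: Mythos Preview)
Your approach is correct and is essentially the same as the paper's: the paper only writes out the first identity explicitly, using the Choi reconstruction $\widetilde{\cN}(\rho)=\sum_{l,r}\langle l|\rho|r\rangle\,\langle l|\sigma_j^T|r\rangle\,\sigma_k$ and collapsing via $\tr[\sigma_{j'}\sigma_j]=\delta_{j'j}$, then declares the remaining identities ``analogous.'' Your plan carries out precisely this same orthonormality/link-product computation for all three cases, so there is nothing materially different.
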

This fact can be directly checked by calculating
\begin{align}
    \widetilde{\cN}(\rho)&=\sum_{lr}\sum_{j'}\rho_{j'}\braandket{l}{\sigma_{j'}}{r}\sum_{jk}\widetilde\cN_{jk}\braandket{l}{\sigma_{j}^T}{r}\sigma_k
    =\sum_{j'jk}\rho_{j'}\widetilde\cN_{jk}\sigma_k\sum_{lr}\braandket{l}{\sigma_{j'}}{r}\braandket{r}{\sigma_{j}}{l}\\
    &=\sum_{j'jk}\rho_{j'}\widetilde\cN_{jk}\sigma_k\tr[\sigma_{j'}\sigma_{j}]=\sum_{jk}\rho_{j}\widetilde\cN_{jk}\sigma_k.
\end{align}
Other equations could be obtained analogously.
Based on these symbolic expressions, we have the following statement as a direct corollary of Eq.~\eqref{Eq:Pauli_channel_id} and Eq.~\eqref{Eq:Pauli_channel_prod}.

\begin{corollary}
For a channel $\widetilde{\cN}$, it has inverse map $\widetilde{\cN}^{-1}$ if and only if $M_{\widetilde{\cN}}$ is invertible, while 
\begin{equation}
    M_{\widetilde{\cN}^{-1}}=M_{\widetilde{\cN}}^{-1}.
\end{equation}
\end{corollary}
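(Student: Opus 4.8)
The plan is to exploit the fact that the assignment $\widetilde{\cN}\mapsto M_{\widetilde{\cN}}$ turns composition of maps into (reversed) matrix multiplication, so that invertibility of the map becomes invertibility of a matrix. First I would record the structural backbone supplied by the two Facts above: by the orthonormality of $\cB_d$ the coefficients $\widetilde{\cN}_{jk}=\tr[\,\cdot\,]$ are uniquely determined and range over all of $\RR$, hence $\widetilde{\cN}\mapsto M_{\widetilde{\cN}}=\sum_{jk}\widetilde{\cN}_{jk}\ket{j}\bra{k}$ is a linear bijection between virtual channels on $\cH_d$ (Hermitian-preserving maps) and real $d^2\times d^2$ matrices. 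Together with $M_{\id}=I_{d^2}$ from \eqref{Eq:Pauli_channel_id} and $M_{\widetilde{\cN}\circ\widetilde{\cN}'}=M_{\widetilde{\cN}'}\cdot M_{\widetilde{\cN}}$ from \eqref{Eq:Pauli_channel_prod}, this says the correspondence is an algebra anti-homomorphism sending the identity channel to the identity matrix.

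For the forward direction, assuming $\widetilde{\cN}^{-1}$ exists I would apply \eqref{Eq:Pauli_channel_prod} to the two defining relations $\widetilde{\cN}\circ\widetilde{\cN}^{-1}=\id$ and $\widetilde{\cN}^{-1}\circ\widetilde{\cN}=\id$, obtaining $M_{\widetilde{\cN}^{-1}}M_{\widetilde{\cN}}=M_{\widetilde{\cN}}M_{\widetilde{\cN}^{-1}}=I_{d^2}$; thus $M_{\widetilde{\cN}}$ is invertible with $M_{\widetilde{\cN}^{-1}}=M_{\widetilde{\cN}}^{-1}$. For the converse, given that $M_{\widetilde{\cN}}$ is invertible, its inverse $M_{\widetilde{\cN}}^{-1}$ is again a real $d^2\times d^2$ matrix, so by surjectivity of the bijection there is a unique virtual channel $\widetilde{\cM}$ with $M_{\widetilde{\cM}}=M_{\widetilde{\cN}}^{-1}$. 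Applying \eqref{Eq:Pauli_channel_prod} once more gives $M_{\widetilde{\cN}\circ\widetilde{\cM}}=M_{\widetilde{\cM}\circ\widetilde{\cN}}=I_{d^2}=M_{\id}$, and injectivity of the correspondence then forces $\widetilde{\cN}\circ\widetilde{\cM}=\widetilde{\cM}\circ\widetilde{\cN}=\id$, i.e.\ $\widetilde{\cM}=\widetilde{\cN}^{-1}$, which also yields the claimed identity $M_{\widetilde{\cN}^{-1}}=M_{\widetilde{\cN}}^{-1}$.

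The step I expect to carry the real weight is the bijectivity claim rather than the algebra: the forward implication is a one-line consequence of the anti-homomorphism property, but the converse needs that every real matrix is realized by some Hermitian-preserving map, which is exactly what lets me promote \emph{the matrix has an inverse} to \emph{the map has an inverse}. This rests on the orthonormal-basis expansion of the Facts, so I would state explicitly that $\{\sigma_j^T\ox\sigma_k\}_{jk}$ is a real-linear basis of the Hermitian Choi operators, and hence that the coefficient array $(\widetilde{\cN}_{jk})$ — equivalently $M_{\widetilde{\cN}}$ — is an unconstrained real parameter. A minor point worth noting is that, since everything takes place among square finite matrices, a one-sided inverse is automatically two-sided, so either direction could be shortened; keeping both relations merely makes the anti-homomorphism bookkeeping transparent.
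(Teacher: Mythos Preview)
Your proposal is correct and matches the paper's intended approach: the paper does not spell out a proof but simply declares the result a direct corollary of \eqref{Eq:Pauli_channel_id} and \eqref{Eq:Pauli_channel_prod}, which are exactly the anti-homomorphism ingredients you invoke. Your added remark on bijectivity (needed for the converse) is the only nontrivial step the paper leaves implicit, and you justify it correctly via the orthonormal-basis expansion.
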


\begin{lemma}\label{lemma:Pauli_TP}
For a given matrix $M_{\widetilde{\cN}}\in\mathbb R^{d^2\times d^2}$, $C_{\widetilde{\cN}}$ is the Choi matrix of an HPTP map if and only if $\widetilde\cN_{j0}=\delta_{j0}$; for a given matrix $M_\VComb\in\mathbb R^{d^4\times d^4}$, $C_\VComb$ is the Choi matrix of a $1$-slot virtual comb if and only if 
\begin{equation}
    \widetilde\cC_{klr0}=\begin{cases}
        0, &\text{if }r\ne0;\\
        \delta_{k0}, &\text{if }l=r=0.
    \end{cases}
\end{equation}
\end{lemma}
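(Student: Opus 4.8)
The plan is to convert the abstract Hermitian-preserving/trace-preserving (HPTP) and causal comb constraints into linear equations on the real coefficients $\widetilde\cN_{jk}$ and $\widetilde\cC_{klrs}$, exploiting that the $\sigma_{\mathbf j}$ form an orthonormal Hermitian basis and that $\tr[\sigma_k]=\sqrt{d}\,\delta_{k0}$ (which follows from $\sigma_0=I_d/\sqrt{d}$ and $\tr[\sigma_k\sigma_0]=\delta_{k0}$). First I would observe that the Hermitian-preserving requirement is automatic: since each $\sigma_j$ is Hermitian, so is $\sigma_j^T$, hence every basis element $\sigma_j^T\ox\sigma_k$ is Hermitian and the real-coefficient ansatz for $C_{\widetilde\cN}$ already encodes Hermiticity of the Choi operator. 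Thus for the channel statement only the trace-preserving condition $\tr_B[C_{\widetilde\cN}]=I_A$ (trace over the output factor) carries content.

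For the channel I would compute the output marginal directly in the basis, $\tr_B[C_{\widetilde\cN}]=\sum_{jk}\widetilde\cN_{jk}\,\sigma_j^T\,\tr[\sigma_k]=\sqrt{d}\sum_j\widetilde\cN_{j0}\,\sigma_j^T$, and compare with $I_A=\sqrt{d}\,\sigma_0^T$ in the orthonormal basis $\{\sigma_j^T\}$; matching coefficients yields $\widetilde\cN_{j0}=\delta_{j0}$, and conversely this condition makes the marginal equal the identity. This settles the first claim in both directions.

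For the $1$-slot virtual comb I would invoke Lemma~\ref{lem:virtual_comb_choi}: writing the four tensor slots of $C_\VComb$ as $(\cP,\cI_1,\cO_1,\cF)=(\sigma_k^T,\sigma_l,\sigma_r^T,\sigma_s)$, the conditions \eqref{Eq:HPTP_comb} for $n=1$ become $\VChoi^{(0)}=1$, the relation $\tr_\cF[\VChoi]=\VChoi^{(1)}\ox I_{\cO_1}$ (the $i=2$ constraint), and $\tr_{\cI_1}[\VChoi^{(1)}]=\VChoi^{(0)}\ox I_\cP$ (the $i=1$ constraint), with $\VChoi^{(1)}=\tr_{\cF\cO_1}[\VChoi]/d$. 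Expanding these marginals using $\tr[\sigma_s]=\sqrt d\,\delta_{s0}$ gives $\VChoi^{(1)}=\sum_{kl}\widetilde\cC_{kl00}\,\sigma_k^T\ox\sigma_l$ and $\tr_\cF[\VChoi]=\sqrt{d}\sum_{klr}\widetilde\cC_{klr0}\,\sigma_k^T\ox\sigma_l\ox\sigma_r^T$, while $\VChoi^{(1)}\ox I_{\cO_1}=\sqrt{d}\sum_{kl}\widetilde\cC_{kl00}\,\sigma_k^T\ox\sigma_l\ox\sigma_0^T$. Matching coefficients forces $\widetilde\cC_{klr0}=\widetilde\cC_{kl00}\,\delta_{r0}$, i.e.\ vanishing for $r\neq 0$; the $i=1$ constraint then forces $\widetilde\cC_{k000}=\widetilde\cC_{0000}\,\delta_{k0}$, and $\VChoi^{(0)}=\widetilde\cC_{0000}=1$ pins this to $\widetilde\cC_{k000}=\delta_{k0}$, which is the $l=r=0$ case. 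Both directions follow by reading these equalities forward or backward, with the coefficients $\widetilde\cC_{klr0}$ for $r=0,\,l\neq0$ remaining unconstrained.

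The algebra itself is routine; the main bookkeeping obstacle I anticipate is tracking the correspondence between the four tensor slots and the systems $(\cP,\cI_1,\cO_1,\cF)$, together with the transpose pattern on odd/even slots and the factor $1/d$ hidden in the recursion $\VChoi^{(i-1)}=\tr_{\cI_i\cO_{i-1}}[\VChoi^{(i)}]/d$; mislabeling any of these would shuffle the roles of $k,l,r,s$ and misstate which marginal is constrained. One should also verify explicitly that the unconstrained block ($r=0$, $l\neq0$) is genuinely free, so that the stated conditions are not merely necessary but also sufficient for $C_\VComb$ to be a virtual comb.
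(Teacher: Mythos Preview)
Your proposal is correct and follows essentially the same route as the paper: both arguments translate the trace-preserving and comb causality constraints from Lemma~\ref{lem:virtual_comb_choi} into coefficient conditions using $\tr[\sigma_j]=\sqrt{d}\,\delta_{j0}$ and match terms in the orthonormal basis. The only cosmetic difference is that for the channel part you invoke the Choi-level condition $\tr_B[C_{\widetilde\cN}]=I_A$ directly, whereas the paper checks the equivalent operational condition $\tr[\widetilde\cN(\rho)]=\tr[\rho]$ for all $\rho$; both lead immediately to $\widetilde\cN_{j0}=\delta_{j0}$.
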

\begin{proof}
By 
\begin{equation}
    \forall \rho,\ \tr[\rho]=\sqrt{d}\rho_0,\ \tr[\widetilde{\cN}(\rho)]=\sum_{jk}\rho_j\widetilde\cN_{jk}\tr[\sigma_k]=\sqrt{d}\sum_{j}\rho_j\widetilde\cN_{j0},
\end{equation}
we find $\widetilde{\cN}$ is TP if and only if $\widetilde\cN_{j0}=\delta_{j0}$.
By Lemma \ref{lem:virtual_comb_choi}, we have $C_\VComb$ is the Choi matrix of a $1$-slot virtual comb if and only if 
\begin{equation}
    \tr_4 [C_\VComb]=\tr_{34}[C_\VComb\ox I_d/d],\ 
    \tr_{234} [C_\VComb]={\tr [C_\VComb]}\cdot I_d/d,\ 
    \tr [C_\VComb]=d^2,
\end{equation}
i.e.
\begin{equation}
    (r\ne0\implies \widetilde\cC_{klr0}=0)\wedge
    \widetilde\cC_{k000}=\delta_{k0}  
\end{equation}
\end{proof}

\section{Proof of Theorems}\label{appendix:proof_of_thm}

\subsection{Reversing depolarizing channels}\label{appendix:depo}
\renewcommand\theproposition{\textcolor{blue}{1}}
\setcounter{proposition}{\arabic{proposition}-1}
\begin{theorem}
    For any $n\geq 1$, let $\cD_{p_1},..., \cD_{p_{n+1}}$ be $n+1$ $d$-dimensional depolarizing channels with distinct noise parameters $p_1,...,p_{n+1} \in[0,1)$. There exists an $n$-slot virtual comb $\VComb$ satisfying
    \begin{equation}\label{Eq:appendix_main}
    \VComb(\cD_{p_i}^{\ox n}) = \cD_{p_i}^{-1}, \quad \forall \, i=1,...,n+1.
    \end{equation}
\end{theorem}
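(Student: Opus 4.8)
The plan is to reduce the operator identity~\eqref{Eq:appendix_main} to a scalar interpolation problem by exploiting the covariance of depolarizing channels, and then to realize the required virtual comb as an explicit affine combination of a handful of elementary combs.

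First I would record the structure of the target. Writing $\cT(\rho) = \tr[\rho] I_d/d$ for the completely depolarizing channel and splitting $\rho$ into its trace part and its traceless part, one has $\cD_p = (1-p)\id + p\,\cT$, so $\cD_p$ fixes the trace part and scales the traceless part by $(1-p)$. Hence $\cD_p^{-1}$ fixes the trace part and scales the traceless part by $1/(1-p)$; in particular it is trace preserving and is completely determined by that single scalar. Because $\cD_p$ is $U(d)$-covariant it is natural to seek $\VComb$ whose output $\VComb(\cD_p^{\ox n})$ is again covariant, i.e.\ of the form $b(p)\,\id + (1-b(p))\,\cT$, and two such maps coincide exactly when their scalars agree. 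Moreover, since a virtual comb is an affine combination of genuine combs with coefficients summing to one and every genuine comb sends trace-preserving channels to trace-preserving maps, the trace part of $\VComb(\cD_p^{\ox n})$ is automatically the identity. Thus \eqref{Eq:appendix_main} collapses to the scalar conditions $b(p_i) = 1/(1-p_i)$ for $i = 1,\dots,n+1$.

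Next I would produce elementary combs realizing a convenient family of scalars $b(p)$. For $i=0,1,\dots,n$ let $\cC_i$ be the comb that routes the global input through slots $1,\dots,i$ in series and feeds every remaining slot a fixed state whose output is discarded; feeding $\cD_p$ into all slots gives $\cC_i(\cD_p^{\ox n}) = \cD_p^{\circ i}$, whose traceless scaling is $(1-p)^i$. Let $\cR$ be the comb that ignores its input and all slots and simply outputs $I_d/d$, so $\cR(\cD_p^{\ox n}) = \cT$ with traceless scaling $0$. All of these are legitimate $n$-slot quantum combs with identity trace part. Setting $\VComb = \sum_{i=0}^n \eta_i \cC_i + \eta_{\cR}\,\cR$, the traceless scaling of $\VComb(\cD_p^{\ox n})$ is $\sum_{i=0}^n \eta_i (1-p)^i$, untouched by $\eta_{\cR}$, while the affine-combination constraint reads $\sum_{i=0}^n \eta_i + \eta_{\cR} = 1$. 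Writing $x_j = 1-p_j \in (0,1]$, the matching conditions become $\sum_{i=0}^n \eta_i x_j^{\,i} = 1/x_j$ for $j=1,\dots,n+1$; since the $p_j$ are distinct and lie in $[0,1)$ the $x_j$ are distinct and nonzero, so the coefficient matrix $[x_j^{\,i}]$ is (the transpose of) a Vandermonde matrix and $\eta_0,\dots,\eta_n$ are uniquely determined. Finally I set $\eta_{\cR} = 1 - \sum_{i=0}^n \eta_i$ to enforce normalization, obtaining a virtual comb with $b(p_i)=1/(1-p_i)$ and identity trace part, hence $\VComb(\cD_{p_i}^{\ox n}) = \cD_{p_i}^{-1}$ for every $i$.

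The step I expect to be the main obstacle is reconciling the normalization with the interpolation. With only the $n+1$ composition combs $\cC_0,\dots,\cC_n$ one has $n+1$ free coefficients but $n+2$ scalar constraints (the $n+1$ interpolation equations together with $\sum_i\eta_i=1$), so the system is over-determined and generically unsolvable. The crucial observation is that adjoining the single ``replace by the maximally mixed state'' comb $\cR$ supplies exactly one extra parameter which, contributing nothing to the traceless part, decouples the normalization from the interpolation: the Vandermonde system fixes $\eta_0,\dots,\eta_n$ and then $\eta_{\cR}$ absorbs the normalization. Recognizing that this elementary operation is precisely what makes the task solvable is the heart of the argument, and it matches the three elementary operations (do nothing, replace by the maximally mixed state, iterate the black box) featured in the error-cancellation protocol.
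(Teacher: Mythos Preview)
Your proof is correct and takes essentially the same approach as the paper: both reduce~\eqref{Eq:appendix_main} via covariance to a scalar Vandermonde interpolation problem and realize the solution as an affine combination of the same $n+2$ elementary combs (iterate the black box $i$ times for $i=0,\dots,n$, plus the ``replace by the maximally mixed state'' comb $\cR=\cC_\cD$). Your direct parametrization in powers of $1-p_j$ is a bit cleaner than the paper's route through the binary multi-index decomposition of $\cD_p^{\ox n}$ and the Vandermonde in $p_j/(1-p_j)$, but the underlying idea and the final construction are identical.
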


\begin{proof}
We first prove the case for $n=1$, and show the main idea. Denote the Choi operator of the virtual comb as $\VChoi_{\cP\cI\cO\cF}$ and the Choi operator of a qudit depolarizing channel $\cD_{p}$ as:
\begin{align}
    J_{\cD_p} = (1-p)J_{\id,\cI\cO} + p J_{\cD,\cI\cO} \, ,
\end{align}
where $J_{\id}=\dketbra{I_d}{I_d}$ and $J_{\cD} = \frac{1}{d}I_d \ox I_d$ are the Choi operators of the identity channel and fully depolarizing channel respectively, and $\cP$, $\cI$, $\cO$, $\cF$ represent the corresponding systems in Fig \textcolor{blue}{1} in the main text. By direct calculation, the Choi operator of $\cD_p^{-1}$ can be written as
\begin{align}\label{Eq:JDp_inverse}
    J_{\cD_p^{-1}} = \dfrac{1}{1-p} J_{\id} - \dfrac{p}{1-p} J_{\cD} \, .
\end{align}
Notice that $[J_{\cD_p^{-1}}, U\otimes \overline{U}]=0$ for arbitrary $U$ in $SU(d)$, which means $\cU^\dag \circ \cD_{p}^{-1} \circ \cU = \cD_{p}^{-1}$. Thus for arbitrary virtual comb $\VComb$ satisfying Eq.~\eqref{Eq:appendix_main}, $\VComb' = \int dU \, \cU^\dag \circ \VComb \circ \cU$ is also a feasible one. 
As quantum channels from $\cL(\cP)$ to $\cL(\cF)$ with Choi operator commute with $U_\cP\otimes \overline{U}_\cF$ could always be decomposed into a linear combination of the identity channel and the fully depolarizing channel, 
without loss of generality, we could focus on a virtual comb satisfying $[\VChoi, U_\cP\otimes \overline{U}_\cF \otimes I_{\cI\cO}]=0$ and get
\begin{align}\label{equation_set_1}
    \left\{
    \begin{aligned}
        \VComb(\id_{\cI\cO}) = \alpha \id_{\cP\cF} + (1-\alpha)\cD_{\cP\cF}\\
        \VComb(\cD_{\cI\cO}) = \beta \id_{\cP\cF} + (1-\beta)\cD_{\cP\cF}\\
    \end{aligned}
     \right.
\end{align}
Inserting Eq.(\ref{equation_set_1}) into Eq.~\eqref{Eq:appendix_main} gives us the following linear equations with variables $\alpha,\beta$.
\begin{equation}\label{Eq:comb_alpha_beta}
    \left\{
    \begin{aligned}
    &(1-p_1)\alpha + p_1\beta = \frac{1}{1-p_1} \\
    &(1-p_2)\alpha + p_2\beta = \frac{1}{1-p_2}
    \end{aligned}
    \right.
    \implies
    \alpha = \frac{1-p_1-p_2}{(1-p_1)(1-p_2)}, \beta = \frac{2-p_1-p_2}{(1-p_1)(1-p_2)}.
\end{equation}
Therefore, we have that the linear map satisfying Eq.~\eqref{equation_set_1} with $\alpha$, $\beta$ given by Eq.~\eqref{Eq:comb_alpha_beta} could exactly reverse $\cD_{p_1}$ and $\cD_{p_2}$ using one-call of the channel.

Now the last step is to show that the linear map derived for two distinct depolarizing channels could always be realized with a 1-slot virtual comb. Consider three quantum combs $\cC_1,\cC_2,\cC_3$ satisfying the following condition for arbitrary $\cN \in \CPTP(\cI,\cO)$ 
\begin{equation}\label{n1_comb_decomposition}
    \cC_1(\cN) =  \cD, \; \cC_2(\cN) = \id, \; \cC_3(\cN) = \cN \, .
\end{equation}
It is easy to see that $\cC_1$ corresponds to always passing through a fully depolarizing channel, $\cC_2$ corresponds to bypassing channel $\cN$ and the input state is directly output from $\cP$ to $\cF$, and $\cC_3(\cN)$ corresponds to go through channel $\cN$.
For any given $\alpha, \beta\in\mathbb{R}$, we could then construct a virtual comb $\VComb$ such that 
\begin{equation}\label{Eq:vcomb_construct}
    \VComb(\cN) = \beta\cC_2(\cN) + (\alpha-\beta)\cC_3(\cN) + (1-\alpha)\cC_1(\cN) \, .
\end{equation}
It is easy to check that $\VComb$ satisfies Eq.~\eqref{equation_set_1} when the input $\cN$ is chosen to be $\cD$ and $\id$, respectively.

    For general $n$, we use a similar method to demonstrate the existence of a virtual comb as the example for $n=1$ given in the main text. Let $\mathbf{k} = (k_1, k_2, ..., k_n)\in \{0,1\}^n$ be a $n$-bit binary string and $|\mathbf{k}|$ be the Hamming weight of it. We then denote $\cD_{\mathbf{k}} = \cN_{k_1}\ox \cN_{k_2}\ox \cdots \ox \cN_{k_n}$ where $\cN_{k_i} = \id$ if $k_i=1$ and $\cN_{k_i} = \cD$ if $k_i = 0$. Suppose $\VComb$ is a virtual comb that could do exact inversion for $m$ distinct noise level, 
    it is then sufficient to satisfy
    \begin{equation}\label{appendEq:Nslot_inverse}
        \VComb * \cD_{p_i}^{\ox n} = \sum_{\mathbf{k}\in \{0,1\}^{N}} (1-p_i)^{|\mathbf{k}|}p_i^{1-|\mathbf{k}|} \VComb * \cD_{\mathbf{k}} = \frac{1}{1-p_i} \id_{\cP\cF} - \frac{p_i}{1-p_i} \cD_{\cP\cF}, \;\; \forall i=1,2,\cdots,m \, .
    \end{equation}
    We denote $\VComb * \cD_{\mathbf{k}} = \alpha_{\mathbf{k}}\id_{\cP\cF} + (1-\alpha_{\mathbf{k}}) \cD_{\cP\cF}$ and $\alpha_{i} = \sum_{|\mathbf{k}|=N-i} \alpha_{\mathbf{k}}$. Then we can equivalently express Eq.~\eqref{appendEq:Nslot_inverse} as the following linear system
    \begin{equation}\label{appendEq:nslot_coeff}
    \left\{\begin{array}{c}
    \left(1-p_1\right)^n \alpha_0+\left(1-p_1\right)^{n-1} p_1 \alpha_1+\cdots+p_1^n \alpha_n=\frac{1}{1-p_1} \\
    \left(1-p_2\right)^n \alpha_0+\left(1-p_2\right)^{n-1} p_2 \alpha_1+\cdots+p_2^n \alpha_n=\frac{1}{1-p_2} \\
    \quad \quad\vdots\\
    \left(1-p_m\right)^n \alpha_0+\left(1-p_m\right)^{n-1} p_m \alpha_1+\cdots+p_m^n \alpha_n=\frac{1}{1-p_m}
    \end{array}\right.
    \end{equation}
    The coefficient matrix of the linear system in Eq.~\eqref{appendEq:nslot_coeff} can be written as 
    \begin{equation}
    A = \left(\begin{array}{cccc}
    (1-p_1)^n & 0 & \ldots & 0 \\
    0 & (1-p_2)^n & \ldots & 0 \\
    \vdots & \vdots & \ddots & \vdots \\
    0 & 0 & \ldots & (1-p_m)^n
    \end{array}\right)
    \left(\begin{array}{cccc}
    1 & \frac{p_1}{1-p_1} & \ldots & (\frac{p_1}{1-p_1})^{n} \\
    1 & \frac{p_2}{1-p_2} & \ldots & (\frac{p_2}{1-p_2})^{n} \\
    \vdots & \vdots & \ddots & \vdots \\
    1 & \frac{p_m}{1-p_m} & \ldots & (\frac{p_m}{1-p_m})^{n}
    \end{array}\right).
    \end{equation}
    Notice that the right one is a Vandermonde matrix. When $m=n+1$, its determinant can be expressed as
    \begin{equation}
        \det(A) = \prod_{1\leq i\leq m} (1-p_i)^n \prod_{1\leq i< j\leq m} \left(\frac{p_j}{1-p_j} - \frac{p_i}{1-p_i}\right)
    \end{equation}
    which indicates that it is invertible if and only if all $p_i$ are distinct. Hence, we conclude that $\rank A =\min\{n+1,m\}$. 
    Now we consider the augmented matrix of the linear system in Eq.~\eqref{appendEq:nslot_coeff}. We have
    \begin{equation}
        (A|B) = 
        \left(\begin{array}{ccccc}
        (1-p_1)^n & (1-p_1)^{n-1}p_1 & \ldots & p_1^n & \frac{1}{1-p_1}\\
        (1-p_2)^n & (1-p_2)^{n-1}p_2 & \ldots & p_2^n & \frac{1}{1-p_2} \\
        \vdots & \vdots & \ddots & \vdots & \vdots\\
        (1-p_m)^n & (1-p_m)^{n-1}p_m & \ldots & p_m^n & \frac{1}{1-p_m}
        \end{array}\right).
    \end{equation}
    By applying row addition, it is easy to verify that the determinant of $(A|B)$ is proportional to the following matrix
    \begin{equation}
        \left(\begin{array}{ccccc}
        (1-p_1)^{n+1} & (1-p_1)^{n} & \ldots & (1-p_1) & 1\\
        (1-p_2)^{n+1} & (1-p_2)^{n} & \ldots & (1-p_2) & 1 \\
        \vdots & \vdots & \ddots & \vdots & \vdots\\
        (1-p_m)^{n+1} & (1-p_m)^{n} & \ldots & (1-p_m) & 1
        \end{array}\right)
    \end{equation}
    which is also a Vandermonde matrix. Using a similar argument, we have $\rank(A|B)=\min\{n+2,m\}$. When $m=n+1$, it follows $\rank A = \rank(A|B)$ which yields that the linear system has a unique solution. Similar to Eq.~\eqref{n1_comb_decomposition}, we can consider quantum combs $\cC_{\id},\cC_{\cD},\cC_i$ that satisfy the following
    \begin{equation}
        \cC_{\id}(\cN^{\ox n}) = \id,\; \cC_{\cD}(\cN^{\ox n}) = \cD,\; \cC_i(\cN^{\ox n}) = \cN^{i},\; \forall \, \cN\in \CPTP(\cI, \cO), \, i=1,2,...,n, 
    \end{equation}
    where $\cN^{i}$ here means sequentially passing through channel $\cN$ for $i$ times. 
    Then, the whole virtual comb satisfying Eq.~\eqref{appendEq:Nslot_inverse} could be constructed as $\VComb = \eta_{id} \cdot \cC_{id} + \eta_{\cD}\cdot \cC_{\cD} + \sum_{1}^{n} \eta_{i} \cdot \cC_i$, where the coefficients are given by the solution of the linear equations in Eq.~\eqref{appendEq:nslot_coeff}.
    Hence we complete the proof.
\end{proof}
\renewcommand{\theproposition}{S\arabic{proposition}}

From the proof above, one could immediately derive the Theorem \textcolor{blue}{2} in the main text that an $n$-slot virtual comb is not able to do exact inversion for depolarizing noise with $n+2$ distinct noise levels.
The main idea is that the coefficient matrix and the augmented matrix above are both Vandermonde matrices, calculating their rank shows that $m \geq n+2$ will result in an inequality which makes the linear system in Eq.~\eqref{appendEq:nslot_coeff} have no solution. To highlight the unique power of reversing a family of depolarizing channels with unknown noises provided by virtual combs, we show that such an exact channel inversion task cannot be accomplished via a quantum comb probabilistically as the following proposition.

\renewcommand\theproposition{\textcolor{blue}{2}}
\setcounter{proposition}{\arabic{proposition}-1}
\begin{proposition}
    For any $n \geq 1$, let $\cD_{p_1},..., \cD_{p_{n+1}}$ be $n+1$ $d$-dimensional depolarizing channels with distinct noise parameters $p_1,...,p_{n+1} \in[0,1)$. There does not exist an $n$-slot quantum comb that can reverse $\cD_{p_i}$ for each $i$, even probabilistically.
\end{proposition}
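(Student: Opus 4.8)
The plan is to isolate the single structural feature that separates a genuine quantum comb from the virtual comb built in Theorem~\ref{thm:main_Ndepo}: a quantum comb has a positive semidefinite Choi operator, a virtual comb need not. I would first recall the defining property of quantum combs as higher-order maps. By Lemma~\ref{lem:comb_def} an $n$-slot comb $\cC$ has Choi operator $C\geq 0$, and such a supermap preserves complete positivity: feeding it the CP map $\cD_{p_i}^{\ox n}$ (whose Choi operator is positive) yields a CP map, so the output $\cC(\cD_{p_i}^{\ox n})$ has a positive semidefinite Choi operator. This is exactly the property the virtual comb relinquishes by admitting negative coefficients $\eta_i$.

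Next I would pin down the probabilistic notion precisely. Reversing $\cD_{p_i}$ probabilistically means there is a success outcome of a quantum tester, described by a subnormalized CP comb $\cC$ (Choi $C\geq 0$) together with a success probability $q_i\in(0,1]$, such that conditioned on that outcome the implemented map is exactly the inverse,
\[
\cC(\cD_{p_i}^{\ox n}) = q_i\, J_{\cD_{p_i}^{-1}}, \qquad \forall\, i.
\]
Because $C\geq 0$ and $J_{\cD_{p_i}^{\ox n}}\geq 0$, the left-hand side is again the Choi operator of a CP map and is therefore positive semidefinite.

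The contradiction then comes from inspecting the target operator. By Eq.~\eqref{Eq:JDp_inverse}, $J_{\cD_{p_i}^{-1}} = \tfrac{1}{1-p_i}\dketbra{I_d}{I_d} - \tfrac{p_i}{1-p_i}\cdot\tfrac{1}{d}I_{d^2}$, which acts on the subspace orthogonal to the (unnormalized) maximally entangled vector as $-\tfrac{p_i}{d(1-p_i)}I$, i.e. it carries a negative eigenvalue of multiplicity $d^2-1$ whenever $p_i>0$. Since the $n+1\geq 2$ parameters are distinct and lie in $[0,1)$, at least one $p_i$ is strictly positive; for that index the right-hand side $q_i J_{\cD_{p_i}^{-1}}$ with $q_i>0$ fails to be positive semidefinite, contradicting the positivity of the left-hand side. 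Hence no $n$-slot quantum comb, deterministic or probabilistic, can reverse the whole family.

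The main obstacle I anticipate is not the algebra but pinning down the operational model of a probabilistic comb so that positivity of the success branch is airtight: I must argue that any exact probabilistic reversal factors through a single CP (trace-nonincreasing) tester element whose action on a CP input stays CP, so that cancellations between distinct measurement outcomes cannot secretly manufacture the non-positive target $J_{\cD_{p_i}^{-1}}$. Once the success branch is identified as a bona fide CP supermap, the obstruction collapses to the elementary fact that a CP process can never output the non-CP map $\cD_{p_i}^{-1}$, which is precisely the positivity constraint that the virtual-comb framework deliberately drops.
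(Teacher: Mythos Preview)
Your proposal is correct and takes essentially the same approach as the paper: the paper's argument (spelled out via an explicit linear system for $n=1$ and then stated in general for $n\geq 2$) is precisely that a CP comb applied to the CP input $\cD_{p_i}^{\otimes n}$ yields a positive Choi operator, whereas $q_i J_{\cD_{p_i}^{-1}}$ has negative eigenvalues for any $p_i>0$. Your version simply bypasses the $n=1$ linear-system detour and goes straight to the positivity contradiction, which is exactly what the paper does for general $n$.
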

\begin{proof}
    For $n=1$, suppose there is a quantum comb $\cC$ with a Choi operator $C$ that can probabilistically reverse an unknown depolarizing channel $\cD_p$. It follows
    \begin{equation}\label{Eq:comb_prob}
    \left\{\begin{array}{c}
        C * \dketbra{I_d}{I_d}_{\cI\cO} = \alpha_1 \dketbra{I_d}{I_d}_{\cP\cF} + \alpha_2 (\frac{1}{d}I_d \ox I_d)_{\cP\cF}\\
        C * (\frac{1}{d}I_d \ox I_d)_{\cI\cO} = \beta_1 \dketbra{I_d}{I_d}_{\cP\cF} + \beta_2 (\frac{1}{d}I_d \ox I_d)_{\cP\cF}
    \end{array}
    \right.
    \end{equation}
    where $\alpha_{1,2}, \beta_{1,2} \geq 0, \alpha_1 + \alpha_2 \leq 1, \beta_1 + \beta_2 \leq 1$. At the same time, for $i=1,2$, we have
    \begin{equation}\label{Eq:comb_prob_q}
    \left\{\begin{array}{c}
        (1-p_i)\alpha_1 + p_i \beta_1 = \frac{q}{1-p_i}\\
        (1-p_i)\alpha_2 + p_i \beta_2 = \frac{-p_iq}{1-p_i}
    \end{array}
    \right.
    \end{equation}
    where $0<q\leq 1$. Notice that $1-p_i\geq 0,\alpha_2\geq 0,p_i\geq 0\,\beta_2,\geq 0$ and $-p_iq/(1-p_i)\leq 0$ which is a contradiction in Eq.~\eqref{Eq:comb_prob_q}. Thus, there is no probabilistic comb. 

    For $n \geq 2$, the proof is similar, and the key point is that in Eq.~\eqref{appendEq:Nslot_inverse} if the virtual comb is replaced with a quantum comb or a probabilistic comb, the left-hand side of the equation is always positive while the right-hand side has negative eigenvalues.
\end{proof}
\renewcommand{\theproposition}{S\arabic{proposition}}

The proof of Theorem \textcolor{blue}{4} is based on a specific protocol we provided, where we constructed an $n$-slot virtual comb that can exactly reverse a depolarizing channel whose noise parameter is from $\big\{p_1+(p_2-p_1)k/n\big\}_{k=0}^{n}$. It can be demonstrated that this protocol has a worst-case error of at most $\cO(n^{-1})$ for any noise levels between two parameters that can be exactly reversed.
\renewcommand\theproposition{\textcolor{blue}{4}}
\setcounter{proposition}{\arabic{proposition}-1}
\begin{theorem}
    Let $0\leq p_1<p_2\leq 1$, the minimum worst-case error of an approximate channel inversion for a depolarizing channel $\cD_{p}$ with $p\in [p_1, p_2]$ using an $n$-slot virtual comb is at most $\cO(n^{-1})$.
\end{theorem}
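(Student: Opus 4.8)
The plan is to instantiate the explicit construction of Theorem~\ref{thm:main_Ndepo} on an equally spaced grid and then reduce the residual error to a single scalar interpolation remainder. First I would set $q_k := p_1 + (p_2-p_1)k/n$ for $k=0,\dots,n$; these are $n+1$ distinct parameters in $[p_1,p_2]$, so Theorem~\ref{thm:main_Ndepo} supplies a fixed $n$-slot virtual comb $\VComb$ with $\VComb(\cD_{q_k}^{\ox n})=\cD_{q_k}^{-1}$ for every $k$. The quantity to control is then $\max_{p\in[p_1,p_2]}\tfrac12\|\VComb(\cD_p^{\ox n})\circ\cD_p-\id\|_\diamond$ for this single comb, and the goal is to show it is $\cO(n^{-1})$ (here $p_2<1$, which is needed for $\cD_p$ to be invertible and is what keeps the relevant constants finite).

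The key reduction is to collapse everything to one scalar polynomial. Using the covariance argument already invoked in the proof of Theorem~\ref{thm:main_Ndepo} (so $\VComb$ may be taken $U_\cP\ox\overline U_\cF$-covariant), the output $\VComb(\cD_p^{\ox n})$ lies in the span of $\id$ and the completely depolarizing channel $\cD$. Expanding $\cD_p^{\ox n}=\sum_{\mathbf k}(1-p)^{|\mathbf k|}p^{n-|\mathbf k|}\cD_{\mathbf k}$ and using $\VComb(\cD_{\mathbf k})=\alpha_{\mathbf k}\id+(1-\alpha_{\mathbf k})\cD$ shows that the $\id$-coefficient, call it $f(p)$, is a polynomial of degree at most $n$ in $p$. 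Since $\{\id,\cD\}$ is closed under composition with the rule that the $\id$-coefficients multiply, i.e. $(a\id+(1-a)\cD)\circ(b\id+(1-b)\cD)=ab\,\id+(1-ab)\cD$, composing with $\cD_p=(1-p)\id+p\cD$ gives $\VComb(\cD_p^{\ox n})\circ\cD_p = f(p)(1-p)\,\id + \big(1-f(p)(1-p)\big)\cD$. Hence the error equals $g(p)\,(\id-\cD)$ with $g(p):=f(p)(1-p)-1$, and since $\|\id-\cD\|_\diamond$ is a constant depending only on $d$, the worst-case error is $\tfrac12\|\id-\cD\|_\diamond\cdot\max_p|g(p)|$. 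Now $g$ is a polynomial of degree at most $n+1$ that vanishes at every $q_k$ (exact inversion forces $f(q_k)=1/(1-q_k)$), so $f$ is precisely the degree-$n$ interpolant of $h(p):=1/(1-p)$ at the nodes $q_k$. I would then use the divided-difference identity $h[q_0,\dots,q_m]=\prod_{j=0}^m(1-q_j)^{-1}$ for $h=1/(1-\cdot)$ together with the Newton remainder formula to obtain the closed form $g(p)=-\prod_{k=0}^n\frac{p-q_k}{1-q_k}$.

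Finally I would estimate this product. Writing $p=p_1+s\Delta$ with $\Delta=(p_2-p_1)/n$, $s\in[0,n]$, and $r:=(1-p_1)/\Delta$, one gets $|g(p)|=\prod_{k=0}^n\frac{|s-k|}{r-k}$. For the numerator I invoke the classical equispaced bound $\max_{s\in[0,n]}\prod_{k=0}^n|s-k|\le n!/4$, while the denominator is $\prod_{k=0}^n(r-k)=(n+1)!\binom{r}{n+1}$, so that $\max_p|g(p)|\le\frac{1}{4(n+1)\binom{r}{n+1}}$. The decisive point is that $p_2<1$ keeps the nodes bounded away from the pole of $h$ at $1$, so $r=(1-p_1)n/(p_2-p_1)>n$; once $n\ge (p_2-p_1)/(1-p_2)$ we have $r\ge n+1$ and hence $\binom{r}{n+1}\ge1$, giving $\max_p|g(p)|\le\frac{1}{4(n+1)}=\cO(n^{-1})$, which proves the claim. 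I expect the main obstacle to be exactly this last estimate: individual factors $|p-q_k|/(1-q_k)$ can exceed $1$ for nodes near $1$, the hallmark of the Runge phenomenon, so a term-by-term bound is hopeless; the resolution is to keep numerator and denominator products together and exploit $r>n$, which in fact yields geometric decay in $n$ and makes $\cO(n^{-1})$ a weak corollary.
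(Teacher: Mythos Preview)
Your argument is correct and shares the paper's overall strategy: instantiate Theorem~\ref{thm:main_Ndepo} on the equispaced grid $q_k=p_1+(p_2-p_1)k/n$, use the $\{\id,\cD\}$-span structure to reduce the diamond-norm error to a single scalar, and recognize that scalar as a degree-$(n{+}1)$ polynomial vanishing at every $q_k$. Both you and the paper in fact arrive at the identical closed form $g(p)=-\prod_k(p-q_k)/(1-q_k)$; the paper gets there by evaluating the leading coefficient $1-\alpha_0^{(n)}$ and factoring over the known roots, while you obtain it more directly from the divided-difference identity for $1/(1-p)$ and the Newton remainder. The genuine difference is the endgame. The paper bounds $|g|$ by showing $|g'(q_k)|\le|g'(q_0)|\le|f_1'(p_1)|$ at every node and then invokes a mean-value style step to conclude $|g(p)|\le |f_1'(p_1)|\,(p_2-p_1)/n$; this yields exactly $\cO(n^{-1})$ but the passage from node-derivatives to a uniform bound is left somewhat implicit. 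Your route is both cleaner and sharper: writing $|g(p)|=\prod_k|s-k|/\prod_k(r-k)$, the classical equispaced estimate $\prod_k|s-k|\le n!/4$ together with $\prod_k(r-k)=(n{+}1)!\binom{r}{n+1}\ge(n{+}1)!$ for $r\ge n{+}1$ gives $|g(p)|\le 1/(4(n{+}1))$ directly, and as you observe, since $r/n\to(1-p_1)/(p_2-p_1)>1$ the binomial factor in fact grows geometrically, so your bound is really exponential in $n$ and $\cO(n^{-1})$ is a weak corollary. One small caveat worth making explicit in a write-up: the paper's derivation of the closed form requires all $p_i>0$ (it divides by $p_i$), whereas your divided-difference route works even when $p_1=0$.
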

\begin{proof}
Fix $n\geq 2$, by Theorem \textcolor{blue}{1}, we consider a virtual comb that holds $\VChoi * J_{\cD_{p_i}}^{\ox n} = J_{\cD_{p_i}}^{-1}$ for any $p_1,p_2,...,p_{n+1}$ where $p_i = p_1 + \frac{i-2}{n}(p_2-p_1), i\geq 3$. It follows
\begin{equation*}\label{appendEq:nslot_dist}
    \left\| J_{\cD_p} * (\VChoi * J_{\cD_p}^{\ox n}) - \dketbra{I}{I} \right\|_1 = \frac{3}{2}\left|1-(1-p)\cdot\left[(1-p)^n\alpha_0^{(n)} + (1-p)^{n-1}p\alpha_1^{(n)} + \cdots + p^n\alpha_n^{(n)}\right]\right|
\end{equation*}
Denote $f_n(p) = 1-(1-p)\cdot\left[(1-p)^n\alpha_0^{(n)} + (1-p)^{n-1}p\alpha_1^{(n)} + \cdots + p^n\alpha_n^{(n)}\right]$ which is an univariate polynomial of degree $n+1$. Its root set is provided by $\{p_i\}_{i=1}^{n+1}$. Noticing $f_n(0) = 1-\alpha_0^{(n)}$, we can rewrite $f_n(p)$ as
\begin{equation}
    f_n(p) = (-1)^{n+1}\left(1-\alpha_0^{(n)}\right)\cdot\prod_{i=1}^{n+1} \frac{p-p_i}{p_i}.
\end{equation}
Consider the derivatives of this polynomial function at $p_j$. We have
\begin{equation}
    \frac{\mathrm{d} f_n}{\mathrm{d} p}\bigg|_{p=p_i} = (-1)^{n+1}\frac{1-\alpha_0^{(n)}}{p_i}\cdot\prod_{j\neq i}^{n+1} \frac{p_i-p_j}{p_j}.
\end{equation}
Recall that by Eq.~\eqref{appendEq:nslot_coeff}, we have $\alpha_0^{(n)} = 1-(-1)^{n+1}\prod_{i=1}^{n+1}\frac{p_i}{1-p_i}$. In the following, we will show that 
\begin{equation}
    \mathrm{abs}\left(\frac{\mathrm{d} f_n}{\mathrm{d} p}\bigg|_{p=p_1}\right) \leq \mathrm{abs}\left(\frac{\mathrm{d} f_1}{\mathrm{d} p}\bigg|_{p=p_1}\right), \;\mathrm{abs}\left(\frac{\mathrm{d} f_n}{\mathrm{d} p}\bigg|_{p=p_k}\right) \leq \mathrm{abs}\left(\frac{\mathrm{d} f_n}{\mathrm{d} p}\bigg|_{p=p_1}\right), \quad \forall n, k\geq 2.
\end{equation} 
Firstly, we can calculate that
\begin{equation}\label{appendEq:df_n/df_1}
\begin{aligned}
    \mathrm{abs}\left(\frac{\mathrm{d} f_n}{\mathrm{d} p}\bigg|_{p=p_1} \right) \bigg/ \mathrm{abs}\left(\frac{\mathrm{d} f_1}{\mathrm{d} p}\bigg|_{p=p_1}\right) &= \frac{1-\alpha_{0}^{(n)}}{1-\alpha_0^{(1)}}\cdot\prod_{j=3}^{n+1}\frac{p_j-p_1}{p_j}\\
    &= \prod_{i=3}^{n+1}\frac{p_i}{1-p_i}\cdot \prod_{j=3}^{n+1}\frac{p_j-p_1}{p_j}\\
    &= \prod_{j=3}^{n+1}\frac{p_j-p_1}{1-p_j}\\
    &\leq 1
\end{aligned}
\end{equation}
Here we note that $0\leq p_1<p_2\leq 1$ and $p_{i+1}-p_i = p_{i+2}-p_{i+1}$ for any $i$. It holds that
\begin{equation}
    \frac{(p_1-p_i)(p_1-p_{n+4-i})}{(1-p_i)(1-p_{n+4-i})} \leq 1,\quad \forall i\geq 3,
\end{equation}
which yields the inequality in Eq.~\eqref{appendEq:df_n/df_1}. Secondly, we have
\begin{equation}
    \mathrm{abs}\left(\frac{\mathrm{d} f_n}{\mathrm{d} p}\bigg|_{p=p_k} \right) \bigg/ \mathrm{abs}\left(\frac{\mathrm{d} f_n}{\mathrm{d} p}\bigg|_{p=p_1}\right) = \bigg|\frac{\prod_{j\neq k}(p_k-p_j)}{\prod_{j\neq 1}(p_1-p_j)}\bigg| \leq 1.
\end{equation}
Therefore, for a given noise region $[p_1,p_2]$, we denote $c=\frac{\mathrm{d} f_1}{\mathrm{d} p}|_{p=p_1}$ and have 
\begin{equation}\label{appendEq:wost_case_scalling}
    f_n(p)\leq |c|\cdot \frac{p_2-p_1}{n} \, .
\end{equation}
As shown in~\cite{watrous2018theory} the worst-case diamond norm error can be bounded by the entanglement error with a relation that
\begin{align}
    \frac{1}{2}\left\| \cN_1 - \cN_2 \right\|_\diamond \leq d \cdot \frac{1}{2}\left\| J_{\cN_1} - J_{\cN_2} \right\|_1 \, ,
\end{align}
Eq.~\eqref{appendEq:wost_case_scalling} indicates the worst-case error is at most $\cO(n^{-1})$.
\end{proof}
\renewcommand{\theproposition}{S\arabic{proposition}}

With the virtual comb that can exactly reverse the depolarizing channel with the noise level in $\big\{p_1+(p_2-p_1)k/n\big\}_{k=0}^{n}$, 
when the actual noise level of the depolarizing channel is $p\in[p_1,p_2]$, the error is an `$n+1$'-degree function of $p$ in the formula $e^n(p) = \xi \prod_{i}(p-p_i')$, where $\xi$ is a coefficient determined by $\big\{p_1+(p_2-p_1)k/n\big\}_{k=0}^{n}$. 
Then we could calculate the worst-case error by finding the extreme points of different polynomials. 
Based on this, we present the upper bounds of the minimum worst-case error for the cases $p_1, p_2$ are $(0,0.2)$, $(0,0.4)$ and $(0,0.6)$ in Fig. \textcolor{blue}{2}.

\subsection{SDP formula fo approximate channel inversion}\label{appendix:appro}
In this subsection, we remark why the optimization of the average error for approximately reversing quantum channels within a channel set $\Theta$ can be determined via an SDP. Notice that the involvement of a virtual comb may render the entire process not necessarily a quantum channel, we quantify the performance by considering the completely bounded trace distance from $\VComb(\cN_i^{\ox n}) \circ \cN_i$ to the identity channel `$\id$'. Hence, the average error for approximately reversing quantum channels within a channel set $\Theta$ can be defined as follows.
\begin{subequations}
\begin{align}
e_{\rm ave,opt}^n(\Theta) := \min &\; \frac{1}{2}\sum_{i=1}^{m} p_i \left\| \VComb(\cN_i^{\ox n}) \circ \cN_i - \id \right\|_\diamond,\\
 {\rm s.t.} & \;\; \VComb = (1+\eta) \cC_0 - \eta \cC_1, \eta\geq 0,\\
            & \;\; \cC_0, \cC_1 \text{ are $n$-slot quantum combs.}
\end{align}
\end{subequations}
For any two HPTP maps $\cN_1,\cN_2$ from system $A$ to $B$, the completely bounded trace distance can be evaluated using the following SDP~\cite{watrous2009semidefinite}, which is a minimization problem.
\begin{subequations}\label{SDP:diamondnorm}
\begin{align}
\frac{1}{2}\|\cN_1-\cN_2\|_{\diamond} = \min &\; \mu\\
 {\rm s.t.} & \;\; Z_{AB}\geq 0, \tr_B [Z_{AB}] \leq \mu I_A,\\
            & \;\; Z_{AB}\geq J_{\cN_1} - J_{\cN_2}.
\end{align}
\end{subequations}
Therefore, given a quantum channel ensemble $\{(p_i, \cN_i)\}_{i=1}^m$, the optimal average error for reversing the channel set $\Theta$ can be expressed as
\begin{subequations}\label{SDP:err_ave}
\begin{align}
e_{\rm ave,opt}^n(\Theta) = \min &\; \sum_{i=1}^{m} p_i \mu_i,\\
 {\rm s.t.} & \;\; \widetilde{C} = (1+\eta) C_0 - \eta C_1, \eta\geq 0, \label{Eq:etaC}\\
            & \;\; C_0, C_1 \text{ are Choi operators of $n$-slot quantum combs,} \label{Eq:comb_cons}\\
            & \;\; Z^{(i)}_{AB}\geq 0, \, \tr_B [Z^{(i)}_{AB}] \leq \mu_i I_A,\\
            & \;\; Z^{(i)}_{AB} \geq \widetilde{C} * J_{\cN_i}^{\ox n+1} - J_{\id}, \forall i,
\end{align}
\end{subequations}
where $J_{\cN_i}$ is the Choi operator of the channel $\cN_i$. We note that Eq.~\eqref{Eq:etaC} is not a valid linear constraint for SDP as $\eta$ and $C_0$ are both variables. However, we can rewrite the constraint as $\widetilde{C} = C_0 - C_1$ and modify the trace constraints in Eq.~\eqref{Eq:comb_cons} by $\tr C_0 = (1+\eta)d^2$ and $\tr C_1 = \eta d^2$. Thus, the resulting optimization problem is a valid SDP.

\subsection{Reversing general channels}\label{appendix:genral_chan}
The proof of Theorem \textcolor{blue}{3} in the main text is based on the Hermitian decomposition technique which is introduced in Appendix~\ref{appendix:preliminary}. Before proving the theorem, we will first prove several lemmas and corollaries as follows. For simplicity, we label the systems $\cP,\cI,\cO,\cF$ as $1,2,3,4$, respectively.

\begin{lemma}
For a $1$-slot virtual comb $\VComb$ and a channel $\cN$, $\VComb(\cN)$ is the inverse map for channel $\cN$, if and only if
\begin{equation}
    \left(M_\cN\ox M_{\cN}^{T_1}\right)\cdot M_{\VComb}^{T_2}=I_{d^2}.
\end{equation}
\end{lemma}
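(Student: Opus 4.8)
The plan is to translate the statement entirely into the faithful matrix representation $\widetilde{\cN}\mapsto M_{\widetilde{\cN}}$ introduced in the Facts above, in which composition of maps becomes matrix multiplication and the link product of a channel with a $1$-slot comb is given explicitly by Eq.~\eqref{Eq:Pauli_comb_link_channel}. Since this representation is linear and injective, $\VComb(\cN)$ is the inverse map of $\cN$ exactly when $M_{\VComb(\cN)}$ is the matrix inverse of $M_\cN$; by the Corollary the latter exists iff $M_\cN$ is invertible, in which case it equals $M_\cN^{-1}$. So the whole lemma will reduce to rewriting the single matrix equation $M_\cN\,M_{\VComb(\cN)}=I_{d^2}$ in the form asserted.

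First I would record the characterization of inversion in matrix form. Applying the composition rule $M_{\widetilde{\cN}\circ\widetilde{\cN}'}=M_{\widetilde{\cN}'}M_{\widetilde{\cN}}$ from Eq.~\eqref{Eq:Pauli_channel_prod} with $\widetilde{\cN}=\VComb(\cN)$ and $\widetilde{\cN}'=\cN$, the condition $\VComb(\cN)\circ\cN=\id$ becomes $M_\cN\,M_{\VComb(\cN)}=M_{\id}=I_{d^2}$. Because $M_\cN$ and $M_{\VComb(\cN)}$ are square matrices of the same size, a one-sided inverse is automatically two-sided, so this single equality already forces $M_\cN$ to be invertible and $M_{\VComb(\cN)}=M_\cN^{-1}$. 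Hence no separate invertibility hypothesis on $\cN$ is needed, and ``$\VComb(\cN)$ is the inverse of $\cN$'' is equivalent to $M_\cN\,M_{\VComb(\cN)}=I_{d^2}$.

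Next I would substitute the explicit form $M_{\VComb(\cN)}=\left(I_{d^2}\ox M_{\cN}^{T_1}\right)M_{\VComb}^{T_2}$ from Eq.~\eqref{Eq:Pauli_comb_link_channel} and left-multiply by $M_\cN$. The crux is the Kronecker mixed-product identity $M_\cN\left(I_{d^2}\ox M_{\cN}^{T_1}\right)=M_\cN\ox M_{\cN}^{T_1}$, which holds because $M_\cN$ acts only on the first tensor factor (the system $\cP$) while $M_{\cN}^{T_1}$ acts, as a row covector, on the contracted $\cI\cO$ factors; writing $M_\cN=M_\cN\ox 1$ and applying $(A\ox B)(C\ox D)=(AC)\ox(BD)$ makes it immediate. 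Combining the two displays gives $M_\cN\,M_{\VComb(\cN)}=\left(M_\cN\ox M_{\cN}^{T_1}\right)M_{\VComb}^{T_2}$, so the inversion condition becomes exactly $\left(M_\cN\ox M_{\cN}^{T_1}\right)M_{\VComb}^{T_2}=I_{d^2}$, as claimed.

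The only delicate point—more bookkeeping than genuine obstacle—is keeping straight which tensor factor ($\cP$, the contracted $\cI\cO$ pair, or $\cF$) each of $I_{d^2}$, $M_{\cN}^{T_1}$, and the $T_2$-transpose acts on, and checking that the ordering in $M_{\widetilde{\cN}\circ\widetilde{\cN}'}=M_{\widetilde{\cN}'}M_{\widetilde{\cN}}$ places $M_\cN$ on the left so that the identity appears on the correct side. A componentwise verification, using $M_{\cN}^{T_1}(\ket{l}\ox\ket{r})=\cN_{lr}$ and $M_\cN\ket{k}=\sum_a\cN_{ak}\ket{a}$, gives a direct cross-check: both sides produce the same matrix with entries $\sum_{klr}\cN_{ak}\,\widetilde{\cC}_{klrs}\,\cN_{lr}$, confirming the identity and completing the equivalence.
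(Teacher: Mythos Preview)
Your proposal is correct and follows essentially the same approach as the paper: both arguments translate the inversion condition into the matrix representation, use $M_{\VComb(\cN)\circ\cN}=M_\cN\,M_{\VComb(\cN)}$ together with $M_{\id}=I_{d^2}$, substitute the formula $M_{\VComb(\cN)}=(I_{d^2}\ox M_\cN^{T_1})M_{\VComb}^{T_2}$, and then absorb $M_\cN$ into the tensor product via the mixed-product identity. The paper simply presents this as a single chain of equalities, while you add the side remarks on one-sided versus two-sided inverses and a componentwise sanity check, but the substance is identical.
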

\begin{proof}
$\VComb*\cN$ is the inverse map for channel $\cN$ if and only if $(\VComb*\cN)\circ\cN$ is just identity map. By \eqref{Eq:Pauli_channel_id}, \eqref{Eq:Pauli_channel_prod} and \eqref{Eq:Pauli_comb_link_channel}, we have
\begin{align}
    I_{d^2}=M_{\id}
    =M_{\VComb(\cN)\circ\cN}
    =M_\cN\cdot M_{\VComb(\cN)}
    =M_\cN\cdot\left(\left(I_{d^2}\ox M_{\cN}^{T_1}\right)\cdot M_{\VComb}^{T_2}\right)
    =\left(M_\cN\ox M_{\cN}^{T_1}\right)\cdot M_{\VComb}^{T_2}
\end{align}
\end{proof}
\begin{corollary}\label{Cor:comb_inverse_LS}
    For a $1$-slot virtual comb $\VComb$ and several channels $\cN_0,\cN_1,\cdots,\cN_{m-1}$, each $\VComb(\cN_j)$ is the inverse map for channel $\cN_j$, if and only if
\begin{equation}
    \left(\sum_{j=0}^{m-1}\ket{j}\ox M_{\cN_j}\ox M_{\cN_j}^{T_1}\right)\cdot M_{\VComb}^{T_2}=\sum_{j=0}^{m-1}\ket{j}\ox I_{d^2}.
\end{equation}
\end{corollary}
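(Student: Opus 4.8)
The plan is to reduce the corollary directly to the preceding Lemma applied separately to each channel, and then to recognize the displayed identity as nothing more than a vertical stacking of the $m$ individual equations, bookkept by the orthonormal index kets $\ket{j}$.

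First I would apply the Lemma above to each fixed $j\in\{0,\dots,m-1\}$: the map $\VComb(\cN_j)$ is the inverse of $\cN_j$ if and only if $\left(M_{\cN_j}\ox M_{\cN_j}^{T_1}\right)\cdot M_{\VComb}^{T_2}=I_{d^2}$. Hence the statement ``$\VComb(\cN_j)$ inverts $\cN_j$ for every $j$'' is equivalent to the simultaneous validity of these $m$ matrix equations, and it only remains to show that this conjunction is equivalent to the single stacked equation in the statement.

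For the stacking step I would use that right-multiplication by $M_{\VComb}^{T_2}$ touches only the right tensor factors $M_{\cN_j}\ox M_{\cN_j}^{T_1}$ and leaves the index register $\ket{j}$ inert, so that
\begin{equation}
    \left(\sum_{j=0}^{m-1}\ket{j}\ox M_{\cN_j}\ox M_{\cN_j}^{T_1}\right)\cdot M_{\VComb}^{T_2}
    = \sum_{j=0}^{m-1}\ket{j}\ox\left[\left(M_{\cN_j}\ox M_{\cN_j}^{T_1}\right)\cdot M_{\VComb}^{T_2}\right].
\end{equation}
Because the kets $\{\ket{j}\}_{j=0}^{m-1}$ are orthonormal, comparing the block labeled $\ket{j}$ on each side shows that the stacked identity equals $\sum_{j}\ket{j}\ox I_{d^2}$ precisely when every individual equation holds. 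Combined with the first step, this yields the claimed equivalence in both directions.

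I do not expect a genuine obstacle here: all the substance sits in the preceding Lemma, and the remaining argument is the elementary observation that tensoring each equation with a distinct orthonormal ket and summing assembles them into one block-column identity whose rows decouple. The only point to verify with a little care is that $M_{\VComb}^{T_2}$ acts solely on the factors being multiplied and not on $\ket{j}$, which is immediate from the definitions of $M_{\cN}^{T_1}$ and $M_{\VComb}^{T_2}$ recorded in the Fact above.
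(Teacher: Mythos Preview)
Your proposal is correct and matches the paper's treatment: the paper states this corollary without a separate proof, leaving it as an immediate consequence of the preceding Lemma via exactly the block-stacking observation you describe. There is nothing to add.
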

\begin{lemma}\label{Lem:2d2_rank}
For two different channels $\cN_0,\cN_1$, if they both have inverse map, then
\begin{equation}
 \rank\left(\sum_{j=0}^1\ket{j}\ox M_{\cN_j}\ox M_{\cN_j}^{T_1}\right)=2d^2.
\end{equation}
\end{lemma}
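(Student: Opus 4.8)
The plan is to show that the $2d^2 \times d^4$ matrix
\begin{equation*}
    R := \sum_{j=0}^1\ket{j}\ox M_{\cN_j}\ox M_{\cN_j}^{T_1}
\end{equation*}
has full row rank $2d^2$ by exploiting the block structure induced by the outer $\ket{j}$ register. Writing $R$ as a stacked pair of blocks, the top block is $M_{\cN_0}\ox M_{\cN_0}^{T_1}$ and the bottom block is $M_{\cN_1}\ox M_{\cN_1}^{T_1}$, each of size $d^2\times d^4$. Since $\cN_0$ and $\cN_1$ both have inverse maps, the Corollary following Eq.~\eqref{Eq:Pauli_channel_prod} tells us that each $M_{\cN_j}$ is invertible, hence each $M_{\cN_j}$ has rank $d^2$ and the tensor factor $M_{\cN_j}^{T_1}$ (a $1\times d^4$-type reshaping carrying all the entries of $M_{\cN_j}$) is nonzero. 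So each individual block already has rank $d^2$; the entire content of the lemma is that the two row spaces are \emph{complementary}, i.e. their intersection is trivial, so the ranks add to $2d^2$.

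First I would reduce the full-rank claim to a statement about a single linear combination: $R$ fails to have rank $2d^2$ iff there exist row-covectors $\bra{a}, \bra{b}\in\CC^{d^2}$, not both zero, with
\begin{equation*}
    \bra{a}\left(M_{\cN_0}\ox M_{\cN_0}^{T_1}\right) + \bra{b}\left(M_{\cN_1}\ox M_{\cN_1}^{T_1}\right) = 0 .
\end{equation*}
The natural move is to view the $d^4$-dimensional output index as a pair $(k,\ell)$ indexing $\sigma_k^T\ox\sigma_\ell$, so that $M_{\cN_j}^{T_1}=\sum_{k\ell}(\cN_j)_{k\ell}\bra{k}\ox\bra{\ell}$. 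Pairing against a fixed second index collapses the tensor structure and turns the vanishing condition into a family of matrix equations of the form $(M_{\cN_0})^{\mathsf T} X (M_{\cN_1}) = $ (scalar)$\cdot(\text{something})$, relating $M_{\cN_0}$ and $M_{\cN_1}$. The key algebraic fact I would extract is that any such nontrivial relation forces $M_{\cN_0}$ and $M_{\cN_1}$ to be proportional, i.e. $M_{\cN_0}=\lambda M_{\cN_1}$ for some scalar $\lambda$. Because both are Choi matrices of \emph{channels} (HPTP maps), Lemma~\ref{lemma:Pauli_TP} pins the trace-preserving row, forcing $(\cN_j)_{k0}=\delta_{k0}$ for both; proportionality then forces $\lambda=1$, hence $M_{\cN_0}=M_{\cN_1}$, contradicting $\cN_0\neq\cN_1$.

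The main obstacle I anticipate is the middle step — cleanly showing that a nontrivial null covector forces proportionality of $M_{\cN_0}$ and $M_{\cN_1}$. The difficulty is that $M_{\cN_j}$ appears \emph{twice} in each block (once as the ``action'' factor and once, via $M_{\cN_j}^{T_1}$, as a vectorized copy), so the vanishing equation is genuinely bilinear in the entries of $M_{\cN_j}$ rather than linear; one cannot simply invert. I would handle this by contracting the identity with well-chosen test vectors on the repeated factor to linearize: fixing the vectorized index and using invertibility of $M_{\cN_1}$ to solve for $\bra{a}$-dependent quantities in terms of $\bra{b}$-dependent ones, thereby decoupling the two appearances. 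An alternative, possibly cleaner route is to pre- and post-multiply by $M_{\cN_0}^{-1}$ and $M_{\cN_1}^{-1}$ (available by the Corollary) to symmetrize the relation, after which the constraint that the resulting operator be simultaneously diagonal in the $\cB_d$-basis structure of both channels forces the scalar proportionality. Once proportionality is in hand, the trace-preservation normalization and the hypothesis $\cN_0\neq\cN_1$ close the argument immediately.
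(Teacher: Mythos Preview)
Your overall strategy is correct and matches the paper's in its skeleton: reduce to a single row-vector equation
\[
\bra{a}\bigl(M_{\cN_0}\ox M_{\cN_0}^{T_1}\bigr)+\bra{b}\bigl(M_{\cN_1}\ox M_{\cN_1}^{T_1}\bigr)=0,
\]
then show that a nontrivial solution would force $M_{\cN_0}$ and $M_{\cN_1}$ to be proportional, and finally use the trace-preservation normalization $(\cN_j)_{k0}=\delta_{k0}$ from Lemma~\ref{lemma:Pauli_TP} to force $\lambda=1$, contradicting $\cN_0\neq\cN_1$.

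Where you go astray is in treating the ``middle step'' as a genuine obstacle requiring bilinear manipulations, test-vector contractions, or pre/post-multiplication by inverses. None of that is needed. The displayed relation is literally
\[
\bigl(\bra{a}M_{\cN_0}\bigr)\ox M_{\cN_0}^{T_1}+\bigl(\bra{b}M_{\cN_1}\bigr)\ox M_{\cN_1}^{T_1}=0,
\]
i.e.\ $v_0\ox w_0+v_1\ox w_1=0$ with $w_j=M_{\cN_j}^{T_1}$. If $w_0$ and $w_1$ are linearly independent row vectors, this forces $v_0=v_1=0$ immediately, and invertibility of $M_{\cN_j}$ gives $a=b=0$. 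The only thing to check is that $w_0,w_1$ are independent, i.e.\ that $M_{\cN_0}$ and $M_{\cN_1}$ are not scalar multiples of one another --- and that is exactly your TP argument ($\lambda=1\Rightarrow\cN_0=\cN_1$). The paper's proof is precisely this one-line tensor-factor argument; it simply asserts that $M_{\cN_0}^{T_1}$ and $M_{\cN_1}^{T_1}$ are linearly independent and splits the sum over $j$.

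One minor correction: the matrix $R$ is $2d^2\times d^6$, not $2d^2\times d^4$, since $M_{\cN_j}$ contributes $d^2$ columns and $M_{\cN_j}^{T_1}$ contributes $d^4$ columns. This does not affect your argument.
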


\begin{proof}
Since such matrix is of dimension $2d^2\times d^6$, it is of full rank if and only if it is row linearly independent, i.e. 
\begin{equation}
    \sum_{k=0}^{d^2-1}\sum_{j=0}^1a_{kj}\bra{kj}\cdot\sum_{j=0}^1\ket{j}\ox M_{\cN_{j}}\ox M_{\cN_{j}}^{T_1}=\sum_{j=0}^1\sum_{k=0}^{d^2-1}a_{kj}\bra{k}M_{\cN_{j}}\ox M_{\cN_{j}}^{T_1}=0 \implies \forall\,k,j,\ a_{kj}=0.
\end{equation}
Supposed that the left hands holds, considering $M_{\cN_{0}}^{T_1}$ is linear independent from $M_{\cN_{1}}^{T_1}$, we derive that
\begin{equation}
    \forall\,j,\ \sum_{k=0}^{d^2-1}a_{kj}\bra{k}M_{\cN_{j}}=0.
\end{equation}
Moreover, by each $M_{\cN_{j}}$ invertible, we conclude that each $a_{kj}$ vanishes, i.e. such matrix is of full rank.
\end{proof}
\begin{corollary}\label{Cor:2_channel_inverse_mat}
For two different channels $\cN_0,\cN_1$, if they both have inverse maps, then there always exists $M_{\widetilde \cC'}\in\mathbb R^{d^6\times d^2}$ satisfying
\begin{equation}
    \left(\sum_{j=0}^1\ket{j}\ox M_{\cN_j}\ox M_{\cN_j}^{T_1}\right)\cdot M_{\widetilde \cC'}^{T_2}=\sum_{j=0}^1\ket{j}\ox I_{d^2}.
\end{equation}
\end{corollary}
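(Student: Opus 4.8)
The plan is to reduce the claim to the elementary linear-algebra fact that a linear system with a full-row-rank coefficient matrix is consistent for every right-hand side, with the requisite rank statement supplied by Lemma~\ref{Lem:2d2_rank}. The only genuine content has therefore already been done upstream; what remains is bookkeeping.

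First I would name the coefficient matrix
\begin{equation}
    A:=\sum_{j=0}^1\ket{j}\ox M_{\cN_j}\ox M_{\cN_j}^{T_1}\in\mathbb{R}^{2d^2\times d^6},
\end{equation}
which is real because each $M_{\cN_j}$ has real entries in the orthonormal Hermitian basis $\cB_d$. The sought identity then reads $A\cdot M_{\widetilde\cC'}^{T_2}=\sum_{j=0}^1\ket{j}\ox I_{d^2}=:B$, a linear equation for the unknown real $d^6\times d^2$ matrix $X:=M_{\widetilde\cC'}^{T_2}$. Since $\cN_0$ and $\cN_1$ are distinct and both invertible, Lemma~\ref{Lem:2d2_rank} gives $\rank A=2d^2$, i.e.\ $A$ has full row rank; equivalently, regarded as a map $\mathbb{R}^{d^6}\to\mathbb{R}^{2d^2}$, $A$ is surjective.

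Next I would solve $AX=B$ column by column: for each of the $d^2$ columns $b$ of $B$, surjectivity of $A$ yields a real vector $x\in\mathbb{R}^{d^6}$ with $Ax=b$, and assembling these columns produces a real $X\in\mathbb{R}^{d^6\times d^2}$ with $AX=B$. Finally I would observe that the reshaping $X=M_{\widetilde\cC'}^{T_2}$ imposes no constraint: reading off the entries of $X$ as $\widetilde\cC'_{klrs}$ over $\mathbf{(k,l,r,s)}\in\{0,\dots,d^2-1\}^4$ defines a legitimate array, hence a legitimate $M_{\widetilde\cC'}$, completing the construction.

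I do not anticipate a real obstacle, since the hard step (the rank computation) is exactly Lemma~\ref{Lem:2d2_rank}. The points needing care are purely organizational: confirming that $A$ and $B$ are real so that a \emph{real} solution exists, and confirming that every $d^6\times d^2$ matrix arises as some $M_{\widetilde\cC'}^{T_2}$, so that no trace-preservation or causality constraint is secretly being required at this stage — those constraints are enforced separately when $\widetilde\cC'$ is later promoted to an honest $1$-slot virtual comb in the proof of Theorem~\textcolor{blue}{3}. Together with Corollary~\ref{Cor:comb_inverse_LS}, this existence result furnishes the linear map that inverts both channels, which is precisely what the subsequent argument needs.
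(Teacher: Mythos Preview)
Your argument is correct and essentially identical to the paper's: both invoke Lemma~\ref{Lem:2d2_rank} to get full row rank of the coefficient matrix and then conclude solvability of the linear system, the paper via an explicit right inverse $V$ and you via column-by-column surjectivity. Your additional remarks about realness and the absence of comb constraints at this stage are accurate and match how the paper separates this corollary from the subsequent Corollary~\ref{Cor:2_channel_inverse_comb}.
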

\begin{proof}
By Lemma \ref{Lem:2d2_rank}, the rows of matrix $\sum_{j=0}^1\ket{j}\ox M_{\cN_j}\ox M_{\cN_j}^{T_1}$ are of full rank, so it has right inverse $V\in\mathbb R^{d^6\times 2d^2}$, i.e.
\begin{equation}
    \left(\sum_{j=0}^1\ket{j}\ox M_{\cN_j}\ox M_{\cN_j}^{T_1}\right)\cdot V=I_{2d^2}.
\end{equation}
Then let
\begin{equation}
    M_{\widetilde \cC'}^{T_2}:=V\cdot\left(\sum_{j=0}^1\ket{j}\ox I_{d^2}\right),
\end{equation}
and we find
\begin{equation}
    \left(\sum_{j=0}^1\ket{j}\ox M_{\cN_j}\ox M_{\cN_j}^{T_1}\right)\cdot M_{\widetilde \cC'}^{T_2}
    =I_{2d^2}\cdot\left(\sum_{j=0}^1\ket{j}\ox I_{d^2}\right)=\sum_{j=0}^1\ket{j}\ox I_{d^2}.
\end{equation}
\end{proof}
\begin{corollary}\label{Cor:2_channel_inverse_comb}
For two different channels $\cN_0,\cN_1$, if they both have inverse maps, then there always exists a virtual comb $\VComb$ satisfying
\begin{equation}\label{Eq:1slot_inverse_LS}
    \left(\sum_{j=0}^1\ket{j}\ox M_{\cN_j}\ox M_{\cN_j}^{T_1}\right)\cdot M_\VComb^{T_2}=\sum_{j=0}^1\ket{j}\ox I_{d^2}.
\end{equation}
\end{corollary}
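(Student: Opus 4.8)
The plan is to combine the two facts already established: Corollary~\ref{Cor:2_channel_inverse_mat}, which produces a real matrix solving the linear system (though not necessarily a legitimate comb), and Lemma~\ref{lemma:Pauli_TP}, which characterizes exactly when a matrix is the $M$-representation of a genuine $1$-slot virtual comb. The decisive observation is that the virtual-comb constraint in Lemma~\ref{lemma:Pauli_TP} touches only the coefficients $\widetilde\cC_{klr0}$, i.e.\ the single column $s=0$ of $M_\VComb^{T_2}$; every coefficient $\widetilde\cC_{klrs}$ with $s\neq 0$ is free. Since the coefficient matrix $\sum_{j}\ket{j}\ox M_{\cN_j}\ox M_{\cN_j}^{T_1}$ acts column-by-column on $M_\VComb^{T_2}$, the system~\eqref{Eq:1slot_inverse_LS} decouples into $d^2$ independent column equations indexed by $s$, and the comb constraint interferes with only the one at $s=0$.

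First I would dispose of the columns $s\neq 0$: there is no comb constraint on them, so I simply retain whatever values the solution $M_{\widetilde\cC'}$ of Corollary~\ref{Cor:2_channel_inverse_mat} supplies (equivalently, the full row rank $2d^2$ from Lemma~\ref{Lem:2d2_rank} makes each such column solvable). The entire content then lies in the $s=0$ column, where I must exhibit values that simultaneously obey the comb pattern $\widetilde\cC_{klr0}=0$ for $r\neq 0$ and $\widetilde\cC_{k000}=\delta_{k0}$, and solve that column's equation $\sum_{klr}(M_{\cN_j})_{ak}(M_{\cN_j})_{lr}\widetilde\cC_{klr0}=\delta_{a0}$ for every $j\in\{0,1\}$ and every $a$.

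The key step — the only place anything nontrivial happens — is verifying that the comb-constrained column automatically solves this equation, which rests on trace preservation of the two channels. By Lemma~\ref{lemma:Pauli_TP}, $\cN_j$ being TP means the zeroth column of $M_{\cN_j}$ is $\ket{0}$, i.e.\ $(M_{\cN_j})_{l0}=\delta_{l0}$. Inserting the comb values into the left-hand side, only the $r=0$ terms survive, and then $(M_{\cN_j})_{l0}=\delta_{l0}$ collapses the $l$-sum to $l=0$; with $\widetilde\cC_{k000}=\delta_{k0}$ the expression reduces to $(M_{\cN_j})_{a0}$, which equals $\delta_{a0}$ by trace preservation again, matching the right-hand side. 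Notice this holds for any values of the unconstrained coefficients $\widetilde\cC_{kl00}$ with $l\neq 0$, so these may simply be set to zero.

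Finally I would assemble the comb: take the solution $M_{\widetilde\cC'}$ of Corollary~\ref{Cor:2_channel_inverse_mat}, overwrite its $s=0$ column with the comb-constrained values just constructed, and leave the $s\neq 0$ columns untouched. Column-decoupling guarantees the patched matrix still solves~\eqref{Eq:1slot_inverse_LS}, while Lemma~\ref{lemma:Pauli_TP} certifies that, because its $s=0$ coefficients now meet the required pattern, it is the $M$-representation of a legitimate $1$-slot virtual comb $\VComb$. The main obstacle is conceptual rather than computational: one must recognize that imposing the comb (causality/TP) structure is ``free'' precisely because it constrains only the $s=0$ column, and that this column is forced into consistency by the channels' own trace preservation; once this is seen, the verification is a one-line index manipulation.
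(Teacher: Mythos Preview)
Your proof is correct and follows essentially the same approach as the paper: the paper writes your ``overwrite the $s=0$ column'' construction compactly as $M_\VComb^{T_2} := M_{\VComb'}^{T_2}\cdot(I_{d^2}-\ketbra{0}{0})+\ketbra{000}{0}$, and then verifies the $s=0$ column via the trace-preservation identities $M_{\cN_j}\ket{0}=\ket{0}$ and $M_{\cN_j}^{T_1}\ket{00}=1$, exactly as you do in index form.
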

\begin{proof}
By Corollary \ref{Cor:2_channel_inverse_mat}, there always exists $M_{\VComb'}\in\mathbb R^{d^6\times d^2}$ satisfying
\begin{equation}
    \left(\sum_{j=0}^1\ket{j}\ox M_{\cN_j}\ox M_{\cN_j}^{T_1}\right)\cdot M_{\VComb'}^{T_2}=\sum_{j=0}^1\ket{j}\ox I_{d^2}.
\end{equation}
Denote
\begin{equation}
    M_\VComb^{T_2}:=M_{\VComb'}^{T_2}\cdot(I_{d^2}-\ketbra{0}{0})+\ketbra{000}{0},
\end{equation}
and we claim $\VComb$ is a virtual comb satisfying \eqref{Eq:1slot_inverse_LS}.
Firstly, we could check \eqref{Eq:1slot_inverse_LS} as following:
\begin{align}
    \left(\sum_{j=0}^1\ket{j}\ox M_{\cN_j}\ox M_{\cN_j}^{T_1}\right)\cdot M_{\VComb}^{T_2}
    =&\left(\sum_{j=0}^1\ket{j}\ox M_{\cN_j}\ox M_{\cN_j}^{T_1}\right)\cdot \left(M_{\VComb'}^{T_2}\cdot(I_{d^2}-\ketbra{0}{0})+\ketbra{000}{0}\right)\\
    =&\sum_{j=0}^1\ket{j}\ox I_{d^2}\cdot(I_{d^2}-\ketbra{0}{0})+\left(\sum_{j=0}^1\ket{j}\ox M_{\cN_j}\ox M_{\cN_j}^{T_1}\right)\cdot\ketbra{000}{0}\\
    =&\sum_{j=0}^1\ket{j}\ox (I_{d^2}-\ketbra{0}{0})+\sum_{j=0}^1\ket{j}\ox M_{\cN_j}\ket{0}\ox M_{\cN_j}^{T_1}\ket{00}\bra{0}\\
    =&\sum_{j=0}^1\ket{j}\ox (I_{d^2}-\ketbra{0}{0})+\sum_{j=0}^1\ket{j}\ox \ket{0}\ox\bra{0}\\
    =&\sum_{j=0}^1\ket{j}\ox I_{d^2},
\end{align}
where $M_\cN\ket{0}=\ket{0}$, $M_\cN^{T_1}\ket{00}=1$ comes from $\cN_{j0}=\delta_{j0}$ in Lemma \ref{lemma:Pauli_TP}.
Since $M_\VComb^{T_2}\ket{0}=\ket{000}$, i.e. 
\begin{equation}
    \VComb_{klr0}=\delta_{k0}\delta_{l0}\delta_{r0},
\end{equation}
$\VComb$ always satisfies the condition for a virtual comb by Lemma \ref{lemma:Pauli_TP}.
As above all, $\VComb$ is already a virtual comb satisfying \eqref{Eq:1slot_inverse_LS} and we are done.
\end{proof}

\renewcommand\theproposition{\textcolor{blue}{3}}
\begin{theorem}
  For any two invertible quantum channels $\cN_1$, $\cN_2$, there exists a $1$-slot virtual comb $\VComb$ satisfying
  \begin{equation}
    \VComb(\cN_i) = \cN_i^{-1}, \quad \forall \, i=1,2.
  \end{equation}
\end{theorem}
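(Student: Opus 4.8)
The plan is to recast the operator identity $\VComb(\cN_i)=\cN_i^{-1}$ as a single linear system in the Hermitian-decomposition coefficients and then extract from its solution set a representative that is a genuine virtual comb. First I would apply the characterization lemma for $1$-slot inversion, which says that $\VComb(\cN)=\cN^{-1}$ exactly when $(M_\cN\ox M_{\cN}^{T_1})\cdot M_\VComb^{T_2}=I_{d^2}$. Imposing this simultaneously for $\cN_1$ and $\cN_2$ and stacking the two blocks as in Corollary~\ref{Cor:comb_inverse_LS} reduces the whole theorem to the solvability (in the unknown $M_\VComb^{T_2}$) of
\begin{equation*}
\left(\sum_{j=0}^1\ket{j}\ox M_{\cN_j}\ox M_{\cN_j}^{T_1}\right)\cdot M_\VComb^{T_2}=\sum_{j=0}^1\ket{j}\ox I_{d^2}.
\end{equation*}

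Second, I would prove the system is solvable over real matrices by showing the coefficient matrix, of size $2d^2\times d^6$, has full row rank $2d^2$; this is Lemma~\ref{Lem:2d2_rank}. The rank argument rests on two facts: each $M_{\cN_j}$ is invertible because $\cN_j$ is an invertible channel, and $M_{\cN_0}^{T_1}$, $M_{\cN_1}^{T_1}$ are linearly independent because $\cN_1\neq\cN_2$. Full row rank supplies a right inverse, and hence some matrix solution $M_{\widetilde \cC'}^{T_2}$ of the block equation (Corollary~\ref{Cor:2_channel_inverse_mat}).

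The main obstacle is the third step: an arbitrary matrix solution need not be the coefficient data of an admissible virtual comb, which must obey the causality and trace-preservation constraints of Lemma~\ref{lemma:Pauli_TP}, namely $\widetilde\cC_{klr0}=\delta_{k0}\delta_{l0}\delta_{r0}$. To repair this I would replace the solution by
\begin{equation*}
M_\VComb^{T_2}:=M_{\widetilde \cC'}^{T_2}\cdot(I_{d^2}-\ketbra{0}{0})+\ketbra{000}{0},
\end{equation*}
which hard-codes the offending $\ket{0}$-column into the required shape so the comb constraints hold by construction. The key check is that this surgery does not disturb the inversion equation: since each $\cN_j$ is trace-preserving one has $M_{\cN_j}\ket{0}=\ket{0}$ and $M_{\cN_j}^{T_1}\ket{00}=1$, so the reset column contributes exactly $\sum_{j}\ket{j}\ox\ket{0}\bra{0}$, matching the piece of the right-hand side that is removed by the factor $(I_{d^2}-\ketbra{0}{0})$. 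This is precisely Corollary~\ref{Cor:2_channel_inverse_comb}.

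Finally, undoing the first reduction, the $\VComb$ built this way is a legitimate $1$-slot virtual comb satisfying $\VComb(\cN_i)=\cN_i^{-1}$ for $i=1,2$. I expect the genuinely structural work to lie in the comb-admissibility repair: producing \emph{some} matrix solution is routine linear algebra once the rank is computed, but guaranteeing a comb-shaped solution among them, and verifying that the reset is compatible with trace preservation of the two channels, is where the argument must be careful.
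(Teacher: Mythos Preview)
Your proposal is correct and follows essentially the same approach as the paper: reduce to the stacked linear system via Corollary~\ref{Cor:comb_inverse_LS}, establish solvability through the full-row-rank Lemma~\ref{Lem:2d2_rank} and Corollary~\ref{Cor:2_channel_inverse_mat}, and then perform exactly the column-reset $M_\VComb^{T_2}=M_{\widetilde \cC'}^{T_2}(I_{d^2}-\ketbra{0}{0})+\ketbra{000}{0}$ of Corollary~\ref{Cor:2_channel_inverse_comb} to enforce the comb constraints, using trace preservation of the $\cN_j$ to verify the repaired solution still solves the inversion equation. The only minor caveat is that both your argument and the paper's lemmas tacitly assume $\cN_1\neq\cN_2$ for the linear-independence step, with the equal-channels case being trivial.
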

\begin{proof}
The argument can be directly derived from Corollary \ref{Cor:comb_inverse_LS} and Corollary \ref{Cor:2_channel_inverse_comb}.
\end{proof}
\renewcommand{\theproposition}{S\arabic{proposition}}

\subsection{Reversing unitary channels}\label{appendix:unitary}

\renewcommand\theproposition{\textcolor{blue}{5}}
\setcounter{proposition}{\arabic{proposition}-1}
\begin{proposition}
For any dimension $d$, there exists a $1$-slot virtual comb $\VComb$ that transforms all qudit-unitary channels $\cU_d$ into their inverse $\cU_d^{-1}$, i.e., $\VComb(\cU_d)(\cdot) = U_d^{\dag}(\cdot)U_d$.
\end{proposition}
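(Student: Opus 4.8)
The plan is to argue entirely in the Hermitian (Pauli transfer matrix) representation of Appendix~\ref{appendix:1_decomposition}, in the same spirit as the proof of Theorem~3. Write $\cU\equiv\cU_d$ and let $\widetilde{\cU}_{jk}=\tr[\sigma_k U\sigma_j U^\dag]$ be the entries of its transfer matrix $M_{\cU}$. First I would record two structural facts about unitary channels. (i) $M_{\cU}$ is real orthogonal: using the completeness identity $\sum_k \tr[\sigma_k A]\tr[\sigma_k B]=\tr[AB]$ one computes $(M_{\cU}M_{\cU}^{T})_{jj'}=\tr[U\sigma_jU^\dag U\sigma_{j'}U^\dag]=\delta_{jj'}$, so by the corollary following Eq.~\eqref{Eq:Pauli_channel_prod} the inverse exists and $M_{\cU^{-1}}=M_{\cU}^{-1}=M_{\cU}^{T}$. (ii) Since $\cU$ is trace-preserving and unital ($\cU(\sigma_0)=\sigma_0$), a direct computation gives $\widetilde{\cU}_{0k}=\widetilde{\cU}_{k0}=\delta_{0k}$, and in particular $\widetilde{\cU}_{00}=1$.

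Next I would translate the goal into a single coordinate identity. By Eq.~\eqref{Eq:Pauli_comb_link_channel}, the transfer matrix of $\VComb(\cU)$ has entries $\bigl(M_{\VComb(\cU)}\bigr)_{ks}=\sum_{lr}\widetilde\cC_{klrs}\,\widetilde{\cU}_{lr}$. Using fact (i), the requirement $\VComb(\cU)=\cU^{-1}$ is equivalent to $M_{\VComb(\cU)}=M_{\cU}^{T}$, i.e.
\[
\sum_{lr}\widetilde\cC_{klrs}\,\widetilde{\cU}_{lr}=\widetilde{\cU}_{sk}\qquad\text{for all }U\text{ and all }k,s .
\]
Thus it suffices to produce a single real tensor $\widetilde\cC_{klrs}$ satisfying this identity for every unitary, whose associated Choi operator is a legitimate virtual comb in the sense of Lemma~\ref{lemma:Pauli_TP}.

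Then I would exhibit the explicit candidate $\widetilde\cC_{klrs}=\delta_{ls}\delta_{rk}$, corrected on the indices where the input channel is blind, namely $\widetilde\cC_{k0r0}:=0$ whenever $r\neq0$. For $s\neq0$ the identity reads $\sum_{lr}\delta_{ls}\delta_{rk}\,\widetilde{\cU}_{lr}=\widetilde{\cU}_{sk}$, which holds for an arbitrary matrix. For $s=0$ the only surviving coefficient is $\widetilde\cC_{0000}=1$, so the left-hand side collapses to $\delta_{k0}\,\widetilde{\cU}_{00}=\delta_{k0}$, while the right-hand side equals $\widetilde{\cU}_{0k}=\delta_{0k}$ by fact (ii); the two agree. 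Hence the inversion identity holds for every unitary channel.

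Finally I would check admissibility via Lemma~\ref{lemma:Pauli_TP}: the correction is designed precisely so that $\widetilde\cC_{klr0}=0$ for $r\neq0$ and $\widetilde\cC_{k000}=\delta_{k0}$, which are exactly the conditions guaranteeing that $\VChoi$ is the Choi operator of a $1$-slot virtual comb (decomposable into quantum combs with real coefficients summing to one). I expect the main obstacle to be this last compatibility step rather than the inversion identity: the \emph{bare} transpose choice $\widetilde\cC_{klrs}=\delta_{ls}\delta_{rk}$ already solves the identity but violates the comb constraint at $(l,s)=(0,0)$, $r\neq0$. The key observation that resolves this is that those offending entries multiply $\widetilde{\cU}_{0r}=0$, so they are free parameters of the inversion problem and may be set to zero to meet the comb conditions without disturbing the identity for any unitary input.
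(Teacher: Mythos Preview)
Your argument is correct and complete: the orthogonality computation for $M_{\cU}$, the reduction of $\VComb(\cU)=\cU^{-1}$ to the coordinate identity $\sum_{lr}\widetilde\cC_{klrs}\widetilde{\cU}_{lr}=\widetilde{\cU}_{sk}$, the candidate $\widetilde\cC_{klrs}=\delta_{ls}\delta_{rk}$, and the correction at $(l,s)=(0,0)$ all check out, and the appeal to Lemma~\ref{lemma:Pauli_TP} (hence Lemma~\ref{lem:virtual_comb_choi}) legitimately certifies the resulting Choi operator as a virtual comb.

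Your route, however, is genuinely different from the paper's. The paper works directly at the Choi level: it writes down two explicit positive operators $V_0,V_1$ (built from $\dketbra{I}{I}^{T}$ and identity blocks on the four systems), verifies they are Choi operators of quantum combs, forms the affine combination $V=\tfrac{d^2}{2}V_0-\tfrac{d^2-2}{2}V_1$, and then checks $V*\dketbra{U}{U}=\dketbra{U^\dag}{U^\dag}$ by a tensor-network computation of four link products. By contrast, you stay entirely in the Hermitian (Pauli transfer) picture and exploit the single structural fact that $M_{\cU}$ is real orthogonal, so that unitary inversion is literally transposition of $M_{\cU}$; the index-swap tensor implements this transposition, and the unitality of $\cU$ is exactly what allows you to zero out the offending entries without disturbing the identity. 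Your approach is more conceptual---it explains \emph{why} one slot suffices (inversion $=$ transpose when the PTM is orthogonal) and generalizes immediately to any family of channels whose PTMs are orthogonal---but it yields only existence via Lemma~\ref{lem:virtual_comb_choi}. The paper's construction, while less illuminating, delivers an explicit decomposition into two quantum combs with coefficients $\tfrac{d^2}{2}$ and $-\tfrac{d^2-2}{2}$, hence an explicit sampling overhead $d^2-1$, which feeds directly into the subsequent analysis around Theorem~\ref{thm:cost_with_fid}.
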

\begin{proof}
For simplicity, we label the systems $\cP,\cI,\cO,\cF$ as $1,2,3,4$, respectively.
Let
\begin{align}
    V_0 &= \frac{1}{d^2-1}\dketbra{I}{I}^{T_{3}}_{13}\ox\dketbra{I}{I}^{T_{4}}_{24} - \frac{1}{d(d^2-1)} I_{13} \ox \dketbra{I}{I}^{T_{4}}_{24} - \frac{1}{d(d^2-1)}\dketbra{I}{I}^{T_{3}}_{13} \ox I_{24} + \frac{1}{d^2-1}I_{1234}, \\
    V_1 &= -\frac{1}{d^2-1}\dketbra{I}{I}^{T_{3}}_{13}\ox\dketbra{I}{I}^{T_{4}}_{24} - \frac{1}{d(d^2-1)}I_{13}\ox \dketbra{I}{I}^{T_{4}}_{24} + \frac{1}{d(d^2-1)}\dketbra{I}{I}^{T_{3}}_{13} \ox I_{24} + \frac{1}{d^2-1} I_{1234}.\label{Eq:traceVO_0}
\end{align}
After simple calculation, we could find $V_0$ and $V_1$ are both positive, and it is straightforward to check that they are Choi operators of two quantum combs $\cC_0, \cC_1$, respectively. 
Then we have
\begin{align}
    V &= \frac{d^2}{2} V_0 - \frac{d^2-2}{2} V_1\\
        &=\dketbra{I}{I}^{T_{3}}_{13}\ox\dketbra{I}{I}^{T_{4}}_{24} - \frac{1}{d(d^2-1)}I_{13}\ox \dketbra{I}{I}^{T_{4}}_{24} - \frac{1}{d}\dketbra{I}{I}^{T_{3}}_{13} \ox I_{24} + \frac{1}{d^2-1} I_{1234}.
\end{align}
Denote $M_0 = \dketbra{I}{I}^{T_{3}}_{13}\ox\dketbra{I}{I}^{T_{4}}_{24}, M_1 = \frac{1}{d} I_{13}\ox \dketbra{I}{I}^{T_{4}}_{24}, M_2 = \frac{1}{d} \dketbra{I}{I}^{T_{3}}_{13} \ox I_{24}$ and $M_3 = I_{1234}/d^2$. Using the tensor network notations~\cite{wood2015tensor}, we have
\begin{equation}
M_0 = 
\diagram{
\draw (-6,1) .. controls (-4.5,1) and (-4.5,-0.8) .. (-2.5,-1);
\draw (-6,0) .. controls (-5,0) and (-4.5,-1.8) .. (-2.5,-2);
\draw (-6,-1) .. controls (-4.5,-1) and (-4.2,0.8) .. (-2.5,1);
\draw (-6,-2) .. controls (-4.5,-2) and (-3.5,-0.2) .. (-2.5,0);
},\quad
\dketbra{U}{U}^T = 
\diagram{
    \draw (-1,0) .. controls (-0.2,0) and (-0.2,-1) .. (-1,-1);
    \draw (1,0) .. controls (0.2,0) and (0.2,-1) .. (1,-1);
    
    \draw (-2.5,-1) -- (-1,-1);
    \draw (2.5,-1) -- (1,-1);
    \draw (-2.5,0) -- (-1,0);
    \draw (2.5,0) -- (1,0);
    \draw[ten] (-2,-0.5) rectangle (-1,0.5);
    \draw[ten] (2,-0.5) rectangle (1,0.5);
    \node at (-1.5,0) {$U^\dag$};	
    \node at (1.5,0) {$U$};
    },\quad
    \dketbra{U^\dag}{U^\dag} = 
    \diagram{
    \draw (-1,0) .. controls (-0.2,0) and (-0.2,-1) .. (-1,-1);
    \draw (1,0) .. controls (0.2,0) and (0.2,-1) .. (1,-1);
    
    \draw (-2.5,-1) -- (-1,-1);
    \draw (2.5,-1) -- (1,-1);
    \draw (-2.5,0) -- (-1,0);
    \draw (2.5,0) -- (1,0);
    \draw[ten] (-2,-1.5) rectangle (-1,-0.5);
    \draw[ten] (2,-1.5) rectangle (1,-0.5);
    \node at (-1.5,-1) {$U^\dag$};	
    \node at (1.5,-1) {$U$};
    }.
\end{equation}
Then it follows that
\begin{equation}
\diagram{
\draw (-6,1) .. controls (-4.5,1) and (-4.5,-0.8) .. (-2.5,-1);
\draw (-6,0) .. controls (-5,0) and (-4.5,-1.8) .. (-2.5,-2);
\draw (-6,-1) .. controls (-4.5,-1) and (-4.2,0.8) .. (-2.5,1);
\draw (-6,-2) .. controls (-4.5,-2) and (-3.5,-0.2) .. (-2.5,0);

\draw (-2.5,-2) -- (2.5,-2);

\draw (-6,0) .. controls (-7,0) and (-7,1.8) .. (-6,1.8);
\draw (2.5,0) .. controls (3.5,0) and (3.5,1.8) .. (2.5,1.8);
    \draw (-6,1.8) -- (2.5,1.8);

\draw (-6,-1) .. controls (-8,-0.5) and (-8,2.6) .. (-6,2.6);
\draw (2.5,-1) .. controls (4.5,-0.5) and (4.5,2.6) .. (2.5,2.6);
    \draw (-6,2.6) -- (2.5,2.6);

    \draw (-2.5,1) -- (2.5,1);
    \draw (-1,0) .. controls (-0.2,0) and (-0.2,-1) .. (-1,-1);
    \draw (1,0) .. controls (0.2,0) and (0.2,-1) .. (1,-1);
    \draw (-2.5,-1) -- (-1,-1);
    \draw (2.5,-1) -- (1,-1);
    \draw (-2.5,0) -- (-1,0);
    \draw (2.5,0) -- (1,0);

    \draw[ten] (-2,-0.5) rectangle (-1,0.5);
    \draw[ten] (2,-0.5) rectangle (1,0.5);

    \node at (-1.5,0) {$U^\dag$};
    \node at (1.5,0) {$U$};
    }
    = \diagram{
    \draw (-1,0) .. controls (-0.2,0) and (-0.2,-1) .. (-1,-1);
    \draw (1,0) .. controls (0.2,0) and (0.2,-1) .. (1,-1);
    
    \draw (-2.5,-1) -- (-1,-1);
    \draw (2.5,-1) -- (1,-1);
    \draw (-2.5,0) -- (-1,0);
    \draw (2.5,0) -- (1,0);
    \draw[ten] (-2,-1.5) rectangle (-1,-0.5);
    \draw[ten] (2,-1.5) rectangle (1,-0.5);
    \node at (-1.5,-1) {$U^\dag$};	
    \node at (1.5,-1) {$U$};
    },
\end{equation}
which gives $M_0 * \dketbra{U}{U} = \dketbra{U^\dag}{U^\dag}$. Similarly, we have
\begin{equation}
    M_1 * \dketbra{U}{U} = I_{\cP\cF}/d, M_2 * \dketbra{U}{U} = I_{\cP\cF}/d, M_3 * \dketbra{U}{U} = I_{\cP\cF}/d,
\end{equation}
which follows
\begin{equation}
    V * \dketbra{U}{U} = \dketbra{U^\dag}{U^\dag} - \frac{1}{d(d^2-1)} I_{\cP\cF} - \frac{1}{d} I_{\cP\cF} + \frac{d^2}{d(d^2-1)} I_{\cP\cF} = \dketbra{U^\dag}{U^\dag}.
\end{equation}
\end{proof}
\renewcommand{\theproposition}{S\arabic{proposition}}

\renewcommand\theproposition{\textcolor{blue}{6}}
\begin{theorem}
    The optimal sampling overhead for the $n$-slot virtual comb that can exactly reverse all $d$-dimensional unitary operations satisfies 
    \begin{equation}
        \nu(d,n) = \frac{2}{F_{\rm opt}(d,n)}-1,\quad \forall d\geq 2, n\geq 1,
    \end{equation}
    where $F_{\rm opt}(d,n)$ is the optimal average channel fidelity of reversing all $d$-dimensional unitary operations with an $n$-slot quantum comb.
\end{theorem}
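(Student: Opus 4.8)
The plan is to treat both $\nu(d,n)$ and $F_{\rm opt}(d,n)$ as optimizations against the \emph{same} performance operator $\Omega$, and to collapse the bilinear program defining $\nu$ into a scalar problem in the achievable fidelities. Writing a feasible virtual comb as $\VComb=(1+\eta)\cC_0-\eta\cC_1$ with Choi operators $C_0,C_1$ of genuine $n$-slot combs and $\eta\ge0$, the exact-inversion constraint of SDP~\eqref{SDP:unitary_inv} reads $\tr[\VChoi\Omega]=(1+\eta)\tr[C_0\Omega]-\eta\tr[C_1\Omega]=1$. The key observation is that as $C$ ranges over the convex, compact set of $n$-slot combs, the scalar $\tr[C\Omega]$ ranges over a closed interval $[F_{\min},F_{\rm opt}]$: the endpoint $F_{\rm opt}$ is its definition, and $F_{\min}=\min_C\tr[C\Omega]\ge0$ since both $C$ and $\Omega$ are positive. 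Because $C_0$ and $C_1$ may be chosen independently, computing $\nu$ reduces to minimizing $2\eta+1$ over $\eta\ge0$ and $a_0,a_1\in[F_{\min},F_{\rm opt}]$ subject to $(1+\eta)a_0-\eta a_1=1$.

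For the lower bound $\nu\ge 2/F_{\rm opt}-1$ I would use positivity alone: from the constraint, $1=(1+\eta)\tr[C_0\Omega]-\eta\tr[C_1\Omega]\le(1+\eta)F_{\rm opt}$, using $\tr[C_1\Omega]\ge0$ and $\tr[C_0\Omega]\le F_{\rm opt}$. Hence $1+\eta\ge 1/F_{\rm opt}$, i.e. $2\eta+1\ge 2/F_{\rm opt}-1$. This half requires nothing beyond $\Omega\ge0$ and the definition of $F_{\rm opt}$. For the matching upper bound I would exhibit an explicit feasible point: take $\cC_0$ to be an optimal fidelity comb, so $\tr[C_0\Omega]=F_{\rm opt}$ (attained by compactness), choose $\cC_1$ with $\tr[C_1\Omega]=0$, and set $\eta=1/F_{\rm opt}-1\ge0$ (nonnegative since $F_{\rm opt}\le1$). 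Then $\tr[\VChoi\Omega]=(1+\eta)F_{\rm opt}=1$, so $\VComb$ is feasible with cost $2\eta+1=2/F_{\rm opt}-1$. Equivalently, the minimizer of the reduced program is $\eta=(1-F_{\rm opt})/(F_{\rm opt}-F_{\min})$, which equals $1/F_{\rm opt}-1$ \emph{precisely} when $F_{\min}=0$.

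Establishing $F_{\min}=0$ — the existence of a genuine $n$-slot comb with vanishing average inversion fidelity — is the main obstacle. Since the pointwise fidelities are nonnegative, $\tr[C\Omega]=0$ forces $\supp(C)\subseteq\ker\Omega$, so I would argue that $\ker\Omega$ is large and that the comb cone meets it. Concretely, $\Omega\propto\int dU\,\proj{v_U}$ with $v_U$ built from one copy of $\dket{U^\dagger}_{\cP\cF}$ and the slot vectors $\dket{U}_{\cI\cO}$ (up to transposes), so $\supp\Omega=\overline{\operatorname{span}}\{v_U\}$ is confined to the span of degree-two monomials in the entries of $U,\bar U$, i.e. to the isotypic components carried by the relevant $U^{\otimes}\otimes\bar U^{\otimes}$ representation; a dimension count already shows this is a proper subspace of the full comb space for $n=1$. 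I would then produce a positive operator supported in $\ker\Omega$ that obeys the causal constraints~\eqref{Eq:no-sig} of Lemma~\ref{lem:comb_def}, for instance by group-twirling a suitable seed operator so that it lands in an irreducible block orthogonal to $\supp\Omega$ while respecting the no-signalling conditions.

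Combining the two bounds gives $\nu(d,n)=2/F_{\rm opt}(d,n)-1$, and the quoted consistency check $F_{\rm opt}(d,1)=2/d^2\Rightarrow\nu(d,1)=d^2-1$ confirms the arithmetic. I expect the cleanest route in the write-up to bypass the explicit zero-fidelity comb entirely by dualizing SDP~\eqref{SDP:unitary_inv}: its Lagrange dual, after absorbing the scaling $\eta\leftrightarrow$ fidelity, coincides with the fidelity program $\max_C\tr[C\Omega]$, so strong duality delivers the identity directly while automatically certifying $F_{\min}=0$ through the dual optimal variables.
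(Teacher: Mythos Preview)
Your overall strategy matches the paper's proof exactly: the lower bound via $(1+\eta)\tr[C_0\Omega]-\eta\tr[C_1\Omega]=1\le(1+\eta)F_{\rm opt}$ is identical to theirs, and for the upper bound both you and the paper reduce everything to exhibiting a genuine $n$-slot comb $C'$ with $\tr[C'\Omega]=0$, then set $\VChoi=\tfrac1{F_{\rm opt}}\widehat C-\tfrac{1-F_{\rm opt}}{F_{\rm opt}}C'$.

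The gap is that you do not actually produce such a $C'$. Your representation-theoretic sketch (dimension count on $\supp\Omega$, then twirl a seed into $\ker\Omega$) is a plausible direction, but a dimension count only tells you $\ker\Omega$ is nonzero; you still owe a positive operator in $\ker\Omega$ that satisfies \emph{all} of the causal constraints~\eqref{Eq:no-sig}, and ``twirl a suitable seed'' is not yet a proof. Your alternative, dualizing SDP~\eqref{SDP:unitary_inv}, is also not substantiated: after the linearization described below Eq.~\eqref{Eq:chan_comb_cons} the program is an SDP in $(C_0,C_1,\eta)$, but its Lagrange dual is not literally the fidelity program $\max_C\tr[C\Omega]$, and you would need to derive it and verify strong duality before that route closes.

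The paper sidesteps all of this by explicit construction. It takes the $1$-slot comb $V_1$ already built (and verified positive and causal) in the proof of Proposition~\ref{prop:uni_chan}, Eq.~\eqref{Eq:traceVO_0}, checks directly that $\tr\bigl[\dketbra{U^\dag}{U^\dag}\,(V_1*\dketbra{U}{U})\bigr]=0$ for every $U\in SU(d)$, and extends to $n$ slots by linking with $n-1$ trivial combs $W_k=I_{\cF_k\cO_{k+1}}\ox\Phi_{\cI_{k+1}\cF_{k+1}}$ that simply ignore the remaining channel calls. Since the building blocks were already on the shelf, this gives $F_{\min}=0$ with essentially no new work, whereas your abstract approaches would each require a nontrivial additional argument.
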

\begin{proof}
    First, we are going to show 
    \begin{equation}\label{Eq:nu_geq_fid}
        \nu(d,n)\geq \frac{2}{F_{\rm opt}(d,n)} -1.
    \end{equation}
    Suppose $\{\widehat{\eta}, \widehat{\cC}_0, \widehat{\cC}_1\}$ is a feasible solution to the SDP \textcolor{blue}{(6)} of calculating $\nu(d,n)$ in the main text, where $\widehat{\cC}_0, \widehat{\cC}_1$ have Choi operators $\widehat{C}_0,\widehat{C}_1$, respectively. It follows
    \begin{equation}
        (1+\widehat{\eta})\tr[\widehat{C}_0\Omega] - \widehat{\eta}\tr[\widehat{C}_1\Omega] = 1,
    \end{equation}
    where $\Omega$ is the performance operator defined by 
    \begin{equation}
        \Omega:= \frac{1}{d^2}\int dU \dketbra{U^\dag}{U^\dag}_{\cP\cF}\ox \dketbra{U^*}{U^*}^{\ox n}_{\cI\cO}.
    \end{equation}
    Here $dU$ corresponds to the Haar measure, and $\dket{U} = (U\otimes I)\dket{I}$ corresponds to the Choi operator of unitary $\cU$.
    Notice that $\widehat{\eta} \geq 0$ and $\tr[\widehat{C}_1\Omega]\geq 0$ will yield $\tr[\widehat{C}_0\Omega]\geq 1/(1+\widehat{\eta})$. Since $\widehat{\cC}_0$ is a quantum comb, it is a feasible solution to the SDP for maximum fidelity which gives $F_{\rm opt}(d,n) \geq 1/(1+\widehat{\eta})$. Therefore, we have $2\widehat{\eta} + 1\geq 2/F_{\rm opt}(d,n)-1$ and $\nu(d,n)\geq 2/F_{\rm opt}(d,n)-1$. 
    Second, we shall prove that 
    \begin{equation}\label{Eq:nu_leq_fid}
        \nu(d,n)\leq \frac{2}{F_{\rm opt}(d,n)} -1.
    \end{equation}
    Before that, we demonstrate that there always exists a quantum comb $\cC'$ with a Choi operator $C'$ such that $\tr[C'\Omega] = 0$ as follows. Consider a Choi operator $C' = V * W_1 * W_2*\cdots *W_{n-1}$ where $V$ is as defined in Eq.~\eqref{Eq:traceVO_0}
    \begin{equation}
    \begin{aligned}
    V = &-\frac{1}{d^2-1}\dketbra{I}{I}^{T_{\cO_1}}_{\cP\cO_1}\ox\dketbra{I}{I}^{T_{\cF_1}}_{\cI_1\cF_1} - \frac{1}{d^2-1}I_{\cP\cO_1}\ox \dketbra{I}{I}^{T_{\cF_1}}_{\cI_1\cF_1}\\ 
    &+ \frac{1}{d^2-1}\dketbra{I}{I}^{T_{\cO_1}}_{\cP\cO_1} \ox I_{\cI_1\cF_1} + \frac{1}{d^2-1} I_{\cP\cI_1\cO_1\cF_1},
    \end{aligned}
    \end{equation}
    and $W_k = I_{\cF_k\cO_{k+1}}\ox \Phi_{\cI_{k+1}\cF_{k+1}}, \cF_{n} := \cF$. We note that $V$ is semidefinite positive and satisfies the linear constraints of a quantum comb, and $W_k$ corresponds to a 1-slot comb which outputs an identity channel for any input channel. It is straightforward to check that for any input unitary channel $\cU_d$
    \begin{equation}
    \begin{aligned}
        \tr \left[\dketbra{U^\dag}{U^\dag} \cdot (C'* \dketbra{U}{U}^{\ox n})\right] &= \tr\Big[\dketbra{U^\dag}{U^\dag} \Big(V*\dketbra{U}{U}*\overbrace{\dketbra{I}{I}*\cdots *\dketbra{I}{I}}^{n-1}\Big)\Big]\\
        &= \tr\Big[\dketbra{U^\dag}{U^\dag} (V*\dketbra{U}{U})\Big]\\
        &= 0, \quad\forall U\in SU(d),
    \end{aligned}
    \end{equation}
    which indicates $\tr[C'\Omega] = 0$.
    Now suppose comb $\widehat{\cC}$ is a feasible solution to the SDP for maximum fidelity with a Choi operator $\widehat{C}$. We denote $f = \tr[\widehat{C}\Omega]$ and construct a virtual comb $\widetilde{\cC}'$ with a Choi operator 
    \begin{equation}
        \widetilde{C}' = \frac{1}{f}\widehat{C} - \frac{1-f}{f}C'
    \end{equation}
    We can see $\tr[\widetilde{C}'\Omega]=1$ and $\widetilde{\cC}'$ is a feasible solution to the SDP of calculating $\nu(d,n)$ in the main text. Therefore, we have $\nu(d,n)\leq 1/f + (1-f)/f = 2/f - 1$ and $\nu(d,n)\leq  2/F_{\rm opt}(d,n) - 1$. Combining Eq.~\eqref{Eq:nu_geq_fid} and Eq.~\eqref{Eq:nu_leq_fid}, we conclude that
    \begin{equation}
        \nu(d,n)= \frac{2}{F_{\rm opt}(d,n)} -1.
    \end{equation}
\end{proof}
\renewcommand{\theproposition}{S\arabic{proposition}}

Based on the result for the optimal channel fidelity shown in Ref.~\cite{Quintino2022}, we provide the optimal sampling overhead for different $d$ and $n$ in Table~\ref{tab:uinv_opt_cost}. 
An interesting case occurs when $d=2$, $n=3$ where we could find that $n\cdot\nu^2(2,3) \leq 3.9235$, which is strictly smaller than the $4$ required by the deterministic and exact method by a conventional quantum comb.
\begin{table}[H]
    \centering
    \tabcolsep=0.7cm
    \renewcommand\arraystretch{1.5}
    \footnotesize
    \begin{tabular}{cccccc}
    \specialrule{.12em}{0pt}{.1ex}
    & $n=1$ & $n=2$ & $n=3$ & $n=4$ & $n=5$ \\
    \hline
    $d=2$ & 3.0000 & 1.6667 & 1.1436 & 1.0000 & 1.0000 \\
    \hline 
    $d=3$ & 8.0000 & 5.0000 & 3.5000 & 2.6000 & 2.0000 \\ 
    \hline 
    $d=4$ & 15.0000 & 9.6667 & 7.0000 & 5.4000 & 4.3691 \\ 
    \hline 
    $d=5$ & 24.0000 & 15.6667 & 11.5000 & 9.0000 & 7.3333 \\
    \hline 
    $d=6$ & 35.0000 & 23.0000 & 17.0000 & 13.3988 & 11.0000\\ 
    \bottomrule[1pt]
    \end{tabular}
    \caption{The optimal sampling overhead for an $n$-slot virtual comb that can exactly reverse arbitrary $d$-dimensional unitary operations.}
    \label{tab:uinv_opt_cost}
\end{table}

\begin{figure}
    \centering
    \includegraphics[width=.65\linewidth]{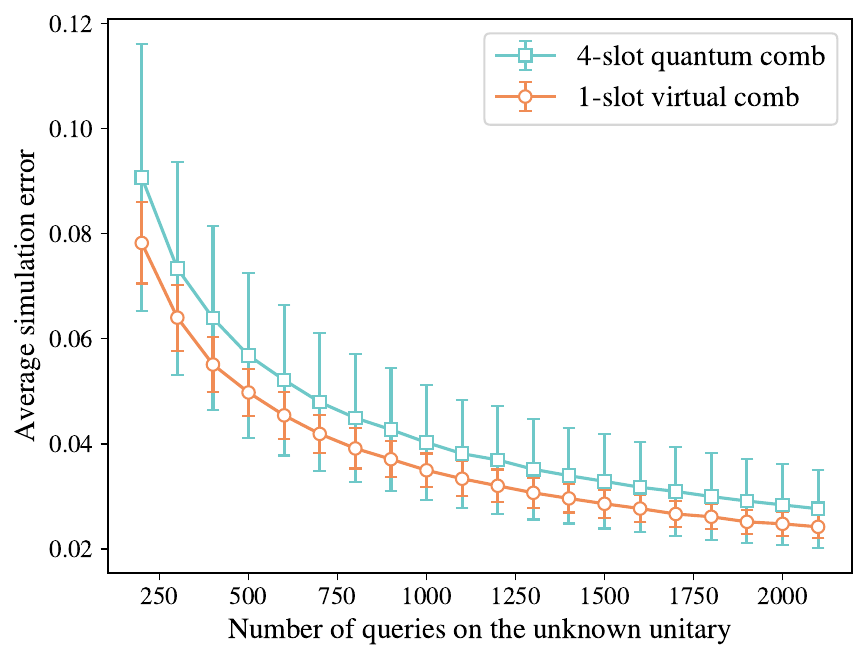}
    \caption{Average error for 200 randomly generated single-qubit unitary operations across various numbers of unitary queries. The data points represent the average absolute error between the empirical mean value and the theoretical expectation value, $\tr[ZU^\dag \ketbra{0}{0} U]$, for a given number of queries to the unknown unitary. The $x$-axis corresponds to different numbers of queries. The protocols using a 1-slot virtual comb query the unitary a single time, whereas the protocols using a 4-slot quantum comb conduct four queries, in a single measurement shot.}
    \label{fig:mea_simulation}
\end{figure}

Although the sufficient querying number of the unitary for $n=1$ is determined by $1\cdot\nu^2(d,1)$, scaling as $\cO(d^4)$, which is worse than the $\cO(d^2)$ required by the deterministic and exact protocol~\cite{chen2024quantum}, here we show that when the state $\rho$ or the observable $O$ is given, the query complexity could be significantly reduced. 
A specific example is shown in Fig.~\ref{fig:mea_simulation} for qubit wherein the system is initially prepared in the state $\ketbra{0}{0}$ and the observable $O$ is set to be the qubit Pauli $Z$ operator. 
In this setup, we show that the 1-slot virtual protocol has a better performance in both average simulation error and standard deviation under the same number of queries of the unknown unitary gate in estimating the quantity $\tr[ZU^{\dag}\ketbra{0}{0} U]$, compared with the deterministic and exact protocol~\cite{Yoshida2023,chen2024quantum}.

The previous method in Ref.~\cite{Yoshida2023} employs a 4-slot quantum comb to deterministically and exactly reverse the unknown $U$, where the quantity $\tr[OU^{\dag}\ketbra{0}{0} U]$ could then be obtained by inserting a qubit in $\ket{0}$ and measuring the output state corresponding to observable $O$. 
In our framework, we shall find a virtual comb $\VComb$ such that $\VComb(\cU)(\ketbra{0}{0}) = U^{\dag}\ketbra{0}{0} U$ for any unitary operation $U$. We numerically found the virtual comb satisfies the above with an optimal sampling overhead equal to $1.5$. Using this comb, we did the following numerical simulation. 
\begin{enumerate}
    \item Sample $200$ random qubit unitary operations according to the Haar measure.
    \item Choose different numbers of measurement shots to make sure the two methods have the same number of queries for the unknown unitary. We apply numerical simulation to obtain the absolute errors between these two methods' empirical mean and the theoretical expectation value.
    \item Repeat the simulation $200$ times for each unitary and output the mean absolute errors as simulation errors of this unitary.
    \item For each number of queries, repeat Step 2 and Step 3 and compute the average simulation errors for $200$ randomly generated unitary operations. 
\end{enumerate}
In Fig.~\ref{fig:mea_simulation}, we plot the average simulation errors over $200$ randomly generated qubit unitary operations, with varying numbers of queries. The $x$-axis corresponds to different query numbers and the $y$-axis corresponds to the average simulation errors. We observe that when the number of queries is identical, the average simulation error is smaller by utilizing a 1-slot virtual comb compared with the previous method using a 4-slot quantum comb. To further evaluate the efficacy of virtual combs relative to quantum combs, we have searched the optimal quantum combs capable of executing this task with fixed input state $\ketbra{0}{0}$. By solving SDPs, we found that there is a $3$-slot quantum comb that can successfully perform the task, while there is no solution with a $2$-slot quantum comb. Note that a requisite number of queries to attain a certain measurement precision is governed by $n\cdot \nu^2(d,n)$ where $d$ is the dimension of the system and $n$ is the slot number. Our findings reveal that the 1-slot virtual comb yields $1 \cdot 1.5^2 = 2.25$, which is notably less than $3 \cdot 1^2 = 3$ required by a 3-slot quantum comb.

Therefore, although a $1$-slot virtual comb has a worse querying count scaling for general $d$ compared with methods using quantum combs, it practically could perform better, e.g., in a qubit case with a fixed input state. This enhancement in accuracy underscores the potential of virtual combs in reducing the querying complexity for estimating $\tr[OU^{\dag}\rho U]$ by querying $U$ when the input state or the observable information is known. This also gives insights into shadow tomography, suggesting that the manipulation of the process and the sampling procedure could be jointly optimized to streamline query complexity.

\section{The protocol for error cancellation}\label{appendix:protocol}
In this part, we give the detailed protocol for achieving high-precision error cancellation for depolarizing channels with unknown parameters within a given range.
In particular, if we aim to cancel the effect of an unknown depolarizing noise $\cD_{p}$ from a set of distinct noise parameters simultaneously as described in Theorem \textcolor{blue}{1}, we have the following protocol to obtain an estimation of $\tr[O\rho]$ instead of $\tr[O\cD_{p}(\rho)]$. We denote three types of operations for our protocol:
\begin{enumerate}
    \item[1)] $\cC_{\id}$: Do nothing to the received state.
    \item[2)] $\cC_{\cD}$: Replace the received state with a maximally mixed state.
    \item[3)] $\cC_i$: Apply the unknown process to the received state iteratively for $i$ times.
\end{enumerate}
Based on the above, the protocol is shown as Protocol~\ref{alg:protocol_depo}.
\begin{algorithm}\label{alg:protocol_depo}
    \renewcommand{\algorithmcfname}{Protocol}
    \SetKwInOut{Input}{Input}\SetKwInOut{Output}{Output}
    \caption{Error cancellation for getting $\tr[O\rho]$ through unknown depolarizing noise\label{algo:inv_depo}}
    \Input{Unknown depolarizing process $\cD_{p}$ with $p$ in a given set $\{p_1, ..., p_{n+1}\}$, unknown state after depolarizing noise $\cD_{p}(\rho)$, a given observable $O$, and given error parameters $\epsilon, \delta \in (0,1)$;
    }
    \Output{An estimator $\zeta$ of $\tr[O \rho]$}
    Divide the inversion virtual comb into different implementable processes: $\VComb = \eta_{id} \cdot \cC_{id} + \eta_{\cD}\cdot \cC_{\cD} + \sum_{1}^{n} \eta_{i} \cdot \cC_i$, with coefficients given by the solution of the linear equations in Eq.~\eqref{appendEq:Nslot_inverse} determined by $\{p_1, ..., p_{n+1}\}$ and $n$\;
    Calculate the sampling rounds for error cancellation: $S = \lceil 2\gamma^2\cdot \log(2/\delta)/\epsilon^2 \rceil$, where $\gamma = |\eta_{id}| + |\eta_{\cD}| + \sum_i |\eta_i|$\;
    \For{$s$ from $1$ to $S$}{
        Sample an operation $\cC_s$ from $\{\cC_{\id},\cC_{\cD},\cC_1,\cC_2,...,\cC_n\}$ with probability $|\eta_s|/\gamma$\;
        For the input state $\cD_{p}(\rho)$, which has passed through the unknown depolarizing process, apply the randomly sampled inversion process $\cC_s$, followed by a measurement of the corresponding observable $O$, with the result denoted as $\lambda_s$\;
        $X^{(s)} \leftarrow \gamma\cdot{\rm sgn}(\eta_{s}) \lambda_s$ \;
    }
    Get the estimator $\zeta$ of $\tr[O \rho]$: 
    $\zeta\leftarrow \frac{1}{S} \sum_{s=1}^S X^{(s)}$\;
    Return $\zeta$\;
\end{algorithm}

It is encouraging that the three types of operations considered in this protocol are quite simple. Notably, if the depolarizing noise is from a continuous region, i.e., $p\in[p_1,p_{n+1}]$, from the analysis presented in Theorem \textcolor{blue}{4}, we could notice that with an increasing number of slots in the virtual comb, Protocol~\ref{alg:protocol_depo} exhibits a worst-case error diminishing at least as $\cO(n^{-1})$. The precision gains hold valid for a broad class of depolarizing channels, ensuring robustness across a continuum of noise levels. It is foreseeable that, as this approach allows for noise within a certain range while maintaining high-precision performance, it will reduce the initial precision requirements for error tomography. It will also ensure the robustness in overall operation, i.e. the protocol remains effective even if the error changes a little bit during the experiment.

The optimal sampling overhead can be calculated as the following optimization problem.
\begin{subequations}\label{SDP:cost}
\begin{align}
\gamma_{\rm opt}= \min &\;\; 2\eta + 1,\\
 {\rm s.t.} & \;\; \VComb = (1+\eta) \cC_0 - \eta \cC_1, \\
            & \;\; \cC_0, \cC_1 \text{ are $n$-slot quantum combs,}\label{Eq:chan_comb_cons}\\
            & \;\; \VComb(\cN_i)\circ\cN_i = \id \,\,\, \forall i=0,1,...,n \,.      
\end{align}
\end{subequations}
Notice that $\eta$ and $\cC_0,\cC_1$ are all variables. We can absorb $\eta$ into $\cC_i$ and make all constraints linear by rewriting $\VComb = \cC_0 - \cC_1$ and modify the trace constraints in Eq.~\eqref{Eq:chan_comb_cons} by $\tr C_0 = (1+\eta)d^2$ and $\tr C_1 = \eta d^2$ where $C_0, C_1$ are the Choi operators of $\cC_0, \cC_1$, respectively. Thus, the optimization problem is an SDP.

\section{Connections to probabilistic error cancellation and shadow tomography}\label{appendix:connection}
We have addressed the reversibility of quantum processes using a novel framework of virtual combs which is a foundational problem in quantum mechanics. More importantly, the virtual comb framework highlights the power of combining higher-order quantum processes with classical computation, especially through sampling and post-processing. This approach has broader implications for many other quantum methodologies, including quantum error mitigation, shadow tomography, randomized quantum algorithms, etc. We provide further discussion in this section.

\paragraph{Connection to probabilistic error cancellation:} The concept of virtual combs extends the traditional approach of simulating unphysical Hermitian-preserving trace-nonincreasing maps, which serve as the foundational technique in \textit{probabilistic error cancellation}~\cite{Temme_2017}. Indeed, a probabilistic error cancellation protocol can be viewed as a particular instance of a virtual comb, wherein the pre-processing channel of the virtual comb is simply the identity operation~\cite{Temme_2017,Takagi2021,Jiang2020}. Furthermore, by adopting a higher-order operational framework, virtual combs unveil a collection of novel and insightful properties. For example, a uniquely determined virtual comb can reverse different channels, which is impossible for a traditional probabilistic error cancellation. Another thing we would like to mention is that through numerical calculation, we find that there is a 2-slot virtual comb that can simulate the inverse of a unitary $U$ by querying $\cU \circ \cD_{p}$, where $U$ is affected by some depolarizing noise and both $U \in SU(d)$ and $p\in \{p_1, p_2\}$ are unknown. 
It is an interesting result and may inspire studies about unitary inversion with noise.
As the virtual combs enable the most comprehensive manipulation of quantum noise, they offer a promising avenue for enhancing probabilistic error cancellation methodologies.

\paragraph{Connection to shadow tomography:} The practical simulation of virtual combs, which entails estimating expectation values, is closely associated with \textit{shadow tomography}~\cite{aaronson2018shadow}. In detail, the allowance for negative coefficients within a virtual comb may pave the way for a more universal protocol description that encompasses shadow tomography tasks, or other general operations processing. The virtual comb framework may also inspire other strategies to reduce the querying complexity as the fundamental insight of a virtual comb lies in its trade-off: a virtual comb using fewer slots sacrifices the quantum memory effect inherent to multi-slot quantum combs in favor of sampling and classical post-processing, to achieve an equivalent level of information processing capability. However, a more thorough analysis and investigation into the sampling costs and specific tasks are necessary for substantiating these claims, which we anticipate will be a fruitful direction for future research.

\paragraph{Connection to quantum algorithm design:} The virtual comb framework could inspire new designs for quantum algorithms, especially those that require inversions of operations, such as the quantum singular value transformation (QSVT)~\cite{Gilyen2018}. It helps us to understand better the capabilities of performing higher-order transformations.
The idea of using virtual combs for process inversion can also be related to the concept of classical shadows~\cite{Huang_2020}, which is a technique where classical data obtained from randomized measurements is used to predict many properties of a quantum state. Virtual combs could extend this idea to the domain of quantum operations, potentially leading to new randomized algorithms for estimating the effects of quantum channels.

\end{document}